\pdfoutput=1
\documentclass[
  a4paper,
  USenglish,
  cleveref,
  autoref,
  numberwithinsect
]{lipics-v2021}
\usepackage{booktabs}
\usepackage{cite}
\usepackage{algorithm}
\usepackage[noend]{algpseudocode}
\AtBeginDocument{\hypersetup{pdfsubject={}}}
\hideLIPIcs
\nolinenumbers
\algrenewcommand{\algorithmicrequire}{\textbf{Input:}}
\crefname{algorithm}{Algorithm}{Algorithms}
\usepackage{physics}
\newcommand{\eps}{\varepsilon}
\newcommand{\E}{\mathbb{E}}
\newcommand{\Prob}{\mathbb{P}}
\newcommand{\Ind}{\mathbf{1}}
\newcommand{\RE}{\textnormal{\textsc{RandEdge}}}
\newcommand{\Deg}{\textnormal{\textsc{Degree}}}
\DeclareMathOperator{\Var}{Var}
\DeclareMathOperator{\Cov}{Cov}
\DeclareMathOperator{\KL}{D_{KL}}
\DeclareMathOperator{\TV}{d_{TV}}

\title{The Matching Number Characterizes the Query Complexity of Edge Counting and Degree Moment Estimation}
\titlerunning{Matching Number for Edge and Degree Moment Estimation}

\author{Chenglin Fan}{Seoul National University, Republic of Korea}{fanchenglin@gmail.com}{}{}
\author{Wonseok Shin}{Standigm Inc., Republic of Korea}{wonseok.shin.dev@gmail.com}{https://orcid.org/0000-0002-5311-2511}{}
\authorrunning{C. Fan and W. Shin}
\Copyright{Chenglin Fan and Wonseok Shin}

\ccsdesc[500]{Theory of computation~Streaming, sublinear and near linear time algorithms}
\ccsdesc[300]{Theory of computation~Lower bounds and information complexity}
\ccsdesc[300]{Theory of computation~Graph algorithms analysis}
\keywords{Query complexity, Sublinear-time algorithms, Edge counting, Degree moments, Matching number}

\relatedversion{}

\begin{document}

\maketitle

\begin{abstract}
Consider an unknown graph $G = (V,E)$ with vertex degrees $d_v$, accessed only by sampling uniformly random edges and querying the degrees of their endpoints.
We study the query complexity of estimating the number of edges $m$ and the degree moments $M_s = \sum_{v \in V} (d_v)^s$ for fixed integers $s \ge 2$.
In this model, [Beretta and T\v{e}tek, TALG 2024] estimate $m$ within relative error $\varepsilon$ using $O(\log n/\varepsilon^2+\sqrt n/\varepsilon)$ expected queries without knowing the number of vertices $n$, and [Bhattacharya and Pradhan, CIAC 2025] estimate $M_s$ using $O(n^{1-1/s}/\varepsilon^2)$ queries when $n$ is given.
We show that the query complexity of both problems is governed by the matching number $\nu$ of $G$ (the maximum number of pairwise disjoint edges) rather than the number of vertices $n$, and that the dependence on $\varepsilon$ for degree moments can be polynomially improved.
Together with matching lower bounds, this characterizes the worst case expected query complexity of both problems over graphs of bounded matching number.
Our algorithms rest on a simple observation: the endpoints of a maximum matching cover every edge, so a random endpoint of a random edge falls on this set of $2\nu$ vertices with probability at least $1/2$.

For every $\varepsilon\in(0,1]$, we give \emph{advice-free} algorithms that estimate $m$ and $M_s$ within relative error $\varepsilon$ with probability at least $2/3$, without knowledge of $n$, $\nu$, or other graph parameters:
\begin{itemize}
  \item For $m$, our algorithm uses $O(\min\{\nu/\varepsilon,\ 1/\varepsilon^2+\sqrt\nu/\varepsilon\})$ expected queries.
    Compared with the bound of Beretta and T\v{e}tek, this replaces $n$ with $\nu$ and removes the $\log n$ factor from the $1/\varepsilon^{2}$ term.
  \item For $M_s$, our algorithm uses $O(\min\{\nu^{1-1/s}/\varepsilon^{2/s},\ \nu^{1-1/(2s-1)}/\varepsilon^{2/(2s-1)}\})$ expected queries.
    Compared with the bound of Bhattacharya and Pradhan, this replaces $n$ with $\nu$, removes the need to know $n$, and polynomially improves the dependence on $\varepsilon$.
\end{itemize}
Finally, our lower bounds show that for every $k \ge 2$ and sufficiently small $\eps$, these query complexities are optimal up to constant factors over graphs with matching number at most $k$. 

\end{abstract}

\newpage

\section{Introduction}
Estimating graph parameters such as edge counts and higher degree moments from local observations is a central challenge in sublinear time algorithms.
Consider an unknown finite, simple, undirected graph $G = (V, E)$ with $n := \abs{V}$ vertices, $m := \abs{E} > 0$ edges, and vertex degrees $d_v$.
We define the total degree sum as $W := \sum_{v \in V} d_v = 2m$, and for a fixed integer $s \ge 2$, the $s$-th degree moment as $M_s := \sum_{v \in V} (d_v)^s$.
The query complexity of estimating these parameters depends substantially on the graph access model, structural assumptions, and whether parameters such as the vertex count $n$ are provided as advice.

While a rich body of literature studies sublinear graph algorithms based on uniform vertex sampling, often augmented by degree or neighbor queries~\cite{Feige2006,GoldreichRon2008,EdenRonSeshadhri2019}, sampling vertices uniformly at random is often unavailable.
In many realistic settings, the graph is accessed only through observations of edges and their endpoints.
For instance, the rows of a binary relation in a relational database correspond to the edges of a graph~\cite{AssadiKapralovKhanna2019}: drawing a uniformly random row corresponds to sampling a random edge, while an index supporting frequency queries provides the degree of an observed endpoint~\cite{AliakbarpourEtAl2018}.
This motivates studying sublinear graph algorithms based solely on sampling random edges and querying endpoint degrees.
Under this model, Beretta and T\v{e}tek~\cite{BerettaTetek2024} established that an estimate of $m$ within relative error $\eps \in (0, 1]$ can be obtained in $O(\sqrt{n} / \eps)$ queries when the vertex count $n$ is known, and in $O(\log n / \eps^2 + \sqrt{n}/\eps)$ expected queries without advice.

Beyond edge counts, estimating higher degree moments $M_s$ for $s \ge 2$ is also a fundamental problem in network analysis and database query optimization, where $M_2$ corresponds to self-join sizes~\cite{AliakbarpourEtAl2018}.
For this problem, Bhattacharya and Pradhan~\cite{BhattacharyaPradhan2025} recently showed that $M_s$ can be estimated within relative error $\eps$ in $O(n^{1-1/s}/\eps^2)$ queries when $n$ is known.

Despite recent advances, existing algorithms or their analyses still rely on the vertex count $n$, which often serves as a poor measure of instance difficulty; indeed, Beretta and T\v{e}tek~\cite{BerettaTetek2024} explicitly ask whether the complexity can be parameterized by quantities beyond the input size.
For graphs, this raises a natural question: which structural property of the graph governs the query complexity of estimating $m$ and $M_s$ from random edges, and can an algorithm achieve optimal complexity without any advice?

We show that this complexity is governed by the matching number $\nu := \nu(G)$, the maximum number of pairwise disjoint edges in $G$.
The endpoints of a maximum matching form a vertex cover of size $2\nu$. Because every edge has at least one endpoint in this cover, a random endpoint of a random edge falls on these $2\nu$ vertices with probability at least $1/2$.
This concentration gives query complexities parameterized by $\nu$ instead of $n$, yielding significantly stronger guarantees on structured instances.
In addition, for degree moments $M_s$ with a fixed integer $s \ge 2$, our algorithm polynomially improves the dependence on $\eps$.
Moreover, all our algorithms operate without knowledge of $n$, $m$, $\nu$, or any parameter bounds.
For every integer $k \ge 2$ and sufficiently small $\eps$, our algorithms and lower bounds characterize the worst case expected query complexity over graphs with $\nu(G) \le k$ up to constant factors depending only on $s$.

\subsection{Problem Definition and Query Model}
\label{sec:model}
We work with random edge sampling and degree queries at sampled endpoints, which we refer to as \emph{endpoint access}.
This access model corresponds to a graph realization of weighted sampling (elements from a weighted set are sampled with probability proportional to their weights, and each sample reveals the element's identifier and weight)~\cite{BerettaTetek2024,BhattacharyaPradhan2025} and was also studied for star counting by Aliakbarpour et al.~\cite{AliakbarpourEtAl2018}.
We formalize it through two unit cost oracles:
\begin{itemize}
  \item $\RE()$: Returns an edge $\{u,v\} \in E$ chosen uniformly at random with replacement, independently of all previous queries.
  \item $\Deg(v)$: Returns the degree $d_v$, where $v$ must be an endpoint returned by an earlier $\RE()$ call.
\end{itemize}
Returned vertices are revealed as opaque identifiers; an algorithm can test whether two endpoints coincide, but cannot sample a uniform vertex or enumerate $V$.

A \emph{fair endpoint sample} is an endpoint $X$ of the edge returned by one $\RE()$ call, chosen by a fair coin flip, and thus follows the degree proportional law
\[
  \Prob[X = v] = p_v := \frac{d_v}{2m} = \frac{d_v}{W} \qquad (v \in V).
\]
Since isolated vertices have $p_v = 0$, they are never sampled under endpoint access and contribute nothing to $m$, $M_s$, or $\nu$. We allow input graphs to have isolated vertices; our algorithms never encounter them, so isolated vertices do not affect the query complexity.
Because drawing a fair endpoint sample and querying its degree requires two queries, our access model simulates the weighted sampling access used in prior work with constant factor overhead.

Given an accuracy parameter $\eps \in (0, 1]$, we study the problem of estimating the number of edges $m$ and the $s$-th degree moment $M_s$.
For a target quantity $\theta$, an estimate $\hat{\theta}$ is a \emph{$(1 \pm \eps)$-estimate} of $\theta$ if $(1-\eps)\theta \le \hat{\theta} \le (1+\eps)\theta$.
An algorithm computes a $(1\pm\eps)$-estimate if it returns such an estimate with probability at least $2/3$,\footnote{The success probability can be boosted to $1-\delta$ by returning the median of the outputs of $O(\log(1/\delta))$ independent runs.} and is \emph{advice-free} if it is given only $\eps$ (and the integer $s \ge 2$ when estimating $M_s$) besides the oracles; in particular, it receives no prior knowledge of $n$, $m$, $M_s$, $\nu$, or bounds on any of them.

\subsection{Our Results}
We design advice-free algorithms whose query complexity is parameterized by the matching number $\nu$ and establish matching worst case lower bounds for estimating edge counts and degree moments under endpoint access.
\Cref{tab:endpoint-comparison} summarizes our main results.

\begin{table}[t]
\caption{Expected query complexity of obtaining $(1\pm\eps)$-estimates with success probability at least $2/3$ under endpoint access, for a fixed integer $s\ge2$. Lower bounds are worst case and assume sufficiently small $\varepsilon$. Here $n$ is the vertex count and $\nu$ is the matching number.}
\label{tab:endpoint-comparison}
\centering
\renewcommand{\arraystretch}{1.0}
\resizebox{\linewidth}{!}{%
\begin{tabular}{@{}l @{\hspace{3pt}} c @{\hspace{3pt}} c @{\hspace{6pt}} c @{\hspace{6pt}} c@{}}
\toprule
\textbf{Result} & \textbf{Target} & \textbf{Advice} & \textbf{Query complexity} & \textbf{Lower bound} \\
\midrule
Beretta and T\v{e}tek~\cite{BerettaTetek2024} & $m$ & $n$
  & $\displaystyle O\!\left(\sqrt{n}/\eps\right)$
  & $\displaystyle \Omega(\sqrt{n} / \eps)$\textsuperscript{$\dagger$} \\[4pt]
Beretta and T\v{e}tek~\cite{BerettaTetek2024} & $m$ & ---
  & $\displaystyle O\!\left(\log n / \eps^2 + \sqrt{n} / \eps\right)$
  & $\displaystyle \Omega\!\left(\sqrt{n} / \eps \right)$\textsuperscript{$\dagger$} \\[4pt]
\textbf{This work} & $m$ & ---
  & \begin{tabular}[c]{@{}c@{}}$\displaystyle O\!\left(\min\!\left\{\frac{\nu}{\eps},\,\frac1{\eps^2}+\frac{\sqrt\nu}{\eps}\right\}\right)$\\[2pt](\Cref{thm:intro-edge})\end{tabular}
  & \begin{tabular}[c]{@{}c@{}}$\displaystyle \Omega\!\left(\min\!\left\{\frac{k}{\eps},\,\frac1{\eps^2}+\frac{\sqrt k}{\eps}\right\}\right)\textsuperscript{$\ddagger$}$\\[2pt](\Cref{thm:intro-edge-lb})\end{tabular} \\
\midrule[1.2pt]
Bhattacharya and Pradhan~\cite{BhattacharyaPradhan2025} & $M_s$ & $n$
  & $\displaystyle O\!\left(n^{1-1/s}/\eps^2\right)$\textsuperscript{$\ast$}
  & ---\textsuperscript{$\ast\ast$} \\[4pt]
\textbf{This work} & $M_s$ & ---
  & \begin{tabular}[c]{@{}c@{}}$\displaystyle O\!\left(\min\!\left\{\frac{\nu^{1-1/s}}{\eps^{2/s}},\,\frac{\nu^{1-1/(2s-1)}}{\eps^{2/(2s-1)}}\right\}\right)$\\[2pt](\Cref{thm:intro-moment})\end{tabular}
  & \begin{tabular}[c]{@{}c@{}}$\displaystyle \Omega\!\left(\min\!\left\{\frac{k^{1-1/s}}{\eps^{2/s}},\,\frac{k^{1-1/(2s-1)}}{\eps^{2/(2s-1)}}\right\}\right)\textsuperscript{$\ddagger$}$\\[2pt](\Cref{thm:intro-moment-lb})\end{tabular} \\
\bottomrule
\end{tabular}}
\par\smallskip
\begin{minipage}{\linewidth}
\footnotesize
The prior algorithms transfer from weighted sampling to endpoint access; see \Cref{sec:proportional-transfer}.\\
\textsuperscript{$\dagger$}Holds for $\eps\ge2/\sqrt n$.
\quad
\textsuperscript{$\ddagger$}Over graphs with $\nu(G)\le k$, for integers $k\ge2$.\\
\textsuperscript{$\ast$}The stated instance sensitive bound (not shown) is not achieved by the proposed algorithm; see \Cref{sec:moment-instance-sensitive}.\\
\textsuperscript{$\ast\ast$}The lower bound does not hold uniformly over the stated parameter range; see \Cref{sec:bp25-lower-bound}.
\end{minipage}
\end{table}

\subparagraph*{Estimating the number of edges.}
Our first main result is an advice-free algorithm for estimating the edge count $m$.

\begin{theorem}
\label{thm:intro-edge}
Under the endpoint access model, for every $0 < \eps \le 1$, there exists an advice-free algorithm that terminates almost surely, returns a $(1\pm\eps)$-estimate of $m$ with probability at least $2/3$, and has expected query complexity
$
  O\!\left(\min\!\left\{
  \nu / \eps,
  1/\eps^2+\sqrt{\nu}/ \eps
  \right\}\right).
$
\end{theorem}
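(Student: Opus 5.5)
The plan is to build two advice-free estimators, one achieving $O(1/\eps^2+\sqrt\nu/\eps)$ and one achieving $O(\nu/\eps)$, and then run them interleaved, one query of each in turn, outputting the answer of whichever halts first. Interleaving costs only a factor $2$ in expected queries, so the total is $O(\min\{\cdot,\cdot\})$. The success probability is handled by running each component at confidence $5/6$ and taking a union bound. The structural input for both estimators is the cover $C$ formed by the endpoints of a maximum matching: $|C|\le 2\nu$ and $\sum_{v\in C}p_v\ge 1/2$. By Cauchy--Schwarz this gives the collision probability
\[
  q:=\sum_v p_v^2=\frac{M_2}{W^2}\ge\frac{1}{8\nu}.
\]
We will also use $M_2\ge m^2/(2\nu)$.

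\textbf{First estimator.} It uses the identity $E[d_X]=M_2/W = W\,q$, so that $W=E[d_X]/\Pr[X=Y]$ for independent fair endpoint samples $X,Y$. Draw $t$ samples $X_1,\dots,X_t$ with their degrees and consider the U-statistic
\[
  T_W=\sum_{i<j}\bigl(W\,\Ind\{X_i=X_j\}-d_{X_i}\bigr).
\]
Its mean is $0$ exactly at the true $W$, and its Hájek projection $h(x)=Wp_x-d_x$ vanishes identically. Hence the statistic is degenerate, and its variance is only of order $t^2\bigl(W^2q+M_3/W\bigr)$, with no $t^3$ term. The estimator is $\hat W=\sum_{i<j}d_{X_i}\big/\#\{i<j:X_i=X_j\}$. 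Linearizing the ratio, its relative variance is
\[
  O\!\Bigl(\frac{1}{t^2q}+\frac{M_3W}{t^2M_2^2}\Bigr).
\]
The first term is at most $8\nu/t^2$. For the second, $M_3\le d_{\max}M_2$ and $M_2\ge m^2/(2\nu)$ give $M_3W/M_2^2\le 4\nu d_{\max}/m\le 4\nu$. So $t=\Theta(\sqrt\nu/\eps)$ samples suffice, plus $\Theta(1/\eps^2)$ samples to estimate the numerator $E[d_X]$, whose relative variance is $E[d_X^2]/E[d_X]^2$; I would bound this by a constant via the cover, or else absorb it into the $1/\eps^2$ term.

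To remove any need for $\nu$, I would use a stopping rule: keep sampling until the number of colliding pairs reaches $c/\eps^2$ and at least $c/\eps^2$ samples have been drawn. The collision count grows like $t^2q$, so the stopping time is $\Theta(1/\eps^2+1/(\eps\sqrt q))=O(1/\eps^2+\sqrt\nu/\eps)$ in expectation. A standard doubling or epoch argument then controls both the expectation of the stopping time and the bias introduced by data-dependent stopping.

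\textbf{Second estimator, for $\nu<1/\eps$.} Here $1/\eps^2$ is too expensive. Instead I would exploit that with $t=\Theta(\nu/\eps)$ samples, every vertex with $p_v\ge\eps/(c\nu)$ is seen many times. The vertices of $C$ not seen this way carry total mass at most $|C|\cdot\eps/(c\nu)=O(\eps)$, and the full list of distinct seen vertices together with their exact degrees is available. The degree sum over the seen heavy set $S$ then satisfies $m(1-O(\eps))\le\sum_{v\in S}d_v\le 2m$. The remaining factor, the fraction of edges with exactly one endpoint in $S$ versus two, can be estimated from the same samples: for a sampled edge, check whether both endpoints lie in $S$. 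This needs only $O(1/\eps)$ extra samples per unit of relative variance, because it is a counting correction over a set already pinned down exactly. Again I would use a stopping rule, continuing until the newly discovered mass per batch falls below $\eps$, so that $\nu$ is never needed.

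\textbf{Main obstacle.} I expect the hardest part to be this second estimator. Turning the exactly known degrees on $S$ into a $(1\pm\eps)$ estimate of $m$ with only $O(\nu/\eps)$ queries, rather than $\nu/\eps^2$, requires showing that the correction term for edges inside $S$ has relative variance $O(\eps)$, not $O(1)$. It also requires a stopping rule for discovering $S$ whose expected cost is $O(\nu/\eps)$ even when many vertices have probability just below the threshold. If the direct approach fails, I would fall back on a Horvitz--Thompson estimator with the plug-in probabilities $\hat p_v=(\text{count of }v)/t$ on $S$, and compare its error to the exact identity $\sum_v p_v=1$.
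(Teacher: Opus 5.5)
Your overall architecture matches the paper's proof of this theorem: two advice-free estimators are interleaved query by query, the first to halt provides the output, and correctness follows from a union bound. However, neither component estimator works as you describe it.

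\textbf{First estimator.} The degeneracy claim is false. Your kernel $W\Ind\{x=y\}-d_x$ is not symmetric. Averaging over the second argument gives $Wp_x-d_x=0$, but averaging over the first gives $d_y-M_2/W\neq0$, and the symmetrized kernel has H\'ajek projection $(d_x-M_2/W)/2$. Concretely, write $\mu:=M_2/W$ and $\delta_i:=d_{X_i}/\mu-1$. The linear part of the relative error of the collision frequency is $(2/t)\sum_i\delta_i$, while that of the degree average is $(1/t)\sum_i\delta_i$. These do not cancel, so the ratio has first-order relative error $-(1/t)\sum_i\delta_i$, and your $\sum_{i<j}d_{X_i}$ gives a reweighted version of the same order. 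Its relative variance is therefore of order $(M_3W/M_2^2-1)/t$, not $O(1/t^2)$. Moreover, $M_3W/M_2^2=\E_p[d^2]/\E_p[d]^2$ is not bounded by a constant via the cover. Take a star $K_{1,D}$ plus $D^2/2$ isolated edges: then $\nu\approx D^2/2$, $\E_p[d]\approx2$ and $\E_p[d^2]\approx D$, so the ratio is $\Theta(\sqrt\nu)$. The fluctuation comes from the random number of hits on the center. On this graph, both your ratio estimator and your separate estimate of $\E[d_X]$ need $\Theta(\sqrt\nu/\eps^2)$ samples, which exceeds $1/\eps^2+\sqrt\nu/\eps$.

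The paper never averages degrees. It computes $D(S)$ exactly for the set $S$ of distinct vertices in $q$ samples, and estimates only $1/p_S$ by waiting for $R=\lceil64\eps^{-2}\rceil$ hits (a negative binomial count with relative variance at most $1/R$). A test batch checking $H\ge R$, together with Markov's inequality, gives $\E[T\mid S]\le q$. The bound $\E[p_S]\ge q/(4(2\nu+q))$ makes continuation past $q\approx64(R+\sqrt{\nu R})$ geometrically unlikely.

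\textbf{Second estimator.} You name the obstacle but do not resolve it. Estimating the internal fraction $h/m$ as a proportion costs $\Theta(1/\eps^2)$ when that fraction is constant, which exceeds $\nu/\eps$ exactly when $\nu<1/\eps$.

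The missing idea is to estimate $h$ itself by a birthday-paradox collision count among the internal samples. If the cover has fewer than $2b$ vertices, then $h<2b^2$, so $O(b/\eta)$ internal samples give relative error $\eta$. Drawing $\lceil Kb/\eps\rceil$ edges, returning $0$ when fewer than $t\eps/4$ of them are internal, and otherwise using $\eta=\eps m/(2h)$, yields additive error $\eps m/2$.

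This requires a cover of size $O(\nu)$. Your set of all seen vertices does not provide one, since it can have $\Theta(\nu/\eps)$ vertices. Your stopping rule also fails if ``discovered mass'' means vertex mass: on a star with many leaves, the unseen leaf mass stays near $1/2$ indefinitely.

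The paper instead greedily builds a matching $M_b$ from $\lceil16b/\eps\rceil$ random edges at doubling scales $b$, and stops when $|M_b|<b$. A binomial coupling shows that $r(V(M_b))\le\eps m/2$ except with probability less than $1/12$. Since $|M_b|\le\nu$, the loop stops with $b\le2\nu$, which gives $|C|<4\nu$ and $O(\nu/\eps)$ queries on every execution.
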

Compared with the $O(\log n / \eps^2 + \sqrt{n}/\eps)$ expected query complexity of Beretta and T\v{e}tek~\cite{BerettaTetek2024} without $n$, our bound replaces $n$ with the matching number $\nu \le n/2$ and removes the $\log n$ factor.
We note that while Beretta and T\v{e}tek achieve an $O(\sqrt n/\varepsilon)$ guarantee when $n$ is known, our bound instead depends on the matching number and requires no knowledge of $n$.

We show that this query complexity is optimal by establishing a matching worst case lower bound over graphs of bounded matching number.
For an integer $k \ge 2$, let $\mathcal{G}_{\le k}$ denote the family of finite simple graphs $G$ with $|E(G)| > 0$ and $\nu(G) \le k$.\footnote{
We restrict to $k \ge 2$ since if $\nu = 1$ is known, deleting isolated vertices leaves a star or a triangle, which can be identified with $O(1)$ queries.
Our algorithms remain valid for such inputs without advice.
}
\begin{theorem}
\label{thm:intro-edge-lb}
Under the endpoint access model, there exists a constant $\eps_0 > 0$ such that for every integer $k \ge 2$ and every $0 < \eps \le \eps_0$, any algorithm, even if given $k$ and $n$, that returns a $(1\pm\eps)$-estimate of $m$ with probability at least $2/3$ on every graph in $\mathcal{G}_{\le k}$ has worst case expected query complexity $\Omega(\min\{k / \eps, 1 / \eps^2 + \sqrt{k} / \eps\})$.
\end{theorem}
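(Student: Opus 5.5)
The bound $\min\{k/\eps,\ 1/\eps^2+\sqrt k/\eps\}$ has three regimes. It is $k/\eps$ for $k\le 1/\eps$, $1/\eps^2$ for $1/\eps\le k\le 1/\eps^2$, and $\sqrt k/\eps$ for $k\ge 1/\eps^2$. Since $\sqrt k/\eps\le \min\{k/\eps,1/\eps^2\}$ whenever $k\le 1/\eps^2$, it suffices to prove two separate statements.
\begin{itemize}
\item[(i)] $\Omega(\sqrt k/\eps)$ whenever $k\ge C/\eps^2$ for a suitable constant $C$.
\item[(ii)] $\Omega(\min\{k/\eps,1/\eps^2\})$ for all $k\ge2$.
\end{itemize}
In both cases I would use Le Cam's two-point method, or Yao's principle with two distributions over graphs in $\mathcal G_{\le k}$. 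The two families have edge counts differing by a factor $1+3\eps$. The task is to bound the total variation distance between the transcripts of $q$ queries. Every instance is a disjoint union of stars, so $\nu$ equals the number of stars and the constraint $\nu\le k$ is easy to check. Vertex identifiers are randomly relabelled, so that only collision patterns and degrees carry information. Giving the algorithm $k$ and $n$ is harmless, because I pad both instances with isolated vertices to a common $n$. To turn the bound on worst case complexity into one on expected complexity, I would use Markov's inequality: truncating at a constant multiple of the expectation costs only a constant in success probability.

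For (i) I would reuse the birthday-type construction behind the Beretta--T\v{e}tek $\Omega(\sqrt n/\eps)$ bound, taking $n=\Theta(k)$ so that automatically $\nu\le n/2\le k$. Their condition $\eps\ge 2/\sqrt n$ is exactly our regime $k\gtrsim1/\eps^2$. Concretely, one instance has $t$ stars of equal degree $D$ and the other has $t(1+3\eps)$ such stars, with $t=\Theta(k)$. The degree answers are identical in the two instances, so the only signal is the collision count among sampled centers. That count has mean about $q^2/t$ and standard deviation about $q/\sqrt t$, while the means differ by about $\eps q^2/t$. A second-moment (chi-square) computation on the collision statistics, or a Poissonization argument, would show that $q=o(\sqrt t/\eps)$ samples give total variation $o(1)$.

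For (ii), set $t=\min\{k,\lfloor 1/\eps\rfloor\}$ and build an $\eps$-biased-coin reduction on top of $t$ stars. Each star has degree $D$ or $D'$, with $D$ and $D'$ unknown and drawn at random from a wide range. I would then hide part of the weight in a reservoir of low-degree stars, whose total sampling probability is $\Theta(\eps)$ in one instance and $\Theta(\eps)(1+c)$ in the other. Only samples landing in the reservoir are informative, and each such sample carries only $O(1/t)$ bits of distinguishing information. This should give the $\Omega(t/\eps)$ bound, which equals $\Omega(1/\eps^2)$ once $t=1/\eps$. The target calculation is a KL-divergence bound of $O(\eps^2/t)\cdot q$ per full transcript, established via the chain rule over queries.

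The main obstacle is (ii) in the regime $k\le1/\eps$. Degree queries are very informative, so any small-degree component whose multiplicity differs between the two instances is exposed after about $1/\eps$ samples rather than $k/\eps$. The difficulty is therefore to design the reservoir so that the marginal degree distribution of a fair endpoint sample is identical in the two instances, and the difference appears only in multi-sample collision structure among a pool of $\Theta(t)$ distinct components. My first attempt would be the following. Both instances contain $t$ centers of degree $D$. In one instance the reservoir mass is spread over $t$ stars, and in the other over $t(1+\Theta(1))$ stars of proportionally smaller degree, with $D$ randomized so that these degree values carry no information. Distinguishing the two then requires $\Omega(\sqrt t)$ reservoir hits, which would already give $\Omega(\sqrt t/\eps)$. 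If that turns out too weak to reach $t/\eps$, I would instead match the first $\Theta(t)$ collision moments of the two reservoirs, as in moment-matching support-size lower bounds.
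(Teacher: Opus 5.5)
Your decomposition into (i) and (ii), the Markov truncation, and part (i) are fine. Part (i) is essentially the paper's random matching subset construction: stars of a common degree behave exactly like the isolated edges used there.

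The gap is in (ii), and it is structural. If both hypotheses are supported on disjoint unions of at most $k$ stars, no two-point argument can give more than $O(\sqrt{k}/\eps)$. That falls short of $\min\{k/\eps,1/\eps^2\}$ by a factor $\min\{\sqrt k,1/(\eps\sqrt k)\}$; for example, at $k=1/\eps$ it gives $\eps^{-3/2}$ where $\eps^{-2}$ is needed. The reason is that in a star union each sampled edge reveals two things. It reveals its star, since two sampled edges lie in the same star iff they share a vertex. It also reveals that star's size $w$, namely the larger endpoint degree. So an algorithm that knows $k$, as the theorem permits, can run the Beretta--T\v{e}tek collision estimator with stars as the weighted elements. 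For sampled stars $x_1,\dots,x_q$, set $Y:=\binom{q}{2}^{-1}\sum_{i<j}\Ind[x_i=x_j]/w(x_i)$. Then $\E[Y]=\sum_a (w_a/m)^2/w_a=1/m$. Pairs sharing an index are uncorrelated, because $\sum_a (w_a/m)^3/w_a^2=1/m^2$. Hence $\Var[Y]\le k/(m^2\binom{q}{2})$, and $1/Y$ is a $(1\pm\eps)$-estimate with probability $2/3$ after $O(\sqrt k/\eps)$ queries on every such graph. In particular it separates your two hypotheses. This also rules out your fallbacks. Randomizing $D$ is useless, because each star's weight is read off directly. Matching higher collision moments of the reservoir cannot help, because the estimator uses only pairwise collisions. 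Separately, your stated KL target $O(\eps^2 q/t)$ would give $\Omega(t/\eps^2)$, which exceeds the paper's upper bound. The consistent target is $O(\eps q/t)$.

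The missing idea is that the independently randomized units must number $\Theta(k^2)$, not at most $k$ components. These units can be edges among the $k$ vertices of a vertex cover, provided the cover vertices' degrees stay fixed so that degree answers carry no information.

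The paper's construction works as follows.
\begin{itemize}
\item It takes $k$ hubs, each with $L=\lceil 1/\eps\rceil$ private leaves, and a $k/4$-regular bipartite graph $B$ on the hubs.
\item Each edge $e=\{u,v\}$ of $B$ is independently realized either as $\{u,v\}$ or as the two edges $\{u,x_e\},\{v,y_e\}$.
\item The first option has probability $1/2$ under $\mathsf H_0$ and $1/2+\beta$ under $\mathsf H_1$, with $\beta=\Theta(\eps(L+k)/k)$.
\end{itemize}
This construction has the following properties.
\begin{itemize}
\item Every graph has $\nu=k$, and hub degrees equal $L+k/4$ in every configuration.
\item $\E[m]$ shifts by $\Theta(\eps m)$, while $m$ fluctuates by only $O(k)$.
\item A query reaches a gadget with probability $O(k/(L+k))$, and learns $O(\beta^2)$ nats about a previously unseen gadget.
\end{itemize}
Edge probabilities depend on $m$, so the paper decouples them using rejection sampling from a fixed candidate universe. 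The KL divergence after $q$ queries is then $O(q\beta^2k/(L+k))$. This yields $\Omega((L+k)/(\beta^2k))=\Omega(\min\{k/\eps,1/\eps^2\})$. Bounded $k$ is handled separately: $K_{1,D}$ with $D=\lfloor 1/(4\eps)\rfloor$ versus the same star plus one edge joining two leaves.
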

Together with \Cref{thm:intro-edge}, this establishes the optimal complexity over $\mathcal{G}_{\le k}$ up to constant factors: $\Theta(k/\eps)$ for $k\le\eps^{-1}$, $\Theta(\eps^{-2})$ for $\eps^{-1}\le k\le\eps^{-2}$, and $\Theta(\sqrt{k}/\eps)$ for $k\ge\eps^{-2}$.

\subparagraph*{Estimating degree moments.}
Our second main result is an advice-free algorithm for estimating the degree moments $M_s$.

\begin{theorem}
\label{thm:intro-moment}
Fix an integer $s \ge 2$. Under the endpoint access model, for every $0 < \eps \le 1$, there exists an advice-free algorithm that terminates almost surely, returns a $(1\pm\eps)$-estimate of $M_s$ with probability at least $2/3$, and has expected query complexity
\[
  O\!\left(\min\!\left\{
  \frac{\nu^{1-1/s}}{\eps^{2/s}},
  \frac{\nu^{1-1/(2s-1)}}{\eps^{2/(2s-1)}}
  \right\}\right),
\]
where the hidden constant depends only on $s$.
\end{theorem}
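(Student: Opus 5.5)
The plan is to avoid estimating $W$ separately, because \Cref{thm:intro-edge} can cost more than the target bound (for $s=2$ and small $\nu$, $\nu/\eps\gg\sqrt\nu/\eps$). Instead I would combine degree-weighted collision statistics whose expectations are $M_s/W^j$ for different $j$, and take a ratio in which $W$ cancels. Draw $t$ fair endpoint samples $X_1,\dots,X_t$ and query their degrees. For $1\le j\le s$ let
\[
  Z_j:=\binom{t}{j}^{-1}\sum_{\substack{S\subseteq[t],\ |S|=j\\ X_i \text{ equal for } i\in S}} d_{X_S}^{\,s-j}.
\]
Then $\E[Z_j]=\sum_v p_v^j d_v^{s-j}=M_s/W^j$. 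Hence
\[
  M_s=\frac{(M_s/W^{s-1})^s}{(M_s/W^s)^{s-1}}=\Bigl(\frac{(M_s/W)^s}{M_s/W^s}\Bigr)^{1/(s-1)}.
\]
I would build two estimators: $\hat M^{(1)}:=Z_{s-1}^s/Z_s^{s-1}$ and $\hat M^{(2)}:=(Z_1^s/Z_s)^{1/(s-1)}$. Each is a $(1\pm\eps)$-estimate once every $Z_j$ involved is a $(1\pm\eps/(4s^2))$-estimate of its mean. The two branches of the $\min$ should come from these two estimators, or from mixing them with a heavy/light split.

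The core concentration step uses the paper's observation. Let $C$ be the $2\nu$ endpoints of a maximum matching; it carries mass $\sum_{v\in C}p_v\ge 1/2$. By Hölder, $\sum_v p_v^s\ge (1/2)^s(2\nu)^{1-s}$. Since $Z_s$ is a U-statistic counting $s$-wise collisions, its variance is controlled by the standard bound
\[
  \Var(Z_s)\lesssim \sum_{r=1}^{s}\frac{\bigl(\sum_v p_v^{2s-r}\bigr)}{t^r},
\]
together with $\sum p^{2s-r}\le(\sum p^s)^{(2s-r)/s}$. Chebyshev then shows that $t=\Theta(\nu^{1-1/s}/\eps^{2/s})$ suffices, since the expected collision count $t^s\sum p_v^s\gtrsim t^s/\nu^{s-1}$ must be $\gtrsim 1/\eps^2$. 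The same bookkeeping applies to $Z_{s-1}$ with weights $d_v$. Here I would normalize by $W$ and use $d_v/W=p_v$, so every term becomes a power sum of $p$; this gives the first bound. For the second branch, the first-moment statistic $Z_1$ (sample mean of $d^{s-1}$) has relative variance $\le t^{-1}\sum p_v^{2s-1}/(\sum p_v^s)^2\cdot W^{0}$ after normalization. Balancing this against the $s$-collision term, possibly by splitting vertices at a threshold $p_v\gtrless\tau$ and estimating the heavy part by $Z_1$ and the light part by collisions, should give $t=\Theta(\nu^{1-1/(2s-1)}/\eps^{2/(2s-1)})$ after optimizing $\tau$.

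For advice-freeness I would use a doubling scheme: for $t=2^i$, draw fresh samples and stop once the observed number of $s$-wise collisions exceeds $c/\eps^2$, which is a certificate of relative accuracy. The expected number of collisions grows like $t^s$, so the stopping time concentrates around the required $t$, the expected total cost is geometric, and termination is almost sure. Running both branches in parallel and stopping when either certifies gives the $\min$. The doubling also needs a union bound over rounds, which I would handle by giving round $i$ failure probability $O(2^{-i})$ via medians of a constant number of repetitions.

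The main obstacle is the second branch. Bounding the variance of $Z_1$ requires controlling $\sum p_v^{2s-1}$ relative to $(\sum p_v^s)^2$, and this ratio is not bounded by a power of $\nu$ alone. It is exactly where a few very heavy vertices hurt, so the heavy/light split and the choice of threshold (in terms of observed quantities, to stay advice-free) is the step I expect to require the most care.
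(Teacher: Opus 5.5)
There is a genuine gap: the concentration step that both branches rest on is false. Your U-statistic variance bound, combined with your H\"older step, gives relative variance $O\bigl(\sum_{r=1}^s (A/t)^r\bigr)$ with $A:=(\sum_v p_v^s)^{-1/s}=W/M_s^{1/s}=\Lambda_s$. For $t\ge A$ this is dominated by the $r=1$ term $A/t$, not by $(A/t)^s$. Asking for $\Omega(\eps^{-2})$ expected $s$-wise collisions controls only the $r=s$ term. So Chebyshev needs $t=\Omega(A/\eps^2)$, which is exactly the sample-averaging rate the theorem is meant to beat.

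This is tight, not an artifact of the bound. On a star $K_{1,D}$ ($\nu=1$), $Z_s\approx\binom{N_c}{s}/\binom{t}{s}$, where $N_c\sim\operatorname{Bin}(t,1/2)$ counts center hits. So $Z_s$ has relative fluctuation about $s/\sqrt t$, and a $(1\pm\eps/(4s^2))$-estimate costs $t=\Omega(\eps^{-2})$ against the target $O(\eps^{-2/(2s-1)})$. The same example defeats your stopping rule: many collisions on one heavy vertex are strongly correlated and certify nothing. It also shows that the trouble with $Z_1$ is not its dependence on $\nu$. Your inequality $\sum_v p_v^{2s-1}\le(\sum_v p_v^s)^{(2s-1)/s}$ gives $WM_{2s-1}/M_s^2\le A\le2(2\nu)^{1-1/s}$. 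The real problem is the $1/t$ decay: its relative variance $(WM_{2s-1}/M_s^2-1)/t$ is about $1/t$ on the star.

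The ratios do better than your sufficient condition suggests, but not enough. They are invariant under rescaling the empirical frequencies, and by H\"older the plug-in ratios are maximized (with value $M_s$) exactly at frequencies proportional to $d$. Hence first-order fluctuations cancel, but second-order ones remain. For $s=2$ both estimators equal $Z_1^2/Z_2$. Take two disjoint stars of equal large degree $D$, so $\nu=2$, and let $S,\Delta$ be the sum and difference of the two centers' hit counts. Up to a factor $1+O(1/D)$, $Z_1^2/Z_2=M_2(1-1/t)/(1+\Delta^2/S^2-2/S)$. Since $S\approx t/2$ while $\Delta^2/S^2\approx2\chi^2_1/t$, the relative error is random of order $1/t$, not a deterministic bias. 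Thus your estimators need $t=\Omega(1/\eps)$, whereas the target here is $\Theta(\eps^{-2/3})$. The underlying defect is that they read each heavy vertex's contribution off its multiplicity.

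The missing idea, which the paper uses, is to count each distinct observed vertex once, with its exactly queried degree, weighted by its inverse inclusion probability under a Poissonized sample size (Horvitz--Thompson). Heavy vertices are then seen with probability near one and contribute almost deterministically. The relative variance $M_s^{-2}\sum_v d_v^{2s}/(e^{qd_v/W}-1)\le\min\{s!(\Lambda_s/q)^s,(2s-1)!\,\Lambda_s^{2s}/q^{2s-1}\}$ decays in high powers of $q$, which gives both branches at once via $\Lambda_s\le2(2\nu)^{1-1/s}$.

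The paper handles the $W$-dependence of the weights not by cancellation but by a bias bound. A $(1\pm\eta)$ error in $W$ shifts the mean by only $O(\eta\min\{1,\eps\sqrt q\})M_s$ once $R_s(q/2)\le\eps^2$. So $W$ is needed only to accuracy $\kappa\max\{\eps,q^{-1/2}\}$, which fits the budget. Finally, $q$ is found by a geometric search over candidate values of $M_s$ using a constant-factor estimate of $W$, not by a collision-count certificate.
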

Bhattacharya and Pradhan obtain an $O(n^{1-1/s}/\eps^2)$ query complexity when $n$ is known. Our bound replaces $n$ with $\nu$ and polynomially improves the dependence on $\eps$, while requiring no graph parameters as advice.
Finally, we show that this query complexity is optimal in the worst case over all graphs of bounded matching number.
\begin{theorem}
\label{thm:intro-moment-lb}
Fix an integer $s \ge 2$. Under the endpoint access model, there exists a constant $\eps_0(s) > 0$ such that for every integer $k \ge 2$ and every $0 < \eps \le \eps_0(s)$, any algorithm, even if given $k, n$, and $m$, that returns a $(1\pm\eps)$-estimate of $M_s$ with probability at least $2/3$ on every graph in $\mathcal{G}_{\le k}$ has worst case expected query complexity
\[
  \Omega\!\left(\min\!\left\{
  \frac{k^{1-1/s}}{\eps^{2/s}},
  \frac{k^{1-1/(2s-1)}}{\eps^{2/(2s-1)}}
  \right\}\right),
\]
where the hidden constant depends only on $s$.
\end{theorem}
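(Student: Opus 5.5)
We prove the lower bound by a two-point argument. For each regime we construct two distributions over graphs in $\mathcal{G}_{\le k}$, with the same $n$ and $m$, whose $M_s$ values differ by a factor of more than $(1+\eps)^2$. We then show that an algorithm making $q$ queries cannot tell them apart with constant advantage unless $q$ exceeds the claimed bound. A worst case expected query bound reduces to a fixed query budget by Markov's inequality: truncate at $10\times$ the expectation and lose at most $1/10$ in success probability. It therefore suffices to bound the total variation distance between the query transcripts under the two hypotheses. We use random relabeling of identifiers, so the transcript is determined by the multiset pattern of which sampled endpoints coincide, together with their degrees.

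\textbf{Construction.} Both graphs consist of a ``bulk'' of $k-1$ disjoint stars, or a star forest, carrying almost all edges, plus one ``heavy-feature'' component. The feature either is present (one vertex of degree $D$) or its $D$ edges are spread among bulk leaves so that $m$ and the matching number stay the same. Concretely, take $k$ stars with centers of degree $L$ each, so $M_s\approx kL^s$ and $m\approx kL$. In the alternative, one star's degree is changed from $L$ to $L+\Delta$, and $\Delta$ edges are removed elsewhere, by shortening other stars, so that $m$ is preserved. Leaves have degree $1$ and contribute negligibly once $L^{s-1}\gg 1$. The effect on $M_s$ is about $sL^{s-1}\Delta$ plus a second-order term, and we need this to be at least $\eps kL^s$. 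There are two ways to get it, and they give the two terms of the minimum:
\begin{enumerate}
\item Modify a single center by $\Delta\approx \eps kL/s$. Under the no-advice-on-labels model, the algorithm detects this only by sampling an edge at that center, which happens with probability about $1/k$ per query, and then distinguishing degree $L$ from $L+\Delta$. Degrees are revealed exactly, so one hit suffices, and the algorithm needs about $k$ queries. This gives the $k^{1-1/s}/\eps^{2/s}$ term only after optimizing a sparser feature: a hidden vertex of degree $D$ with $D^s\approx\eps kL^s$, i.e.\ $D\approx \eps^{1/s}k^{1/s}L$, which is hit with probability $D/(kL)=\eps^{1/s}k^{1/s-1}$. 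Then $q\gtrsim k^{1-1/s}/\eps^{1/s}$. To reach $\eps^{2/s}$, the feature is instead split into $t$ hidden vertices, and we compare against a null in which the same total degree is spread over leaves. Hitting one feature vertex is then not conclusive, because the bulk star sizes are also randomized by a hypergeometric amount. Optimizing $t$ and the randomization is where the extra $\eps$ factor comes from.
\item In the large-$k$ regime, let the stars' degrees be random with an adjusted variance. The algorithm must then estimate through collision statistics (seeing the same center twice), as in the birthday-type lower bounds for $\sqrt{n}/\eps$ edge counting. We match the first $s-1$ moments of the collision-count profile, with a hidden second-order difference, so that distinguishing needs collisions of order $2s-1$, giving $q\gtrsim k^{1-1/(2s-1)}\eps^{-2/(2s-1)}$.
\end{enumerate}

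\textbf{Distinguishing bound.} For each construction we write the transcript as a Poissonized sequence of samples. Poissonization costs only constants. The vertex-hit counts are then independent Poisson variables, and we bound $\KL$ or the Hellinger distance between the two product mixtures by summing over vertices. The per-vertex contribution vanishes unless a vertex from the differing part is hit sufficiently many times with a revealed differing degree.

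The main obstacle is the second construction. There we must design two degree distributions whose sampled-degree and collision laws agree to high order. Specifically, the size-biased mixtures $\sum_j p_j\,\mathrm{Poi}(\lambda p_j)$ must match in their first $2s-2$ factorial moments while their $M_s$ differ by a factor of $1+\Theta(\eps)$. We would try moment matching via two discrete distributions on $s$ support points (a Vandermonde/Chebyshev-type construction), scaled so that the matching number stays $\le k$ and the constants depend only on $s$. The requirement that $\eps$ be small enough, $\eps\le\eps_0(s)$, enters here to keep the perturbed degrees positive integers.
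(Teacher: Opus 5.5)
There is a genuine gap: the proposal does not construct an instance attaining either term. For $k^{1-1/s}/\eps^{2/s}$, the only construction you analyze is a hidden vertex of degree $D\approx\eps^{1/s}k^{1/s}L$ among $k$ stars of degree $L$. That gives only $k^{1-1/s}/\eps^{1/s}$, and the step to $\eps^{-2/s}$ is left to ``optimizing $t$ and the randomization,'' which is the actual content of the proof. Your bulk also works against you. Every sampled endpoint reveals its degree, so shortening bulk stars to preserve $m$ is itself visible unless some randomization masks it, and you never specify one.

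The missing idea is to use a degree-one bulk and to place the same heavy object in \emph{both} hypotheses, in slightly different numbers. The paper compares $a=\lceil\eps^{-2}\rceil$ against $a+r$ copies of $K_{1,D}$, with $r=\lceil32\eps a\rceil$ and $D=\lfloor(k/(4a))^{1/s}\rfloor$. These are padded by $L=aD^s$ versus $L-rD$ isolated edges, so $m$ is equal, $\nu\le a+L\le k$, and the moments are $(1\pm\eps)$-separated. After uniform relabeling and Poissonization, the transcript is a common randomized function of the numbers of observed stars and observed isolated edges, which are independent binomials. Then $\TV(\operatorname{Bin}(a,p_\star),\operatorname{Bin}(a+r,p_\star))=O(r\sqrt{p_\star/a})=O(\eps\sqrt{a p_\star})$ stays small while $p_\star\approx qD/m$ is a small constant, i.e.\ for $q\lesssim aD^{s-1}=\Theta(k^{1-1/s}\eps^{-2/s})$. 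Hitting a star is never conclusive, because the $\Theta(\eps a)$ difference in star counts is hidden in the binomial fluctuation; this is exactly why $a\asymp\eps^{-2}$.

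Your second part is misplaced and aims at the wrong mechanism. The term $k^{1-1/(2s-1)}/\eps^{2/(2s-1)}$ is the smaller one only when $k\eps^2\lesssim1$. In the ``large-$k$'' regime you assign it to, it exceeds the upper bound of Theorem~\ref{thm:intro-moment}, so no construction can give it there. More importantly, collisions are not the bottleneck under endpoint access. With $m$ fixed, $M_s=W\,\E_{v\sim p}[d_v^{s-1}]$, so if the expected moments differ by $\Theta(\eps M_s)$, the law of the degree revealed by a single fair endpoint sample already differs. The lower bound must therefore control how fast that law can be learned. Matching factorial moments of hit-count profiles does not address this, and random star degrees are simply read off.

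The paper instead makes all high-degree vertices identical and puts the difference on low-degree vertices. It uses $k'$ paired centers, each pair with $D$ gadgets. Each gadget is \emph{shared} (one auxiliary vertex of degree $2$) or \emph{separate} (two auxiliary vertices of degree $1$), chosen independently as shared with probability $1/2-\beta$ under $\mathsf H_0$ and $1/2+\beta$ under $\mathsf H_1$, where $\beta=\beta_s/\sqrt N$ and $N=k'D/2$. Every center has degree $D$ and $m=2N$ in every graph. Each query reveals at most one new $\operatorname{Ber}(1/2\pm\beta)$ bit, so $\KL=O(q\beta^2)$ and $q=\Omega(N)$. The exponent $2s-1$ comes from balancing $\eps k'D^s\asymp\sqrt{N}$ with $N=k'D/2$, not from any collision order.

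A minor point: a moment ratio exceeding $(1+\eps)^2$ does not give separation, since $(1+\eps)^2<(1+\eps)/(1-\eps)$; you need a ratio above the latter.
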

Together with \Cref{thm:intro-moment}, this establishes the optimal complexity over $\mathcal{G}_{\le k}$ up to constant factors: $\Theta(k^{1-1/s}/\eps^{2/s})$ for $k \ge \eps^{-2}$ and $\Theta(k^{1-1/(2s-1)}/\eps^{2/(2s-1)})$ for $k \le \eps^{-2}$.

\subsection{Technical Overview}

\subparagraph*{The central obstacle of the endpoint access model.}
Because the vertex count is unknown, the size of the graph must be inferred from repeated observations of the same vertex.
Such repetitions can be rare, since a fair endpoint sample falls on each vertex only in proportion to its degree ($p_v = d_v / (2m)$) while many low degree vertices may share the probability mass.
Our algorithms overcome this obstacle through the mass concentration observation noted above: a fair endpoint sample falls on the $2\nu$ endpoints of a maximum matching with probability at least $1/2$, so repetitions reliably occur among these vertices regardless of how many low degree vertices exist.

\subparagraph*{Efficient algorithms from mass concentration on matching endpoints.}
For edge counting (\Cref{sec:edge-upper}), the rate $O(\min\{\nu/\eps, 1/\eps^2 + \sqrt{\nu}/\eps\})$ arises from two complementary algorithms.
When $\nu$ is large, \Cref{alg:endpoint-capture} (\Cref{sec:edge_upper_1}) measures how frequently fair endpoint samples hit a set $S$ of distinct vertices formed from a batch of $q$ fair endpoint samples.
The mass concentration observation ensures that the vertices in $S$ carry expected total endpoint sampling probability $\Omega(q/(\nu+q))$. 
By testing this probability while doubling $q$, the algorithm ensures that waiting for $\Theta(\eps^{-2})$ hits in $S$ costs only $O(1/\eps^2+\sqrt{\nu}/\eps)$ expected queries.
When $\nu$ is small, \Cref{alg:sampled-cover} (\Cref{sec:edge_upper_2}) greedily collects a vertex set of size $O(\nu)$ that covers almost all edges.
It then estimates the number of edges with both endpoints in this set via repeated edge observations, recovering $m$ in $O(\nu/\eps)$ queries.
Interleaving both algorithms query by query asymptotically achieves the minimum of the two bounds without knowing $\nu$.

For degree moments, high degree vertices make large contributions $(d_v)^s$ to $M_s$, so repeated observations of the same vertex cause severe variance.
We avoid this by using a Horvitz--Thompson estimator that counts each observed vertex only once, weighted inversely by its probability of being observed (\Cref{sec:moment-upper}).
This causes the relative variance to decay as higher powers of the sample size, which avoids the $\eps^{-2}$ cost of sample averaging to achieve a polynomial improvement in $\eps$.
\Cref{sec:moment-upper} further shows that this variance is governed by the scale $\Lambda_s := W / M_s^{1/s}$, which our mass concentration observation bounds by $\Lambda_s \le 2(2\nu)^{1-1/s}$ on every graph, achieving the bounds of \Cref{thm:intro-moment}.

\subparagraph*{Lower bounds from structural bottlenecks.}
To establish our lower bounds (\Cref{sec:edge-lower,sec:moment-lower}), we construct pairs of graph distributions that are hard to distinguish.
For edge counting, we first construct graphs where the edges that differ between the two distributions are diluted among leaves attached to $k$ vertices of equal degree, requiring $\Omega(\min\{k/\eps, 1/\eps^2\})$ queries (\Cref{sec:lb_edge_1}).
When $k \ge \eps^{-2}$, we adapt the construction of Beretta and T\v{e}tek to graphs, showing that counting isolated edges must rely on repeated edge observations as all observed degrees are one and thus requires $\Omega(\sqrt{k}/\eps)$ queries (\Cref{sec:lb_edge_2}).
For degree moments with sufficiently large $k\eps^2$, our first construction uses $\Theta(\eps^{-2})$ stars among isolated edges to hide a small change in the number of stars, requiring $\Omega(k^{1-1/s}/\eps^{2/s})$ queries (\Cref{sec:moment-stars}).
In the complementary regime, we construct graphs with at most $k$ center vertices whose incident edges can share auxiliary endpoints while center degrees stay fixed, requiring $\Omega(k^{1-1/(2s-1)}/\eps^{2/(2s-1)})$ queries (\Cref{sec:moment-paired}).

\subsection{Further Related Work}
\subparagraph*{Vertex query models.}
Given the vertex count $n$, sublinear algorithms estimate the average degree within factor $2+\eps$ using uniform vertex and degree queries~\cite{Feige2006}, and achieve $(1\pm\eps)$-estimates for average degrees and higher degree moments when neighbor queries are also available~\cite{GoldreichRon2008,EdenRonSeshadhri2019}.
Average degree estimation has also been investigated under arboricity bounds~\cite{EdenRonSeshadhri2019,EdenRubinfeldVasilyan2026,EdenSeshadhri2026} or without knowing $n$ under promises on minimum average degree or nonisolated vertices~\cite{Chanda2025,BerettaChakrabartySeshadhri2026}.
Nonadaptive edge estimation has been characterized when $V=[n]$ is known~\cite{BishnuEtAl2025Nonadaptive}, and edge counting has also been studied under hash-ordered neighbor access~\cite{TetekThorup2022}.
Related work also studies generating uniform or approximately uniform random edges using vertex and neighbor queries~\cite{EdenRosenbaum2018Sampling,EdenRonRosenbaum2019Sampling,EdenNarayananTetek2023}.

\subparagraph*{Subgraph counting.}
A broader line of work investigates estimating counts of stars~\cite{GonenRonShavitt2011,AliakbarpourEtAl2018}, cliques and triangles~\cite{EdenLeviRonSeshadhri2017,EdenRonSeshadhri2020Cliques}, and general subgraphs~\cite{AssadiKapralovKhanna2019}.
Among these, counting $s$-stars $S_s := \sum_v \binom{d_v}{s}$ is closely tied to degree moments.
Given $n$ and $m$, Aliakbarpour et al.~\cite{AliakbarpourEtAl2018} estimate $S_s$ using $O(m\log\log n / (\eps^2 S_s^{1/s}))$ random edge and degree queries, where the scale $m/S_s^{1/s}$ governing their variance corresponds to $\Lambda_s = W/M_s^{1/s}$ for degree moments.
However, their algorithm requires $n$ to search over candidate scales and its query complexity grows as $\eps^{-2}$.
By contrast, our estimator is advice-free, has no $\log\log n$ overhead, and improves the dependence on $\eps$ to $\eps^{-2/s}$ and $\eps^{-2/(2s-1)}$.

\subparagraph*{Weighted sampling and moment estimation.}
Sum and moment estimation have also been widely studied under weighted sampling~\cite{MotwaniPanigrahyXu2007,BerettaTetek2024,BhattacharyaPradhan2025}.
Because our moment estimator only uses sampled endpoints and their degrees, \Cref{thm:intro-moment} transfers to arbitrary weighted sets of $n$ elements with $n$ in place of $\nu$ (\Cref{sec:proportional-transfer}).
The resulting sample complexity is inconsistent with the $\Omega(n^{1-1/s}/\eps^2)$ lower bound stated in~\cite{BhattacharyaPradhan2025} at constant failure probability: \Cref{sec:bp25-lower-bound} shows that the lower bound does not hold uniformly over the stated parameter range and examines the construction used in its proof.

\section{Algorithms for Edge Count Estimation}
\label{sec:edge-upper}
We give two complementary algorithms for estimating the number of edges $m$.
Recall that a fair endpoint sample $X$ satisfies $\Prob[X=v] = p_v = d_v / (2m)$.
For any subset $A \subseteq V$, we define its total degree $D(A)$ and its endpoint mass $p_A$ as
\[
  D(A) := \sum_{v\in A} d_v,
  \qquad
  p_A := \Prob[X \in A] = \sum_{v\in A} p_v = \frac{D(A)}{2m}.
\]
The two algorithms utilize $D(A)$ through distinct strategies: \Cref{alg:endpoint-capture} obtains $m$ by estimating the endpoint mass $p_S$ of a vertex set $S$ formed from fair endpoint samples, whereas \Cref{alg:sampled-cover} reconstructs $m$ from a vertex set covering almost all edges and an estimate of the number of edges with both endpoints in that set.

\subsection{Estimating Edge Count via Endpoint Mass}
\label{sec:edge_upper_1}
\begin{algorithm}[t]
\caption{Estimating $m$ via endpoint mass}
\label{alg:endpoint-capture}
\begin{algorithmic}[1]
\Require Endpoint access to $G$; accuracy $0<\eps\le1$.
\State $R\gets\lceil64\eps^{-2}\rceil$.
\For{$j=0,1,2,\ldots$}
  \State $q\gets2^j$.
  \State Draw $q$ fair endpoint samples; let $S$ be the set of distinct vertices obtained.
  \State Draw $q$ fair endpoint samples; let $H$ count those in $S$, with multiplicity.
  \If{$H<R$}
    \State \textbf{continue};
  \EndIf
  \State Query $\Deg(v)$ once for every $v\in S$ and compute $D(S)$.
  \State Draw fair endpoint samples until $R$ samples lie in $S$; let $T$ count all trials.
  \State \Return $D(S)T/(2R)$.
\EndFor
\end{algorithmic}
\end{algorithm}

Let $S$ be the set of distinct vertices in a batch of fair endpoint samples.
Since $2m = D(S) / p_S$ holds for any $S \subseteq V$ with $p_S > 0$ and $D(S)$ can be evaluated using $\abs{S}$ degree queries, estimating $m$ reduces to estimating $1/p_S$.
A standard approach to obtain a $(1\pm\eps)$-estimate of $1/p_S$ with constant success probability is to draw independent fair endpoint samples until observing $R = \Theta(\eps^{-2})$ hits in $S$.
If $T$ denotes the required number of trials, then $\E[T \mid S] = R / p_S$, yielding the unbiased estimator $\widehat{m} := D(S)T / (2R)$.

However, if $p_S$ is small, the expected number of trials $R / p_S$ may become prohibitively large.
To prevent this, \Cref{alg:endpoint-capture} tests whether $S$ has sufficient endpoint mass before querying degrees.
Iterating over $q = 2^j$ for $j = 0, 1, 2, \dots$, it forms $S$ from $q$ fair endpoint samples and counts hits $H$ in a second batch of $q$ samples.
If $H < R$ for $R := \lceil 64\eps^{-2} \rceil$, the endpoint mass is deemed insufficient and the algorithm doubles $q$ to resample $S$.
Otherwise, it queries the degrees of the vertices in $S$ to compute $D(S)$ and draws fair endpoint samples until observing $R$ hits in $S$, returning $\widehat{m} = D(S) T / (2R)$.

\begin{theorem}
\label{thm:endpoint-capture}
Under the endpoint access model, for every $0<\eps\le1$, \Cref{alg:endpoint-capture} is advice-free, terminates almost surely, and returns a $(1\pm\eps)$-estimate of $m$ with probability at least $63/64$.
Its expected query complexity is $O(1/\eps^2 + \sqrt{\nu} / \eps)$, and the returned estimate $\widehat m$ satisfies $\E[(\widehat m / m)^2]=O(1)$ and $\E[(m / \widehat m)^2]=O(1)$.
\end{theorem}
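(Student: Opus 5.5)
We analyze the algorithm in three parts: correctness conditioned on the accepted set $S$, the expected cost, and termination. For correctness, fix the iteration at which $H\ge R$ first occurs and condition on the resulting $S$ (so $p_S>0$). The count $T$ needed to see $R$ hits is a sum of $R$ i.i.d.\ geometric variables with success probability $p:=p_S$. Hence $\E[T\mid S]=R/p$ and $\Var[T\mid S]=R(1-p)/p^2\le R/p^2$. Since $D(S)/(2m)=p$, we have $\widehat m/m = pT/R$, which has conditional mean $1$ and conditional variance at most $1/R\le \eps^2/64$. Chebyshev then gives $\Prob[|\widehat m/m-1|>\eps\mid S]\le 1/64$, uniformly in $S$, and so unconditionally. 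The same computation gives $\E[(\widehat m/m)^2]\le 1+1/R=O(1)$. For $\E[(m/\widehat m)^2]=\E[(R/(pT))^2]$, we use $T\ge R$ together with a lower-tail bound for $T$. The event $T< R/(2p)$ means that at least $R$ hits occur among the first $R/(2p)$ trials. This is a binomial upper-tail event and has probability $e^{-\Omega(R)}$, while $(R/(pT))^2\le 1/p^2$. That crude bound is not enough on its own, so instead we bound $\Prob[pT/R\le t]$ for each $t\le 1/2$ by a Chernoff bound of order $(et)^{R}$. Since $R\ge 64\ge 3$, the integral $\int t^{-3}\Prob[pT/R\le t]\,dt$ converges, which gives $O(1)$.

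For the cost, the main step is to show that the test succeeds once $q\gtrsim \sqrt{\nu}/\eps+1/\eps^2$, and that when it succeeds $p_S$ is not too small relative to $q$. Let $C$ be the $2\nu$ matching endpoints, so $p_C\ge 1/2$. First we prove the lower bound on $\E[p_S]$. We have $\E[p_S]=\sum_v p_v(1-(1-p_v)^q)\ge\sum_{v\in C}p_v\min(1,qp_v)/2$. Minimizing this subject to $\sum_{v\in C}p_v\ge 1/2$ over $2\nu$ vertices gives $\Omega(\min(1,q/\nu))$, and this bound holds as long as $q\ge 1$. Next, conditional on $S$, $H\sim\mathrm{Bin}(q,p_S)$. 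We then need concentration of $p_S$ itself. For this we use the fact that $p_S$ is a monotone function of the first batch with bounded differences (each sample changes it by at most $p_v$), or more simply a second-moment/Paley–Zygmund bound, to get $\Prob[p_S\ge c\min(1,q/\nu)]\ge c'$. Combining the two, at the level $q^*=\Theta(1/\eps^2+\sqrt{\nu}/\eps)$ we have $qp_S\ge c\,q\min(1,q/\nu)\ge 2R$ with constant probability. Then $H\ge R$ with constant probability, and each further doubling of $q$ fails only with probability bounded by a constant $<1/2$. The $O(q)$ sampling cost per round therefore sums geometrically to $O(q^*)$ in expectation.

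We must also bound the final phase, whose cost is $|S|\le q$ degree queries plus $\E[T\mid S]=R/p_S$. Here lies the main obstacle: we must rule out the test passing while $p_S\ll R/q$. Given $S$, $\Prob[H\ge R\mid S]\le \min(1,qp_S/R)$ by Markov. So the expected final cost contributed at level $q$ is at most $\E[\Prob[H\ge R\mid S]\cdot R/p_S]\le q$. Summing over the levels that are actually reached, which are geometrically controlled as above, gives $O(q^*)$, and hence the total expectation $O(1/\eps^2+\sqrt\nu/\eps)$. I expect the most delicate point to be making the constant-probability lower bound on $p_S$ (rather than only on its expectation) rigorous uniformly in the degree distribution. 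For this I would first try the bounded-differences route restricted to the vertices of $C$ with $p_v\le 1/q$, handling the heavy vertices separately, since these are almost surely captured. Finally, termination almost surely follows because each level at or beyond $q^*$ accepts with probability bounded below, and $T$ is finite almost surely. The algorithm uses no parameters other than $\eps$, so it is advice-free.
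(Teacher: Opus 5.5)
Much of your plan matches the paper: the Chebyshev argument for accuracy, $\E[(\widehat m/m)^2]\le1+1/R$, the bound $\E[p_S]=\Omega(\min\{1,q/\nu\})$ from the $2\nu$ matching endpoints, and the Markov bound $\E[T\mid S]\le q$ on the final phase.

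\textbf{The gap.} The gap is the step you flag yourself, and it is more than a matter of rigor. The geometric sum needs $\Prob[H<R]\le\rho$ for some $\rho<1/2$ at \emph{every} level $q\ge q^\ast$. Level $2^hq^\ast$ costs $\Theta(2^hq^\ast)$ and can be reached with probability $\rho^h$. A bound of the form ``$p_S\ge c\min\{1,q/\nu\}$ with probability $c'$'' for an unspecified constant $c'$ does not give this. Neither of your suggested tools closes the gap.
\begin{itemize}
\item \emph{Bounded differences.} Let $f$ be the light part of $p_S$. One resampled point can move $f$ by up to $1/q$. So when light vertices carry the mass and $q\le\nu$, McDiarmid gives a lower tail of order $\exp(-\Theta(q(\E f)^2))=\exp(-\Theta(q^3/\nu^2))$. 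At $q^\ast\approx\sqrt{R\nu}$ the exponent is $\Theta(R^{3/2}/\sqrt\nu)$, which is tiny once $\nu\gg\eps^{-6}$. This route therefore loses the $\sqrt\nu/\eps$ bound. Also, heavy vertices with $p_v>1/q$ are captured only with probability at least $1-1/e$, not almost surely.
\item \emph{Paley--Zygmund.} This needs an upper bound on $\E[p_S^2]$, which you never establish. The trivial bound $\E[p_S^2]\le\E[p_S]$ only gives success probability $\Omega(q/\nu)$.
\end{itemize}

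\textbf{The missing idea.} What you need is the paper's variance bound on $p_S$. The indicators $I_v=\Ind[v\in S]$ are negatively correlated, since $\Cov[I_u,I_v]=(1-p_u-p_v)^q-(1-p_u)^q(1-p_v)^q\le0$. Also $qp_v(1-p_v)^q\le1-(1-p_v)^q$. Together these give $\Var[p_S]\le\sum_vp_v^2(1-p_v)^q\le\E[p_S]/q$. By the law of total variance, $\Var[H]\le q\E[p_S]+q^2\Var[p_S]\le2\E[H]$. Chebyshev applied directly to $H$ then gives $\Prob[H<R]\le8/\E[H]\le1/R<1/4$ whenever $\E[H]\ge8R$, which holds for $q\ge64(R+\sqrt{\nu R})$. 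No conditioning on a good $S$ is needed. With this bound your Paley--Zygmund route would also work, but Chebyshev on $H$ is simpler.

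\textbf{A smaller issue.} For $\E[(m/\widehat m)^2]$, integrating $2t^{-3}(et)^R$ over $[0,1/2]$ gives $8(e/2)^R/(R-2)$. This is not bounded uniformly in $\eps$. You can fix it by cutting at $t_0=1/e$, so the remainder contributes at most $t_0^{-2}=e^2$, or by using the sharper tail $(te^{1-t})^R$. The paper sidesteps this with the exact identity $\E[1/((T-1)(T-2))]=p_S^2/((R-1)(R-2))$.
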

The two ratio moment bounds are used in the proof of \Cref{lem:moment-search}.

The matching number enters through the endpoint mass $p_S$ of the set $S$ formed from the first $q$ samples.
As at least half of the endpoint mass lies on the $2\nu$ endpoints of a maximum matching, $\E[p_S]$ can be bounded below in terms of $q$ and $\nu$ (\Cref{lem:captured-endpoint-mass}).
Since $\E[H]=q\E[p_S]$, this controls how large $q$ must be before we expect enough hits.
Specifically, \Cref{lem:capture-hit-threshold} establishes that once $q$ exceeds a sufficiently large constant multiple of $R+\sqrt{\nu R}$, each reached iteration continues with probability at most $1/4$.
Thus the probability of reaching larger values of $q$ decays geometrically beyond this threshold.

It remains to bound the number of trials $T$ required to observe $R$ hits in $S$.
\Cref{lem:capture-fixed-scale-cost} shows that, for a fixed set $S$ and setting $T := 0$ if $H < R$,
\[
  \E[T \mid S] = \Prob[H \ge R \mid S] \cdot \frac{R}{p_S}
  \le \frac{q p_S}{R} \cdot \frac{R}{p_S}
  = q,
\]
where the inequality follows from Markov's inequality using $\E[H \mid S] = q p_S$.
The two initial batches cost $2q$ random edge queries, the degree queries cost at most $\abs{S}\le q$, and the final trials contribute $\E[T \mid S] \le q$, bounding the expected query cost of each reached iteration by $4q$.
Summing over all iterations yields the overall expected query complexity of $O(R + \sqrt{\nu R}) = O(1/\eps^2 + \sqrt{\nu}/\eps)$.
The complete proof appears in \Cref{sec:endpoint-capture-detail}.

\subsection{Estimating Edge Count via Covering Almost All Edges}
\label{sec:edge_upper_2}
\begin{algorithm}[t]
\caption{Estimating $m$ by covering almost all edges}
\label{alg:sampled-cover}
\begin{algorithmic}[1]
\Require Endpoint access to $G$; accuracy $0<\eps\le1$.
\For{$j=0,1,2,\ldots$}
  \State $b\gets2^j$, $q_b\gets\lceil16b/\eps\rceil$, $M_b\gets\varnothing$.
  \For{$i=1,\ldots,q_b$}
    \State $\{u,v\}\gets\RE()$.
    \If{$u\notin V(M_b)$ and $v\notin V(M_b)$}
      \State $M_b\gets M_b\cup\{\{u,v\}\}$.
    \EndIf
  \EndFor
  \If{$|M_b|<b$}
    \State $C\gets V(M_b)$ and \textbf{break};
  \EndIf
\EndFor
\State Query $\Deg(v)$ once for every $v\in C$ and compute $D(C)$.
\State Let $\widehat h$ be the output of \Cref{alg:internal-edges} on $(C,b,\eps)$.
\State \Return $\max\{0,D(C)-\widehat h\}$.
\end{algorithmic}
\end{algorithm}

With respect to any vertex set $C \subseteq V$, every edge in $G$ is either internal (both endpoints in $C$), crossing (exactly one endpoint in $C$), or uncovered (neither endpoint in $C$).
Let $h(C) := |E(G[C])|$ and $r(C) := |E(G[V \setminus C])|$ denote the numbers of internal and uncovered edges, respectively.
Because the total degree $D(C) = \sum_{v \in C} d_v$ counts internal edges twice, crossing edges once, and uncovered edges not at all, the edge count decomposes as $m = D(C) - h(C) + r(C)$.
Therefore, if we construct a set $C$ with $r(C) \le \eps m / 2$, compute $D(C)$ using $|C|$ degree queries, and estimate $h(C)$ by $\widehat h$ to additive error $\eps m / 2$, returning $\widehat m := \max\{0, D(C) - \widehat h\}$ achieves the desired accuracy.

\subparagraph*{Constructing $C$ by greedy matching.}
\Cref{alg:sampled-cover} constructs such a set $C$ without knowing $n$ or $\nu$ by greedily building a matching from random edges.
For a matching $M$, let $V(M)$ denote its set of endpoints.
Iterating over $b = 2^j$ for $j = 0, 1, 2, \dots$, the algorithm initializes an empty matching $M_b$, draws $q_b := \lceil 16b / \eps \rceil$ random edges, and inserts each sampled edge whose endpoints are both currently unmatched.
If $|M_b| < b$, the loop ends with $C := V(M_b)$; otherwise, it doubles $b$ and repeats.

\Cref{lem:sampled-approx-cover} proves that the set $C$ satisfies $r(C) \le \eps m / 2$ with probability at least $11/12$.
As long as more than $\eps m / 2$ edges remain uncovered, each draw enlarges the matching with probability at least $\eps / 2$.
Consequently, coupling the number of insertions with $\operatorname{Bin}(q_b,\eps/2)$ bounds the probability of stopping at scale $b$ with $r(C) > \eps m / 2$ by $e^{-3b}$.
Summing over all scales bounds the total failure probability by $\sum_{j \ge 0} e^{-3 \cdot 2^j} < 1/12$.
Because no matching has more than $\nu$ edges, the loop terminates at or before the first power of two $b_\star > \nu$, ensuring $b \le 2\nu$ and $|C| < 2b \le 4\nu$.
Drawing the samples across reached scales and querying degrees in $C$ uses $O(\nu/\eps)$ queries on every execution.

\subparagraph*{Estimating the internal edge count.}
\begin{algorithm}[t]
\caption{Estimating the internal edge count}
\label{alg:internal-edges}
\begin{algorithmic}[1]
\Require Endpoint access to $G$; a vertex set $C$, $b\ge1$ with $|C|<2b$, and $0<\eps\le1$.
\Statex An edge is internal if both endpoints lie in $C$. Fix a sufficiently large constant $K$.
\State $t\gets\lceil Kb/\eps\rceil$.
\State Draw $t$ independent random edges and retain the internal samples $e_1,\ldots,e_N$, including repetitions.
\If{$N<t\eps/4$}
  \State \Return $0$.
\EndIf
\State $Y\gets\binom{N}{2}^{-1}\sum_{1\le i<j\le N}\Ind[e_i=e_j]$.
\State \Return $0$ if $Y=0$, and $1/Y$ otherwise.
\end{algorithmic}
\end{algorithm}

Once $C$ is obtained, we estimate the internal edge count $h := h(C)$ with a collision estimator along the lines of~\cite{BerettaTetek2024}.
\Cref{alg:internal-edges} draws $t := \lceil Kb/\eps \rceil$ independent random edges for a sufficiently large constant $K$, and retains the $N$ internal samples, including repetitions.
When $h > 0$, the $N$ internal samples are independent and uniformly distributed over the $h$ internal edges.
Consequently, each pair of internal samples coincides with probability $1/h$, so when $N \ge 2$, the fraction of colliding pairs $Y$ has expectation $\E[Y \mid N] = 1/h$.
By the birthday paradox, observing $\Theta(\sqrt{h})$ internal samples suffices for constant relative accuracy.
Because $|C| < 2b$, the internal edge count satisfies $h \le \binom{|C|}{2} < 2b^2$, so $O(b)$ internal samples are sufficient.

When internal edges are more frequent, drawing $t$ random edges yields more internal samples, providing the higher accuracy needed for additive error $\eps m / 2$.
However, if internal edges are rare, $t$ random edges may yield too few internal samples to observe collisions.
Because each sampled edge is internal to $C$ with probability $h/m$, having $h \ge \eps m / 2$ ensures $\E[N] \ge t\eps/2$ and consequently $N \ge t\eps/4$ with high constant probability.
The algorithm therefore returns zero when $N < t\eps/4$, which is unlikely unless $h < \eps m / 2$, and returns $\widehat h = 1/Y$ (or zero if $Y = 0$) otherwise.
\Cref{lem:internal-correction} shows that this estimator satisfies $|\widehat h - h(C)| \le \eps m / 2$ with probability at least $11/12$ using $O(b/\eps)$ queries on every execution.

\begin{theorem}
\label{thm:sampled-cover}
Under the endpoint access model, for every $0<\eps\le1$, \Cref{alg:sampled-cover} is advice-free, uses $O(\nu/\eps)$ queries on every execution, and returns a $(1\pm\eps)$-estimate of $m$ with probability at least $5/6$.
\end{theorem}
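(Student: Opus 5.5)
The proof combines the two guarantees already isolated in the text: \Cref{lem:sampled-approx-cover} for the cover $C$ and \Cref{lem:internal-correction} for the internal edge estimate, tied together by the identity $m = D(C) - h(C) + r(C)$.

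\textbf{Accuracy.} Let $\mathcal{E}_1$ be the event $r(C)\le\eps m/2$; by \Cref{lem:sampled-approx-cover} it holds with probability at least $11/12$. Let $\mathcal{E}_2$ be the event $|\widehat h - h(C)|\le \eps m/2$. The run of \Cref{alg:internal-edges} uses fresh random edges, independent of the samples that produced $C$ and $b$. Its hypothesis $|C|<2b$ always holds, since $|C| = 2|M_b| < 2b$ at termination. So \Cref{lem:internal-correction} applies conditionally on every realized pair $(C,b)$ and gives $\Prob[\mathcal{E}_2]\ge 11/12$. A union bound gives $\Prob[\mathcal{E}_1\cap\mathcal{E}_2]\ge 5/6$.

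On $\mathcal{E}_1\cap\mathcal{E}_2$ the identity yields
\[
D(C)-\widehat h = m - r(C) + \bigl(h(C)-\widehat h\bigr)\in\bigl[(1-\eps)m,\ (1+\eps/2)m\bigr].
\]
Since $(1-\eps)m\ge 0$, truncating at $0$ can only move the value toward the interval, so the output is still a $(1\pm\eps)$-estimate. Advice-freeness is clear, because the algorithms use only $\eps$ and the oracles.

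\textbf{Query bound.} This bound must hold on every execution, not just in expectation. Any matching has at most $\nu$ edges, so $|M_b|\le\nu$ always. Hence at the first scale $b=2^j>\nu$ the test $|M_b|<b$ certainly succeeds, and the loop stops at some $b\le b_\star\le 2\nu$. The sampling cost over all reached scales is
\[
\sum_{2^j\le b}\lceil 16\cdot 2^j/\eps\rceil \le 32b/\eps + \log_2(2b)+1 = O(\nu/\eps),
\]
using $\nu\ge 1$, which holds because $m>0$. The degree queries number $|C|<2b\le 4\nu$. \Cref{alg:internal-edges} uses $t=\lceil Kb/\eps\rceil = O(\nu/\eps)$ queries. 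Summing the three costs gives $O(\nu/\eps)$ deterministically, which also shows that the algorithm terminates on every execution.

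\textbf{Main subtlety.} The delicate step is applying \Cref{lem:internal-correction} when $C$ and $b$ are random. The fix is to condition on $(C,b)$: the subsequent samples are fresh, and the lemma's guarantee is uniform over all admissible inputs with $|C|<2b$. Given this, the rest is bookkeeping.
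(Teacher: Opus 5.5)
Your proposal is correct and follows the paper's proof essentially step for step. Both combine $m = D(C) - h(C) + r(C)$ with \Cref{lem:sampled-approx-cover,lem:internal-correction} via a union bound, and both bound the cost on every execution using $b \le 2\nu$. The only cosmetic difference is in handling the truncation $\max\{0,\cdot\}$: you place $D(C)-\widehat h$ in $[(1-\eps)m,(1+\eps/2)m]$, whereas the paper uses $|\widehat m - m| \le |D(C)-\widehat h - m|$ directly from $m>0$.
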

The complete proof appears in \Cref{sec:sampled-cover-detail}.
We now combine \Cref{alg:endpoint-capture,alg:sampled-cover} to prove \Cref{thm:intro-edge}.
\begin{proof}[Proof of \Cref{thm:intro-edge}]
Run \Cref{alg:endpoint-capture,alg:sampled-cover} with accuracy parameter $\eps$, alternating their oracle queries, and return the first estimate produced.
Let $Q_1$ and $Q_2$ respectively denote the total numbers of queries made by the two algorithms when run independently to completion.
Interleaving their execution ensures that the total query count $Q$ satisfies $Q \le 2 \min\{Q_1, Q_2\}$, which immediately implies $\E[Q] \le 2 \min\{\E[Q_1], \E[Q_2]\}$.
Substituting the bounds from \Cref{thm:endpoint-capture,thm:sampled-cover} gives the claimed expected query complexity.

By a union bound, the probability that either algorithm fails to return a $(1\pm\eps)$-estimate is at most $1/3$ since each algorithm fails with probability at most $1/6$ (\Cref{thm:endpoint-capture,thm:sampled-cover}).
Thus, whichever algorithm terminates first returns a $(1 \pm \eps)$-estimate with probability at least $2/3$.
Finally, the combined procedure requires no advice and terminates in $O(\nu/\eps)$ queries on every execution as $Q \le 2Q_2$.
\end{proof}

\section{Lower Bounds for Edge Count Estimation}
\label{sec:edge-lower}
\subparagraph*{Lower bound strategy.}
Our lower bounds (\Cref{thm:intro-edge-lb,thm:intro-moment-lb}) are based on the standard reduction from estimation to hypothesis testing (Le Cam's method; see Yu~\cite{Yu1997}).
For a target graph parameter $\theta(G)$, we construct two distributions $\mathsf H_0,\mathsf H_1$ over simple graphs with $\nu(G) \le k$, alongside sets $I_0, I_1 \subset \mathbb{R}_{>0}$ satisfying $\Prob_{G \sim \mathsf{H}_i}[\theta(G) \in I_i] \ge 99/100$ for $i \in \{0, 1\}$.
These sets are $(1\pm\eps)$-separated: that is, no value is a $(1\pm\eps)$-estimate of both a value in $I_0$ and a value in $I_1$.
Consequently, any algorithm returning a $(1\pm\eps)$-estimate of $\theta(G)$ can identify whether $G$ was sampled from $\mathsf{H}_0$ or $\mathsf{H}_1$.

To bound the query complexity required to distinguish $\mathsf{H}_0$ from $\mathsf{H}_1$ under endpoint access, each construction considers a stronger \emph{augmented oracle}, whose responses to $q$ queries suffice to simulate any endpoint access algorithm making at most $q$ queries.
Let $\mathsf{P}_i$ denote the joint distribution of these responses under $\mathsf{H}_i$.
By the data processing inequality, the probability that any $q$-query algorithm correctly identifies whether $G\sim\mathsf H_0$ or $G\sim\mathsf H_1$ under a uniform prior is at most $(1 + \TV(\mathsf{P}_0, \mathsf{P}_1))/2$, where $\TV(\cdot, \cdot)$ denotes total variation distance.

In each construction, we show $\TV(\mathsf{P}_0, \mathsf{P}_1) \le 1/4$ for a chosen number of queries $q$.
If an algorithm succeeding with probability $2/3$ has worst case expected query complexity $o(q)$ on the union of the supports of $\mathsf H_0$ and $\mathsf H_1$, Markov's inequality implies that truncating execution at $q$ queries yields a test with success probability exceeding $(1 + 1/4)/2 = 5/8$, a contradiction.
The formal reduction appears in \Cref{sec:lower_bound_formal}.

We prove \Cref{thm:intro-edge-lb} using two lower bound constructions for two regimes of $k$:
\begin{itemize}
  \item When $k \le \eps^{-2}$, the \emph{degree-preserving edge splits} construction (\Cref{sec:lb_edge_1}) establishes an expected query lower bound of $\Omega(\min\{k/\eps, 1 / \eps^2\})$.
  \item When $k \ge \eps^{-2}$, the \emph{random matching subset} construction (\Cref{sec:lb_edge_2}) establishes an expected query lower bound of $\Omega(\sqrt{k}/\eps)$.
\end{itemize}
Together, these constructions establish the lower bound $\Omega(\min\{k/\eps, 1/\eps^2 + \sqrt{k}/\eps\})$ for all $k \ge 2$ and sufficiently small $\eps > 0$.
Throughout this section, the augmented oracle returns a uniformly random edge together with both endpoint degrees, which suffices because degree queries are restricted to endpoints of previously returned edges.

\subsection{Degree-Preserving Edge Splits}
\label{sec:lb_edge_1}
We first establish the lower bound $\Omega(\min\{k/\eps, 1/\eps^2\})$.
For simplicity, we assume that $k \ge k_0$ is a multiple of four for a constant $k_0$ fixed below; removing this assumption is discussed in \Cref{sec:lower_bound_edge_choice}.

\subparagraph*{Construction of degree-preserving edge splits.}
Let $L := \lceil 1/\eps \rceil$. We take a vertex set $U$ of $k$ vertices partitioned into two equal sets of size $k/2$, and place a $k/4$-regular bipartite graph $B$ between them, so $|E(B)| = k^2/8$.
To each vertex $u \in U$, we attach $L$ leaf vertices of its own; these $kL$ leaf edges are present in every graph.

For each edge $e = \{u,v\} \in E(B)$, we introduce two auxiliary vertices $x_e, y_e$ and form a \emph{gadget} on $\{u, v, x_e, y_e\}$ by selecting one of two local configurations:
in the \emph{single} configuration, the edge $\{u,v\}$ is present while $x_e, y_e$ remain isolated;
in the \emph{split} configuration, the two edges $\{u,x_e\}$ and $\{v,y_e\}$ are present while $\{u,v\}$ is absent.
We refer to the edges present in these configurations as \emph{gadget} edges, distinguishing them from the $kL$ leaf edges.
Both configurations contribute one incident edge to each of $u$ and $v$, so every vertex $u \in U$ has fixed degree $d_u = L + k/4$ regardless of gadget choices.
We define two graph distributions as follows:
\begin{description}
    \item[$\mathsf{H}_0$] Independently assign each $e \in E(B)$ to be \emph{single} or \emph{split} with probability $1/2$ each.
    \item[$\mathsf{H}_1$] Bias the probability of \emph{single} to $1/2 + \beta$, where $\beta:=C_0\eps(L+k)/k$ for a sufficiently large constant $C_0$.
\end{description}
With $C_0$ fixed, let $k_0$ and $\eps_0$ be constants such that $\beta \le 1/4$ whenever $k \ge k_0$ and $0 < \eps \le \eps_0$.
Because $U$ is a vertex cover of size $k$ and selecting one leaf edge at each vertex in $U$ gives a matching of size $k$, every graph in either support has matching number $\nu(G) = k$.
While the expected edge counts differ by $\E_{\mathsf{H}_0}[m] - \E_{\mathsf{H}_1}[m] = \beta k^2 / 8$, the standard deviation of $m$ under either hypothesis is $O(k)$.
By the choice of $C_0$, \Cref{lem:switch-separation} gives $(1\pm\eps)$-separated sets $I_0,I_1$ of edge counts such that $\Prob_{\mathsf H_i}[m\in I_i]\ge99/100$ for each $i\in\{0,1\}$.

\subparagraph*{Indistinguishability of degree-preserving edge splits.}
The degrees of all endpoints are determined by whether the sampled edge is a leaf edge, a single gadget edge, or a split gadget edge, as vertices in $U$ have degree $L + k/4$ while all other nonisolated vertices have degree $1$.
Thus any distinguishing information must arise from the sampled edge identities.

The indistinguishability follows from gadget edges being diluted by the $kL$ leaf edges.
A uniform random edge query hits a gadget with probability $\Theta(k/(L+k))$.
For a fixed gadget, the distributions of its configuration under the two hypotheses have KL divergence $O(\beta^2)$.
Because the probability of sampling each individual edge depends on the total number of edges $m$, \Cref{sec:lower_bound_edge_choice} analyzes rejection sampling from a fixed candidate universe and proves that the total KL divergence of $q$ responses is $O(q\beta^2 k/(L+k))$.
By Pinsker's inequality, this gives $\TV(\mathsf P_0, \mathsf P_1) \le 1/4$ for all $1 \le q \le c(L+k)/(\beta^2 k)$ for a sufficiently small constant $c > 0$, yielding the worst case expected query lower bound $\Omega((L+k)/(\beta^2 k)) = \Omega(\min\{k/\eps,1/\eps^2\})$.
The complete proof appears in \Cref{sec:lower_bound_edge_choice}.

\subsection{Random Matching Subset}
\label{sec:lb_edge_2}
To complete the proof of \Cref{thm:intro-edge-lb}, we establish the complementary lower bound $\Omega(\sqrt{k}/\eps)$ in the regime $k \ge \eps^{-2}$.

\subparagraph*{Construction of random matching subset.}
We adapt the weighted sampling hard family of Beretta and T\v{e}tek~\cite{BerettaTetek2024} to graphs by realizing each unit weight element as an isolated edge.
Let $M$ be a matching of $k$ isolated edges on $2k$ vertices. We define two graph distributions as follows:
\begin{description}
    \item[$\mathsf{H}_0$] A point mass on the full matching $G = M$.
    \item[$\mathsf{H}_1$] The subgraph of $M$ formed by a uniformly random $j$-subset of its edges, where $j := \lfloor(1 - 3\eps)k\rfloor$, leaving the remaining $2(k - j)$ vertices isolated.
\end{description}
Every graph in either support has matching number at most $k$, and every nonisolated vertex has degree $1$.
Since $(1+\eps)j \le (1+\eps)(1-3\eps)k < (1-\eps)k$, the edge counts $k$ and $j$ are $(1\pm\eps)$-separated.

\subparagraph*{Indistinguishability of random matching subset.}
Because every nonisolated vertex has degree $1$ under both hypotheses, endpoint degrees provide no distinguishing information.
\Cref{lem:hidden-support-likelihood} proves that the likelihood ratio of any sample sequence depends solely on the number of distinct observed edges, meaning that distinguishing the two distributions rests on the number of draws that repeat an earlier edge.

The probability of repeating an earlier edge after observing $r$ distinct edges is $r/k$ and $r/j$ under $\mathsf{H}_0$ and $\mathsf{H}_1$, respectively. 
Moreover, under both hypotheses, a repeated edge is uniform among the observed edges and a new edge is uniform among the unobserved edges.
Each step $t$ therefore reduces to distinguishing $\operatorname{Ber}(r/j)$ from $\operatorname{Ber}(r/k)$, which contributes KL divergence of $O(\eps^2 r / k) = O(\eps^2 t / k)$.
Summing over all steps $t < q$ gives a total divergence of $O(\eps^2 q^2 / k)$. 
By Pinsker's inequality, this gives $\TV(\mathsf P_0, \mathsf P_1) \le 1/4$ for all $1 \le q \le c\sqrt{k}/\eps$ for a sufficiently small constant $c > 0$, yielding the worst case expected query lower bound $\Omega(\sqrt{k}/\eps)$.
The complete proof appears in \Cref{sec:lower_bound_random_subset_matching}.

\section{Algorithm for Degree Moment Estimation}
\label{sec:moment-upper}
We now turn to estimating the degree moments $M_s = \sum_{v \in V} (d_v)^s$ for integers $s \ge 2$.
For the analysis in this section and \Cref{sec:moment-detail}, we may ignore isolated vertices as removing them preserves the sampling process, $m$, $M_s$, and $\nu$.
Unlike $m$, higher moments are dominated by high degree vertices, whose large contributions $(d_v)^s$ to $M_s$ cause severe variance under repeated sampling.
To control this variance, our algorithm estimates $M_s$ using an inverse-probability weighted Horvitz--Thompson estimator~\cite{HorvitzThompson1952} evaluated over the \emph{distinct} observed vertices.

\subsection{The Moment Estimator}
We first consider an ideal setting where the total degree sum $W = 2m$ is known.
Given an expected sample size $q > 0$, we draw $N \sim \operatorname{Pois}(q)$, take $N$ fair endpoint samples, and query the degree of each distinct observed vertex.
Let $S \subseteq V$ denote the set of distinct observed vertices.
Under Poissonization, each vertex $v \in V$ is observed independently with inclusion probability $\pi_v(q) := \Prob[v \in S] = 1 - e^{-q d_v / W}$.
We weight each observed vertex by its inverse inclusion probability, defining the Horvitz--Thompson estimator
\[
  Z_q(W) := \sum_{v \in S} \frac{(d_v)^s}{\pi_v(q)},
\]
where an empty sum evaluates to zero.
Because drawing $N$ fair endpoint samples and querying degrees of distinct vertices requires at most $2N$ queries, the expected query complexity is at most $2\E[N] = 2q$.

Counting each observed vertex only once controls the estimator's variance: an observed vertex $v$ contributes $(d_v)^s/\pi_v(q)$ however many times it is sampled.
Since a high degree vertex $v$ with large $q d_v / W$ is observed with probability $\pi_v(q)$ close to $1$, its contribution is nearly deterministic.
By contrast, weighting every sample of $v$ separately would leave a large variance of order $(d_v)^{2s} (q d_v / W)^{-1}$ for its contribution.
Poissonization also makes the events $\{v \in S\}$ mutually independent, yielding (\Cref{lem:moment-ht-props})
\[
  \E[Z_q(W)] = M_s, \qquad \Var[Z_q(W)] = \sum_{v \in V} \frac{(d_v)^{2s}}{e^{q d_v / W} - 1}.
\]

Define $\Lambda_s := W / M_s^{1/s}$.
For every integer $a \ge 1$ and $x \ge 0$, the exponential series yields $e^x - 1 \ge x^a / a!$.
Applying this with $x = q d_v / W$ for $a = s$ and $a = 2s - 1$ gives $\Var[Z_q(W)] \le s! (W/q)^s M_s$ and $\Var[Z_q(W)] \le (2s-1)! W^{2s} / q^{2s-1}$, respectively.
Dividing by $M_s^2$ yields
\[
  R_s(q) := \frac{\Var[Z_q(W)]}{M_s^2} = O\left(\min\!\left\{ \left(\frac{\Lambda_s}{q}\right)^s, \, \frac{\Lambda_s^{2s}}{q^{2s-1}} \right\}\right).
\]
By Chebyshev's inequality, taking $q = \Theta\left(\min\left\{\Lambda_s / \eps^{2/s}, \, \Lambda_s^{2s/(2s-1)} / \eps^{2/(2s-1)}\right\}\right)$ suffices to obtain a $(1\pm\eps)$-estimate with constant probability.

The matching number $\nu$ governs $\Lambda_s$ because the endpoints of a maximum matching form a vertex cover of size $2\nu$ containing at least half of the total degree sum.
By H\"older's inequality, this implies $\Lambda_s \le 2(2\nu)^{1-1/s}$ (\Cref{lem:moment-ht-relvar}), achieving the bounds of \Cref{thm:intro-moment}.
However, evaluating the denominators $\pi_v(q)$ requires $W$, and choosing $q$ requires $\Lambda_s$, neither of which is known in advance.

\subsection{Choosing the Parameters Without Advice}
To remove the dependence on $W$, we evaluate the same estimator with an arbitrary estimate $U > 0$ in place of $W$:
\begin{equation}
  Z_q(U) := \sum_{v \in S} \frac{(d_v)^s}{\pi_v(q, U)}, \qquad \pi_v(q, U) := 1 - e^{-q d_v / U}.
  \label{eq:moment-estimated-total-degree}
\end{equation}
Given $q$ and $U$, the algorithm draws $N \sim \operatorname{Pois}(q)$ and takes $N$ independent fair endpoint samples.
Replacing $W$ with $U$ alters only the denominators while each vertex is still observed with probability $\pi_v(q)$ as the samples do not depend on $U$.
\Cref{lem:moment-estimated-total-degree} proves that when $U$ approximates $W$ within a constant factor, $\E[Z_q(U)] = \Theta(M_s)$ and the variance bound is preserved up to a constant factor.
This allows us to use a coarse estimate $\widetilde{W}$ when searching for $q$ and an accurate estimate of $W$ only in the final estimator (\Cref{alg:moment-estimation}).

\subparagraph*{Choosing the expected sample size.}
We first invoke \Cref{alg:endpoint-capture} with accuracy $1/4$ and let $\widetilde{W}$ be twice its output, so that $\widetilde{W} \in [3W/4, 5W/4]$ with probability at least $63/64$.
Because $M_s \le W^s$, the quantity $(2\widetilde{W})^s$ bounds $M_s$ from above whenever $\widetilde{W}$ is accurate.
Starting from this upper bound, we search for $M_s$ by testing geometrically decreasing candidate values $\Gamma_j := (2\widetilde{W})^s 2^{-j}$.
Each candidate $\Gamma_j$ corresponds to a candidate $A_j := \widetilde{W} / \Gamma_j^{1/s} = 2^{j/s - 1}$ for $\Lambda_s$.
At iteration $j$, we evaluate the test statistic $Z_j := Z_{4s A_j}(\widetilde{W})$ using $4s A_j = O(A_j)$ expected samples, terminating at the first index $J$ where $Z_J \ge \Gamma_J / 2$.

When $\widetilde{W}$ is accurate, each test statistic has expectation $\E[Z_j \mid \widetilde{W}] = O(M_s)$.
Markov's inequality ensures that accepting an oversized candidate $\Gamma_j \gg M_s$ is unlikely.
Conversely, once $\Gamma_j$ falls below a sufficiently small constant multiple of $M_s$, the expected sample size $4s A_j$ makes the variance sufficiently small that Chebyshev's inequality guarantees acceptance with high constant probability.
\Cref{lem:moment-search} shows that when $\widetilde{W}$ is accurate, the selected values satisfy $\Gamma_J = O(M_s)$, hence $A_J = \Omega(\Lambda_s)$, with high constant probability.
Substituting $A_J$ for $\Lambda_s$ determines the final expected sample size:
\[
  q := D_s\left(
    1 + \min\left\{
      \frac{A_J}{\eps^{2/s}},
      \frac{A_J^{2s/(2s-1)}}{\eps^{2/(2s-1)}}
    \right\}
  \right),
\]
where $D_s \ge 1$ is a constant depending only on $s$.
\Cref{lem:moment-search} additionally establishes that the search terminates almost surely for every $\widetilde{W} > 0$ and that its expected cost is $O(\Lambda_s)$.

\begin{algorithm}[t]
\caption{Estimating $M_s$}
\label{alg:moment-estimation}
\begin{algorithmic}[1]
\Require Endpoint access to $G$; accuracy $0<\eps\le1$; integer $s\ge2$.
\Statex Fix a sufficiently small constant $\kappa>0$ and a sufficiently large constant $D_s\ge1$ depending only on $s$.
\State Run \Cref{alg:endpoint-capture} with accuracy $1/4$; let $\widetilde W$ be twice its output.
\For{$j=0,1,2,\ldots$}
  \State $\Gamma_j\gets(2\widetilde W)^s2^{-j}$, $A_j\gets\widetilde W/\Gamma_j^{1/s}=2^{j/s-1}$.
  \State Compute $Z_j \gets Z_{4sA_j}(\widetilde{W})$ by~\eqref{eq:moment-estimated-total-degree}.
  \If{$Z_j\ge \Gamma_j/2$}
    \State $J\gets j$ and \textbf{break};
  \EndIf
\EndFor
\State $\displaystyle q\gets D_s\!\left(1+\min\!\left\{\frac{A_J}{\eps^{2/s}},\frac{A_J^{2s/(2s-1)}}{\eps^{2/(2s-1)}}\right\}\right)$.
\State $\eta(q)\gets\kappa\max\{\eps,q^{-1/2}\}$.
\State Run \Cref{alg:endpoint-capture} with accuracy $\eta(q)$; let $\widehat W$ be twice its output.
\State Compute and \Return $Z_q(\widehat{W})$ by~\eqref{eq:moment-estimated-total-degree}.
\end{algorithmic}
\end{algorithm}

\subparagraph*{Choosing the accuracy for $W$.}
With the expected sample size $q$ fixed, it remains to choose the accuracy of the estimate $\widehat{W}$ of $W$ used in the final estimator.
While relative error $O(\eps)$ for $W$ gives sufficient accuracy, invoking \Cref{alg:endpoint-capture} with accuracy $\Theta(\eps)$ costs $\Omega(\eps^{-2})$ queries, which dominates $q$ when $q = o(\eps^{-2})$.
Such high accuracy is unnecessary because high degree vertices ($q d_v / W \gg 1$) appear with probability near $1$, and when $U$ approximates $W$ within a constant factor, their contributions $(d_v)^s / \pi_v(q, U) \approx (d_v)^s$ are largely insensitive to $U$.
\Cref{lem:moment-estimated-total-degree} quantifies this insensitivity: any estimate $U$ with $|U/W - 1| \le \eta \le 1/2$ incurs relative bias $\abs{\E[Z_q(U)] - M_s} / M_s = O(\eta \min\{1, \sqrt{q R_s(q/2)}\})$, which reduces to $O(\eta \min\{1, \eps \sqrt{q}\})$ once $R_s(q/2) \le \eps^2$ after a successful search for $q$.
Therefore setting $\eta(q) := \kappa \max\{\eps, q^{-1/2}\}$ for a small constant $\kappa > 0$ guarantees relative bias at most $\eps / 2$.
When $q = o(\eps^{-2})$, this relaxes the required accuracy to $\eta(q) \asymp q^{-1/2} \gg \eps$.
We invoke \Cref{alg:endpoint-capture} with accuracy $\eta(q)$, set $\widehat{W}$ to twice its output, and return $Z_q(\widehat{W})$.

For every $0 < \eps \le 1$, \Cref{alg:moment-estimation} is advice-free, terminates almost surely, returns a $(1\pm\eps)$-estimate of $M_s$ with probability at least $15/16$, and has the expected query complexity stated in \Cref{thm:intro-moment}.
The complete analysis appears in \Cref{sec:moment-detail}.

\section{Lower Bounds for Degree Moment Estimation}
\label{sec:moment-lower}
We prove \Cref{thm:intro-moment-lb} using two lower bound constructions for two regimes of $k\eps^2$ separated by a threshold $\kappa_s > 0$ depending only on $s$.
Fix an integer $s \ge 2$, a bound $k \ge 2$ on the matching number, and an accuracy parameter $0 < \eps \le \eps_0(s)$.
For each regime, we construct distributions $\mathsf{H}_0$ and $\mathsf{H}_1$ over graphs with $\nu(G) \le k$, and apply the lower bound strategy of \Cref{sec:edge-lower} with parameter $\theta(G) = M_s(G)$:
\begin{itemize}
  \item When $k\eps^2 \ge \kappa_s$, the \emph{stars and isolated edges} construction (\Cref{sec:moment-stars}) establishes an expected query lower bound of $\Omega(k^{1-1/s}/\eps^{2/s})$.
  \item When $k\eps^2 \le \kappa_s$, the \emph{paired centers} construction (\Cref{sec:moment-paired}) establishes an expected query lower bound of $\Omega(k^{1-1/(2s-1)}/\eps^{2/(2s-1)})$.
\end{itemize}
Within its regime, each of these bounds attains the minimum in \Cref{thm:intro-moment-lb} up to a constant factor depending only on $s$.
Thus choosing $\eps_0(s)$ sufficiently small for both constructions proves the theorem for all $k\ge2$.

\subsection{Stars and Isolated Edges}
\label{sec:moment-stars}
We first establish the lower bound $\Omega(k^{1-1/s}/\eps^{2/s})$ in the regime $k\eps^2 \ge \kappa_s$.

\subparagraph*{Construction of stars and isolated edges.}
Consider a graph consisting of $D$-stars ($K_{1,D}$) and isolated edges.
A $D$-star contributes $D^s + D$ to the moment $M_s$, whereas $D$ isolated edges contribute only $2D$.
Thus replacing $D$ isolated edges with a single $D$-star increases $M_s$ without altering the total edge count $m$.
Balancing the contributions of stars and isolated edges, we compare $G_0$ consisting of $a$ stars and $L$ isolated edges against $G_1$ consisting of $a + r$ stars and $L - rD$ isolated edges, where $L := aD^s$.
Specifically, we set
\[
  a := \lceil\eps^{-2}\rceil, \qquad
  r := \lceil 32\eps a \rceil, \qquad
  D := \left\lfloor\left(k / (4a)\right)^{1/s}\right\rfloor, \qquad
  L := aD^s.
\]
For sufficiently small $\eps_0(s)$ and sufficiently large $\kappa_s$, these parameters satisfy $D \ge 2$, $r \le a/2$, and $L - rD \ge L/2$.

We call $G_0$ and $G_1$ the base graphs and pad them with isolated vertices to share a common vertex set. 
The distributions $\mathsf{H}_0$ and $\mathsf{H}_1$ are induced by uniform random vertex relabelings of $G_0$ and $G_1$.
Both base graphs have matching number $\nu(G_i) \le a + L \le k$ and the same total edge count $m = aD + L = a(D^s + D)$.
Since $M_s(G_1) - M_s(G_0) = r(D^s - D) \ge 4\eps M_s(G_0)$, we have $(1+\eps)M_s(G_0) < (1-\eps)M_s(G_1)$, so the moments $M_s(G_0)$ and $M_s(G_1)$ are $(1\pm\eps)$-separated (\Cref{lem:stars-and-edges-parameters}).

\subparagraph*{Indistinguishability of stars and isolated edges.}
To prove indistinguishability, we consider an augmented oracle that returns each sampled edge together with the labeled component containing it.
Since this component determines the degrees of all its vertices, $q$ responses from this augmented oracle suffice to simulate any endpoint access algorithm making at most $q$ queries via randomized postprocessing.
Because both base graphs are uniformly relabeled and share the same edge count $m$, \Cref{lem:moment-stars-component-counts} shows that when the number of queries is drawn from a Poisson distribution, the responses carry no distinguishing information beyond the numbers of stars and isolated edges that are \emph{observed} (hit by at least one query).
Let $\mathsf P_i$ denote the joint distribution of the responses to $q$ queries to this oracle under $\mathsf H_i$.

Since $q$ queries observe each star with probability $p_q := 1 - (1 - D/m)^q$, the expected number of observed stars is $a p_q$ under $\mathsf{H}_0$ and $(a + r) p_q$ under $\mathsf{H}_1$.
For $q \le c_s a D^{s-1}$ with a sufficiently small constant $c_s > 0$, we have $qD/m \le c_s$, so $p_q = \Theta(q / (a D^{s-1}))$.
The number of observed stars has standard deviation $\Theta(\sqrt{a p_q}) = \Theta(\sqrt{q / D^{s-1}})$ under $\mathsf{H}_0$, while its expectations under the two hypotheses differ by $r p_q = \Theta(\eps a p_q) = \Theta(\eps q / D^{s-1})$.
Thus the ratio of this difference to the standard deviation is $\Theta(\eps \sqrt{a p_q}) = O(\sqrt{c_s})$. 
Formalizing this comparison, \Cref{lem:moment-stars-fixed-sample-tv} proves that $\TV(\mathsf{P}_0, \mathsf{P}_1) \le 1/4$ for all $1 \le q \le c_s a D^{s-1}$, which yields the worst case expected query lower bound $\Omega(aD^{s-1}) = \Omega(k^{1-1/s}/\eps^{2/s})$ as $a = \Theta(\eps^{-2})$ and $D = \Theta((k\eps^2)^{1/s})$. 
The complete proof appears in \Cref{sec:moment-stars-detail}.

\subsection{Paired Centers}
\label{sec:moment-paired}
To complete the proof of \Cref{thm:intro-moment-lb}, we establish the complementary lower bound $\Omega(k^{1-1/(2s-1)}/\eps^{2/(2s-1)})$ in the regime $k\eps^2 \le \kappa_s$.

\subparagraph*{Construction of paired centers.}
Let $k' := 2\lfloor k/2\rfloor \le k$.
We consider $k'$ \emph{center} vertices partitioned into $k'/2$ disjoint pairs.
To each center pair $(u,v)$, we attach $D \ge 2$ local \emph{gadgets}, giving $N := k'D/2$ gadgets in total.
Each gadget introduces two auxiliary vertices $x,y$ and selects between two configurations:
\begin{description}
  \item[Shared] Insert edges $\{u,x\}$ and $\{v,x\}$, so $x$ has degree $2$ and $y$ is isolated.
  \item[Separate] Insert edges $\{u,x\}$ and $\{v,y\}$, so both auxiliary vertices have degree $1$.
\end{description}
In both configurations, one edge is incident to $u$ and one to $v$.
Thus every center has degree $D$, and every graph has $m = k'D = 2N$ edges.
Switching a gadget from separate to shared increases the contribution of its auxiliary vertices to $M_s$ from $2$ to $2^s$.
The centers form a vertex cover of size $k'$, and $D \ge 2$ lets us match both centers in each pair using distinct gadgets, so $\nu(G) = k' \le k$.
For $0 < \beta \le 1/4$, we define distributions $\mathsf H_0$ and $\mathsf H_1$ over graphs on the same fixed set of $k' + 2N$ labeled vertices: independently for each of the $N$ gadgets, $\mathsf H_0$ chooses \emph{shared} with probability $1/2-\beta$ and \emph{separate} otherwise, while $\mathsf H_1$ chooses \emph{shared} with probability $1/2+\beta$ and \emph{separate} otherwise.

Let $Z$ denote the number of \emph{shared} gadgets. 
The moment evaluates to $M_s = k'D^s + 2(N-Z) + 2^s Z = k'D^s + 2N + (2^s-2)Z = \Theta(k'D^s)$. 
\Cref{lem:paired-centers-moments} proves that the expected moments under the two distributions differ by $2(2^s-2)\beta N$ while their standard deviations are $O(\sqrt{N})$.
We choose $D$ so that the additive error $\eps M_s = \Theta(\eps k'D^s)$ allowed by a $(1\pm\eps)$-estimate is comparable to $\sqrt{N}$, and choose $\beta$ so that the difference of the expected moments is a sufficiently large multiple of $\sqrt{N}$.
Since $N = k'D/2$, balancing $\eps k' D^s \asymp \sqrt{k'D}$ gives $D = \Theta((k'\eps^2)^{-1/(2s-1)})$.
Specifically, we set
\[
  D := \max\!\left\{2,\left\lceil(k'\eps^2)^{-1/(2s-1)}\right\rceil\right\},
  \qquad
  \beta := \frac{\beta_s}{\sqrt{N}},
\]
where $\beta_s$ is a sufficiently large constant.
For sufficiently small $\eps_0(s)$, \Cref{lem:moment-paired-separation} verifies $\beta \le 1/4$ and gives $(1\pm\eps)$-separated sets $I_0, I_1$ of moment values such that $\Prob_{\mathsf H_i}[M_s \in I_i] \ge 99/100$ for each $i \in \{0,1\}$.

\subparagraph*{Indistinguishability of paired centers.}
Since every gadget contributes two edges under either configuration, a uniform random edge can be sampled by first selecting one of the $N$ gadgets uniformly at random, and then choosing one of its two edges uniformly.
To prove indistinguishability, we consider an augmented oracle that reveals the selected gadget and its configuration (shared or separate).
Because every center has degree $D$ and the degrees of returned auxiliary vertices are determined by the revealed configuration ($2$ for shared, $1$ for separate), $q$ responses from this augmented oracle suffice to simulate any endpoint access algorithm making at most $q$ queries via randomized postprocessing.

Let $\mathsf P_i$ denote the joint distribution of the responses to $q$ queries to this oracle under $\mathsf H_i$.
Conditional on the history, a newly selected gadget is shared with probability $1/2-\beta$ under $\mathsf H_0$ and $1/2+\beta$ under $\mathsf H_1$, while repeat selections of a gadget reveal no new information.
Thus each distinct selected gadget contributes $\KL(\operatorname{Ber}(1/2-\beta) \| \operatorname{Ber}(1/2+\beta)) = O(\beta^2)$, and the total divergence across $q$ queries is $O(q\beta^2)$.
Pinsker's inequality then gives $\TV(\mathsf P_0, \mathsf P_1) = O(\beta\sqrt{q}) = O(\sqrt{q/N})$ (\Cref{lem:moment-paired-query-tv}).
Setting $q := \lfloor c_s N \rfloor$ for a sufficiently small constant $c_s > 0$ ensures $\TV(\mathsf P_0, \mathsf P_1) \le 1/4$, yielding the worst case expected query lower bound $\Omega(N) = \Omega(k^{1-1/(2s-1)}/\eps^{2/(2s-1)})$.
The complete proof appears in \Cref{sec:moment-paired-detail}.

\section{Conclusion}
In this paper, we presented advice-free algorithms for estimating the edge count $m$ and the degree moments $M_s$ under endpoint access, showing that the query complexity of both problems is governed by the matching number $\nu$ rather than the vertex count $n$.
For degree moments, our algorithm also improves the dependence on $\eps$ polynomially over prior work.
Matching these upper bounds, our lower bounds prove that the resulting query complexities are optimal up to constant factors in the worst case over graphs of bounded matching number.

\subparagraph*{Future work.}
Our results suggest two compelling directions for future work.
While the matching number characterizes the worst case over $\mathcal{G}_{\le k}$, our algorithms can run substantially faster on individual graphs with favorable degree concentration (\Cref{sec:moment-instance-sensitive}).
A natural direction is thus to identify structural parameters beyond the matching number that capture this behavior, and to design advice-free algorithms that are optimal in terms of them.
Moreover, while our optimality guarantees establish the worst case query complexity at constant success probability, characterizing the exact complexity tradeoff in terms of failure probability $\delta$ also remains an intriguing open problem.

\subparagraph*{Disclosure of AI usage.}
The authors disclose that generative AI tools (specifically OpenAI GPT, Anthropic Claude and Google Gemini) were used during the preparation of this manuscript.
While all research directions, algorithm designs, structural insights, and proof architectures were conceived and directed by the authors, AI tools assisted with text polishing, intermediate algebraic derivations, evaluating explicit constants, and checking proofs and statements about prior work.
The authors assume full and sole responsibility for all content and results in this manuscript.

\newpage
\bibliography{references}

\newpage
\appendix

\section{Analysis of Algorithm~\ref{alg:endpoint-capture}}
\label{sec:endpoint-capture-detail}
In this section, we provide the omitted analysis of \Cref{alg:endpoint-capture}, completing the proof of \Cref{thm:endpoint-capture}.
Throughout this section, fix an accuracy parameter $0 < \eps \le 1$, set $R := \lceil 64\eps^{-2} \rceil$, and write $\nu := \nu(G)$, $W := 2m$, $D(S) := \sum_{v\in S} d_v$, and $p_S := D(S)/W$.
At a reached iteration with sample size $q$, the first $q$ fair endpoint samples define $S$, and the second $q$ samples give the hit count $H$.
The algorithm queries the degrees of the vertices in $S$ to compute $D(S)$ only when $H \ge R$, then draws more fair endpoint samples until observing $R$ hits in $S$, where $T$ denotes the number of trials required to observe $R$ hits.
The history records all oracle responses and private random choices.
All probabilities and expectations in the lemmas below are conditional on an arbitrary history up to the start of the iteration under consideration, unless stated otherwise.

\subparagraph*{Accuracy and ratio moments.}
\begin{lemma}
\label{lem:capture-return-law}
Fix any history up to the degree queries at the first iteration where $H\ge R$.
Let $T$ be the number of trials required to observe $R$ hits in $S$, and let $\widehat m := D(S)T/(2R)$.
Then $T<\infty$ almost surely, and $Z:=\widehat m/m=p_ST/R$ satisfies $\E[Z]=1$ and
\[
  \Prob\!\left[\abs{Z - 1} \ge \eps\right] \le \frac{1}{64}, \qquad
  \E[Z^2]=O(1), \qquad
  \E[Z^{-2}]=O(1).
\]
\end{lemma}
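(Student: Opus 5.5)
Conditioned on the history, $S$ is fixed and nonempty (since $H\ge R\ge1$ forces some sample to lie in $S$), so $p_S>0$. The trials after the degree queries are fresh fair endpoint samples, independent of the history, and each lies in $S$ independently with probability $p:=p_S$. Hence $T$ is a sum of $R$ i.i.d.\ geometric random variables with success probability $p$; that is, $T$ is negative binomial. In particular $T<\infty$ almost surely, and $\E[T]=R/p$. Since $\widehat m/m=D(S)T/(2Rm)=p_ST/R$, we get $\E[Z]=1$.

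For the deviation bound we use Chebyshev. Each geometric variable has variance $(1-p)/p^2\le 1/p^2$, so $\Var[Z]=(p/R)^2\cdot R(1-p)/p^2\le 1/R$. Thus $\Prob[\abs{Z-1}\ge\eps]\le 1/(R\eps^2)\le 1/64$, using $R\ge 64\eps^{-2}$. The same computation gives $\E[Z^2]=1+\Var[Z]\le 2$.

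The main obstacle is $\E[Z^{-2}]$, because $Z$ can be small when $T$ is small. Note that $T\ge R$ always, so $Z\ge p$; this lower bound is useless when $p$ is tiny. Instead we bound the lower tail directly. We have $Z<x$ exactly when $T<xR/p$, which means at least $R$ successes occur among the first $\lfloor xR/p\rfloor$ trials. By Markov's inequality on the binomial count of successes, this has probability at most $\lfloor xR/p\rfloor p/R\le x$. A Chernoff bound gives better, but a sharper form is needed here: the event requires $R$ successes when the mean is $\le xR$, so $\Prob[Z<x]\le \binom{\lfloor xR/p\rfloor}{R}p^R\le (exR/R)^R=(ex)^R$.

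Write $\E[Z^{-2}]=\int_0^\infty \Prob[Z^{-2}>y]\,dy$. For $y\le 4e^2$ we bound the integrand by $1$. For $y>4e^2$ we set $x=y^{-1/2}$, so $ex<1/2$. If $R\ge3$, then $\int_{4e^2}^\infty (e y^{-1/2})^R\,dy<\infty$ and is bounded uniformly in $R$ (the integrand is at most $e^3y^{-3/2}$). For $R\ge 64$, which always holds here, the total is $O(1)$.

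Hence all four claims follow from the negative binomial structure of $T$ alone, independently of $S$ and the history.
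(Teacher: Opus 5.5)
Your proof is correct. For $T<\infty$, $\E[Z]=1$, the Chebyshev bound and $\E[Z^2]=1+(1-p_S)/R$, it follows the paper's argument essentially verbatim. The routes differ only for $\E[Z^{-2}]$.

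The paper uses an exact negative binomial identity. It reindexes the mass function, so that the summands become the law of the trial count for $R-2$ hits, and obtains
\[
  \E\!\left[\frac{1}{(T-1)(T-2)}\right]=\frac{p_S^2}{(R-1)(R-2)}.
\]
Combined with $T^{-2}\le 1/((T-1)(T-2))$, this gives $\E[Z^{-2}]\le R^2/((R-1)(R-2))=1+O(1/R)$.

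You instead prove the lower-tail bound
\[
  \Prob[Z<x]\le\binom{\lfloor xR/p\rfloor}{R}p^R\le(ex)^R
\]
by a union bound over $R$-subsets of trials. You then integrate it with the layer-cake formula, which gives the constant $4e^2+e^2=5e^2$.

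The paper's route is shorter and essentially sharp. Yours is cruder but more generic: it needs only a polynomial lower-tail estimate for the stopping time, not an exact closed-form identity for the mass function. Both arguments use $R\ge3$ at the same point.

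One small imprecision: when $xR/p$ is an integer, $\{Z<x\}$ is not \emph{exactly} the event of $R$ hits within $\lfloor xR/p\rfloor$ trials. It is, however, contained in that event, and containment is all you use.
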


\begin{proof}
Because $q\ge1$ and every sampled endpoint has positive degree, we have $p_S=D(S)/(2m)>0$.
As the trials are independent fair endpoint samples, $T$ follows a negative binomial distribution:
\[
  \Prob[T = t] = \binom{t-1}{R-1} p_S^R (1 - p_S)^{t-R}, \qquad t = R, R+1, R+2, \ldots.
\]
Because $p_S>0$, $T$ is finite almost surely.
The mean and variance of $T$ are
\[
  \E[T] = \frac{R}{p_S}, \qquad \Var[T] = \frac{R(1 - p_S)}{p_S^2}.
\]
Since $Z=p_ST/R$, we have $\E[Z]=1$.
Applying Chebyshev's inequality with $R=\lceil64\eps^{-2}\rceil$, we obtain
\[
  \Prob\!\left[\abs{Z - 1} \ge \eps\right] \le \frac{\Var[Z]}{\eps^2} = \frac{1 - p_S}{R \eps^2} \le \frac{1}{64}.
\]
Moreover, the second moment satisfies
\[
  \E[Z^2]=\Var[Z]+(\E[Z])^2=\frac{1-p_S}{R} + 1=O(1).
\]
For the inverse second moment, the negative binomial mass function gives
\[
  \E\!\left[\frac{1}{(T-1)(T-2)}\right]
  =\frac{p_S^2}{(R-1)(R-2)}
    \sum_{t=R}^{\infty}\binom{t-3}{R-3}p_S^{R-2}(1-p_S)^{t-R}
  =\frac{p_S^2}{(R-1)(R-2)}.
\]
The summands are the probabilities that $R-2$ hits require $t-2$ trials, so their sum is one.
Since $T\ge R\ge64>2$, we have $T^{-2}\le((T-1)(T-2))^{-1}$, and hence
\[
  \E[Z^{-2}]=\frac{R^2}{p_S^2}\E[T^{-2}]
  \le\frac{R^2}{(R-1)(R-2)}=O(1). \qedhere
\]
\end{proof}

\subparagraph*{Continuation probability.}
We bound the mean and variance of $H$, then apply Chebyshev's inequality to control the probability of continuing to the next iteration.

\begin{lemma}
\label{lem:captured-endpoint-mass}
Consider any reached iteration with sample size $q$.
Let $S$ be the set of distinct vertices in the first $q$ fair endpoint samples, and let $H$ count hits in $S$ among the second batch of $q$ samples.
Then
\[
  \E[p_S]\ge\frac{q}{4(2\nu+q)},
  \qquad
  \E[H]\ge\frac{q^2}{4(2\nu+q)},
  \qquad \Var[H]\le2\E[H].
\]
\end{lemma}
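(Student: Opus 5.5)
Let $C$ be the set of $2\nu$ endpoints of a maximum matching, which is a vertex cover. Every edge has an endpoint in $C$, so $p_C=D(C)/W\ge 1/2$. I will lower bound $\E[p_S]$ by considering only the contribution of $C$:
\[
  \E[p_S]\ge\sum_{v\in C}p_v\,\Prob[v\in S]=\sum_{v\in C}p_v\bigl(1-(1-p_v)^q\bigr).
\]
For each $v$ I use the elementary bound $1-(1-p)^q\ge 1-e^{-qp}\ge qp/(1+qp)$. This gives $\E[p_S]\ge\sum_{v\in C}f(p_v)$ with $f(x):=qx^2/(1+qx)$. The function $f$ is convex on $[0,\infty)$; its second derivative is $2q/(1+qx)^3>0$. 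By Jensen over the $2\nu$ vertices of $C$, with mean value $\bar p:=p_C/(2\nu)\ge 1/(4\nu)$, we get $\E[p_S]\ge 2\nu f(\bar p)$. Since $f$ is increasing, $2\nu f(\bar p)\ge 2\nu f(1/(4\nu))$. Evaluating,
\[
  2\nu\cdot\frac{q/(16\nu^2)}{1+q/(4\nu)}=\frac{q}{2(4\nu+q)}\ge\frac{q}{4(2\nu+q)},
\]
where the last step uses $2(2\nu+q)\ge 4\nu+q$. The case $\nu$ small is fine, since $\nu\ge1$ whenever $m>0$. The only delicate point is choosing a convex lower bound for the capture probability so that Jensen applies uniformly; $x\mapsto qx^2/(1+qx)$ does this without case analysis.

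For $H$, condition on $S$. The second batch consists of $q$ independent samples, each landing in $S$ with probability $p_S$, so $H\mid S\sim\operatorname{Bin}(q,p_S)$. Hence $\E[H]=q\E[p_S]\ge q^2/(4(2\nu+q))$.

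For the variance, I use the law of total variance:
\[
  \Var[H]=\E[qp_S(1-p_S)]+\Var[qp_S]\le \E[H]+q^2\Var[p_S].
\]
It then suffices to show $q^2\Var[p_S]\le \E[H]=q\E[p_S]$, that is, $\Var[p_S]\le \E[p_S]/q$. I expect this to be the main obstacle, since the indicators $\Ind[v\in S]$ are negatively correlated under a fixed sample size $q$. To handle it, write $p_S=\sum_v p_v\Ind[v\in S]$. For $u\ne v$, the indicators $\Ind[u\in S]$ and $\Ind[v\in S]$ are negatively correlated, because $\Prob[u,v\notin S]=(1-p_u-p_v)^q\le(1-p_u)^q(1-p_v)^q$. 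Therefore the covariance terms are nonpositive, and
\[
  \Var[p_S]\le\sum_v p_v^2\Prob[v\in S]\le\sum_v p_v^2\cdot qp_v\cdot\frac{1}{qp_v}\cdots
\]
The cleaner route is the following. Either $qp_v\le1$, in which case $\Prob[v\in S]\le qp_v$, or $p_v\ge 1/q$. In both cases $p_v\le \max\{p_v,\dots\}$; more simply, $p_v^2\Prob[v\in S](1-\Prob[v\in S])\le p_v\Prob[v\in S]/q$ holds whenever $p_v\le 1/q$. When $p_v>1/q$, I use $1-\Prob[v\in S]=(1-p_v)^q\le e^{-qp_v}$ together with $qp_v e^{-qp_v}\le 1$, which again gives $p_v^2\Prob[v\in S](1-\Prob[v\in S])\le p_v\Prob[v\in S]/q$. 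Summing over $v$ yields $\Var[p_S]\le\E[p_S]/q$, and hence $\Var[H]\le 2\E[H]$.
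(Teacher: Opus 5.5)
Your proof is correct and follows essentially the same route as the paper. For the mean, applying Jensen to the convex $x\mapsto qx^2/(1+qx)$ on the matching-endpoint cover gives exactly the paper's Cauchy--Schwarz bound $qp_C^2/(|C|+qp_C)$. For the variance, you use negative correlation, a per-vertex bound $p_v^2\Var[\Ind[v\in S]]\le p_v\Prob[v\in S]/q$, and the law of total variance, as the paper does; the only change is using $qp_ve^{-qp_v}\le 1$ where the paper uses $qp_v(1-p_v)^q\le 1-(1-p_v)^q$. Your case split on $p_v\le 1/q$ is unnecessary, since $qp_ve^{-qp_v}\le 1/e$ for every $v$, and you should delete the abandoned display line ending in $\cdots$.
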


\begin{proof}
Let $C^\star$ be the set of endpoints of a maximum matching.
Since $C^\star$ is a vertex cover of size $2\nu$ (as an uncovered edge would extend the matching), its endpoint mass satisfies $p_{C^\star} \ge 1/2$.
For each $v\in V$, let $I_v:=\Ind[v\in S]$, so that $p_S=\sum_{v\in V}p_vI_v$ and $\E[I_v]=1-(1-p_v)^q$.
Write $\mu:=\E[p_S]$.
Restricting the sum to $C^\star$, using $1-(1-x)^q\ge1-e^{-qx}\ge qx/(1+qx)$, and applying Cauchy--Schwarz bounds the expected endpoint mass as
\[
  \mu \ge \sum_{v \in C^\star} p_v \bigl(1 - (1-p_v)^q\bigr)
  \ge q \sum_{v \in C^\star} \frac{p_v^2}{1 + q p_v}
  \ge \frac{q p_{C^\star}^2}{|C^\star| + q p_{C^\star}}
  \ge \frac{q}{4(2\nu + q)}.
\]

To bound the variance of $p_S$, we use the nonpositive covariance of $I_u$ and $I_v$ for distinct vertices $u,v\in V$:
\[
  \Cov[I_u, I_v] = (1 - p_u - p_v)^q - (1 - p_u)^q (1 - p_v)^q \le 0,
\]
which follows from $1-p_u-p_v\le(1-p_u)(1-p_v)$.
Since $q p_v(1-p_v)^q\le p_v\sum_{i=0}^{q-1}(1-p_v)^i=1-(1-p_v)^q$, we obtain
\[
  \Var[p_S] \le \sum_{v \in V} p_v^2 \Var[I_v]
  \le \sum_{v \in V} p_v^2 (1 - p_v)^q
  \le \frac{1}{q} \sum_{v \in V} p_v \bigl(1 - (1 - p_v)^q\bigr) = \frac{\mu}{q}.
\]
The second batch is independent of the first, so conditional on $S$, we have $H\sim\operatorname{Bin}(q,p_S)$.
Consequently, $\E[H]=q\mu\ge q^2/(4(2\nu+q))$, and the law of total variance gives
\[
  \Var[H]
  = \E[\Var[H\mid S]]+\Var[\E[H\mid S]]
  \le q\mu+q^2\Var[p_S]
  \le 2q\mu=2\E[H]. \qedhere
\]
\end{proof}

\begin{corollary}
\label{lem:capture-hit-threshold}
At any reached iteration with sample size $q\ge Q:=64(R+\sqrt{\nu R})$, $\Prob[H<R]\le1/4$.
\end{corollary}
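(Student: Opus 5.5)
We apply Chebyshev's inequality to $H$, using the mean and variance bounds of \Cref{lem:captured-endpoint-mass}. Write $\mu_H:=\E[H]$. The plan has two steps: first show $\mu_H\ge 2R$ whenever $q\ge Q$; then bound the lower tail. Given $\mu_H\ge 2R$, the event $H<R$ forces $\abs{H-\mu_H}>\mu_H/2$. Together with $\Var[H]\le 2\mu_H$ this gives
\[
  \Prob[H<R]\le\frac{\Var[H]}{(\mu_H/2)^2}\le\frac{8}{\mu_H}.
\]
So it suffices to show that $\mu_H$ is large; in fact we aim for $\mu_H\ge 32$ and $\mu_H\ge 2R$ simultaneously.

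For the mean, \Cref{lem:captured-endpoint-mass} gives $\mu_H\ge q^2/(4(2\nu+q))$. We split into two cases according to which term dominates the denominator.

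\begin{itemize}
  \item If $q\ge 2\nu$, the denominator is at most $8q$, so $\mu_H\ge q/8\ge 64R/8=8R$.
  \item If $q<2\nu$, the denominator is at most $16\nu$, so $\mu_H\ge q^2/(16\nu)$. Since $q\ge 64\sqrt{\nu R}$, we get $q^2\ge 4096\,\nu R$, hence $\mu_H\ge 256R$.
\end{itemize}

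In either case $\mu_H\ge 8R$. Since $R\ge 64$, this yields $\mu_H\ge 512$.

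Substituting into the Chebyshev bound gives $\Prob[H<R]\le 8/512=1/64\le 1/4$. There is no real obstacle here. The only care needed is that the lemma's bounds are conditional on the history up to the start of the iteration, and the same conditioning carries through the Chebyshev step unchanged. The constant $64$ in $Q$ leaves ample slack.
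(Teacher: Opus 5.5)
Your proof is correct and follows essentially the same route as the paper: first show $\E[H]\ge 8R$ from the mean bound of \Cref{lem:captured-endpoint-mass}, then apply Chebyshev with $\Var[H]\le 2\E[H]$ to get $\Prob[H<R]\le 8/\E[H]\le 1/R\le 1/64$. The one difference is that you establish $\E[H]\ge 8R$ by splitting into the cases $q\ge 2\nu$ and $q<2\nu$, whereas the paper uses the single inequality $32qR+64\nu R\le q^2/2+q^2/64<q^2$; the two are equivalent.
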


\begin{proof}
Since the assumption $q \ge Q$ implies $q\ge64R$ and $q\ge64\sqrt{\nu R}$, we have
\[
  32qR+64\nu R\le\frac{q^2}{2}+\frac{q^2}{64}<q^2.
\]
The mean bound in \Cref{lem:captured-endpoint-mass} therefore implies
\[
  \E[H]\ge\frac{q^2}{4(2\nu+q)}\ge8R.
\]
Combining Chebyshev's inequality with the variance bound in \Cref{lem:captured-endpoint-mass} gives
\[
  \Prob[H<R]
  \le\frac{\Var[H]}{(\E[H]-R)^2}
  \le\frac{2\E[H]}{(\E[H]-R)^2}
  \le\frac{8}{\E[H]}
  \le\frac1R<\frac14,
\]
where we used $\E[H]-R\ge\E[H]/2$ and $R\ge64$.
\end{proof}

\subparagraph*{Expected query complexity.}
Obtaining $R$ hits in $S$ can be expensive when $p_S$ is small, but the probability of observing $H\ge R$ is correspondingly small.

\begin{lemma}
\label{lem:capture-fixed-scale-cost}
At any reached iteration with sample size $q$, let $T$ count the trials required to observe $R$ hits in $S$, with $T:=0$ if $H<R$.
Then $\E[T\mid S]\le q$ for every possible $S$, and the expected query cost of the entire iteration is at most $4q$.
\end{lemma}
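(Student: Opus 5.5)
The plan is to condition on $S$, the set of distinct vertices from the first batch, and use the fact that the second batch is independent of the first. Given $S$, we have $H\sim\operatorname{Bin}(q,p_S)$, so $\E[H\mid S]=qp_S$. Markov's inequality then gives $\Prob[H\ge R\mid S]\le qp_S/R$. This holds trivially when $p_S=0$, which in fact cannot happen since $q\ge1$ and every sampled vertex has positive degree.

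Next, consider the event $H\ge R$. Here the final sampling phase uses fresh fair endpoint samples, independent of both batches. By the negative binomial computation in \Cref{lem:capture-return-law}, its conditional expected length is exactly $R/p_S$, and this expectation does not depend on $H$ beyond the event $H\ge R$. Since $T=0$ on $H<R$, the tower property gives
\[
  \E[T\mid S]=\Prob[H\ge R\mid S]\cdot\frac{R}{p_S}\le\frac{qp_S}{R}\cdot\frac{R}{p_S}=q .
\]
The one point to state carefully is that conditioning on $H$ (a function of the second batch) does not change the law of the later trials. This follows because the trials are fresh independent samples, so the conditional expectation factors as claimed.

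Finally, we tally the cost of the iteration. Each fair endpoint sample costs one $\RE()$ call, plus a private coin flip that is not a query. The two batches therefore cost $2q$ queries. The degree queries, made only when $H\ge R$, cost $|S|\le q$. The final trials contribute $T$ queries, with $\E[T\mid S]\le q$. Summing and averaging over $S$ bounds the expected cost of the iteration by $2q+q+q=4q$.

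I expect no real obstacle in this argument. The only care needed is the conditional independence justification in the second step.
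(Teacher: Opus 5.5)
Your proposal is correct and follows essentially the same argument as the paper. You condition on $S$, apply Markov's inequality with $\E[H\mid S]=qp_S$, multiply by the negative binomial mean $R/p_S$ (the paper writes this as $\E[T\mid S,H]=\Ind[H\ge R]\cdot R/p_S$), and tally $2q+q+q=4q$.
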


\begin{proof}
Every possible $S$ has $p_S>0$ as $q \ge 1$.
If $H<R$, no further samples are drawn; otherwise, the algorithm uses independent fair endpoint samples to observe $R$ hits in $S$. 
Thus $\E[T\mid S,H]=\Ind[H\ge R] \cdot R / p_S$. 
Applying $\E[H\mid S]=qp_S$ and Markov's inequality yields
\[
  \E[T\mid S]=\Prob[H\ge R\mid S]\frac{R}{p_S}
  \le\frac{qp_S}{R}\cdot\frac{R}{p_S}=q.
\]
Taking the expectation over $S$ gives $\E[T]\le q$.
The two initial batches cost $2q$ random edge queries, computing $D(S)$ when $H\ge R$ costs $\abs{S}\le q$ degree queries, and the final sampling step uses at most $q$ queries in expectation.
The expected cost of the entire iteration is therefore at most $4q$.
\end{proof}
We now combine the continuation and cost bounds to complete the proof of \Cref{thm:endpoint-capture}.

\begin{proof}[Proof of \Cref{thm:endpoint-capture}]
Let $q_\star$ be the smallest power of two satisfying $q_\star \ge Q = 64(R + \sqrt{\nu R})$, so that $q_\star < 2Q$.
By \Cref{lem:capture-fixed-scale-cost}, the expected contribution of each iteration with sample size $q$ is at most $4q$ times the probability of reaching it.
In particular, the iterations before $q_\star$ contribute at most $\sum_{j:\,2^j<q_\star}4\cdot2^j<4q_\star$ queries in expectation.

For every integer $\ell \ge 0$, reaching $2^{\ell+1}q_\star$ requires reaching $2^\ell q_\star$ and observing $H<R$ there.
By \Cref{lem:capture-hit-threshold}, the latter event has conditional probability at most $1/4$ for every history reaching that iteration.
Averaging over these histories gives
\[
  \Prob[\text{reaches }2^{\ell+1}q_\star]
  \le \frac14\Prob[\text{reaches }2^\ell q_\star].
\]
Since $\Prob[\text{reaches }q_\star]\le1$, induction yields
\begin{equation}
  \Prob[\text{reaches }2^\ell q_\star]\le4^{-\ell}, \qquad \ell\ge0.
  \label{eq:capture-reach}
\end{equation}
Thus the total expected query cost satisfies
\[
  \E[\text{total query cost}]
  \le 4q_\star + \sum_{\ell=0}^{\infty} 4^{-\ell} \cdot 4\left(2^\ell q_\star\right)
  = 4q_\star + 4q_\star \sum_{\ell=0}^{\infty} 2^{-\ell}
  = O(q_\star) = O(Q).
\]
Substituting $R=\lceil64\eps^{-2}\rceil$ into $Q=64(R+\sqrt{\nu R})$, we obtain a total expected query complexity of $O\left(R + \sqrt{\nu R}\right) = O\!\left(1/\eps^2 + \sqrt{\nu}/\eps\right)$.
By \eqref{eq:capture-reach}, the probability of reaching $2^\ell q_\star$ tends to zero as $\ell\to\infty$, so an iteration with $H\ge R$ occurs eventually almost surely.
Conditioning on the history up to the degree queries at this terminating iteration, \Cref{lem:capture-return-law} guarantees that the final sampling step terminates almost surely and satisfies the accuracy and moment bounds.
Averaging over all such histories yields $\E[\widehat m]=m$, the $(1\pm\eps)$-estimate guarantee with probability at least $63/64$, and the unconditional moment bounds $\E[(\widehat m/m)^2]=O(1)$ and $\E[(m/\widehat m)^2]=O(1)$.
\end{proof}

\section{Analysis of Algorithm~\ref{alg:sampled-cover}}
\label{sec:sampled-cover-detail}
In this section, we provide the omitted analysis of \Cref{alg:sampled-cover}, completing the proof of \Cref{thm:sampled-cover}.
We prove the theorem by bounding both the uncovered edge count $r(C)$ and the error in estimating the internal edge count $h(C)$ by $\eps m/2$.
Throughout this section, write $m:=|E(G)|>0$ and $\nu:=\nu(G)$.

\subparagraph*{Constructing $C$ by greedy matching.}
We first establish that the greedy matching loop is unlikely to terminate at any scale $b$ while more than $\eps m / 2$ edges remain uncovered by $C = V(M_b)$.

\begin{lemma}
\label{lem:sampled-approx-cover}
For every $0 < \eps \le 1$, the greedy matching loop in \Cref{alg:sampled-cover} terminates at a scale $b \le 2\nu$ on every execution using $O(\nu/\eps)$ random edge queries. 
The vertex set $C$ satisfies $|C| < 2b$ and
\[
  \Prob\left[r(C) > \frac{\eps m}{2}\right] < \frac{1}{12}.
\]
\end{lemma}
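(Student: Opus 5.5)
The proof has two parts: a deterministic part giving termination, the scale bound and the query count, and a probabilistic part giving the uncovered-edge bound.

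\emph{Deterministic part.} Each $M_b$ is a matching in $G$, so $|M_b|\le\nu$ on every execution. Let $b_\star$ be the first power of two with $b_\star>\nu$. At scale $b_\star$ the test $|M_{b_\star}|<b_\star$ must succeed, so the loop breaks at some scale $b\le b_\star\le 2\nu$; when $\nu=0$ this is impossible since $m>0$ forces $\nu\ge1$. The sizes satisfy $|C|=2|M_b|\le 2(b-1)<2b$. The number of random edge queries is
\[
\sum_{j:\,2^j\le b}\lceil 16\cdot 2^j/\eps\rceil\le \sum_{j:\,2^j\le 2\nu}\bigl(16\cdot2^j/\eps+1\bigr)=O(\nu/\eps),
\]
where the additive $\log$ term is absorbed since $\log\nu\le\nu$ and $\eps\le1$.

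\emph{Failure at a fixed scale.} Fix a scale $b$ and consider the failure event $F_b$: the loop stops at $b$, i.e.\ $|M_b|<b$, and $r(V(M_b))>\eps m/2$. Within scale $b$, condition on the history before the $i$-th draw. If the current matching $M$ leaves more than $\eps m/2$ edges with both endpoints outside $V(M)$, the next draw is such an edge with probability greater than $\eps/2$, and then it is inserted. The uncovered set only shrinks as $M$ grows, so on $F_b$ every one of the $q_b$ draws occurred while $r>\eps m/2$. I would couple each draw with an independent $\operatorname{Ber}(\eps/2)$ variable that is dominated by the insertion indicator whenever the current $r$ exceeds $\eps m/2$; this can be done by using a uniform variable $U_i$ per draw. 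On $F_b$ the number of insertions therefore dominates $B\sim\operatorname{Bin}(q_b,\eps/2)$. Hence
\[
\Prob[F_b]\le\Prob[B<b]\le\Prob[B\le b].
\]
Since $\E[B]=q_b\eps/2\ge 8b$, the Chernoff bound $\Prob[B\le(1-\delta)\mu]\le e^{-\delta^2\mu/2}$ with $\delta=7/8$ and $\mu\ge8b$ gives $\Prob[F_b]\le e^{-(49/64)\cdot 4b}\le e^{-3b}$. Each scale uses fresh samples, so this bound holds conditionally on any history reaching scale $b$, and so unconditionally as well.

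\emph{Union bound.} The event $r(C)>\eps m/2$ is contained in $\bigcup_{j\ge0}F_{2^j}$, so
\[
\Prob\bigl[r(C)>\eps m/2\bigr]\le\sum_{j\ge0}e^{-3\cdot2^j}\le e^{-3}+e^{-6}+\dots<1/12,
\]
since $e^{-3}\approx0.0498$.

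The main obstacle is making the coupling rigorous. The insertion probability depends on the evolving matching, so one must argue stochastic domination of an adaptive process, and justify it only on the event that coverage stays poor. The stopping-time construction with $U_i$ above handles this: set the $i$-th Bernoulli variable to $\Ind[U_i\le\eps/2]$, and realize the drawn edge so that $U_i\le\eps/2$ implies an uncovered edge is drawn whenever the current $r>\eps m/2$.
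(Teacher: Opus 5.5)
Your proof is correct and follows the paper's argument essentially step for step. You use the same deterministic bound via the first power of two $b_\star>\nu$, and the same per-draw uniform-variable coupling: on the failure event every $U_i\le\eps/2$ forces an insertion, so stopping at scale $b$ implies $\operatorname{Bin}(q_b,\eps/2)<b$, which the same Chernoff bound (using $q_b\eps/2\ge 8b$) caps at $e^{-3b}$, followed by the same union bound $\sum_{j\ge0}e^{-3\cdot 2^j}<1/12$ over doubling scales.
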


\begin{proof}
Fix a scale $b$ reached by the algorithm and condition on the preceding history.
Within the iteration at scale $b$, let $M_b(i)$ denote the matching after $i$ samples ($0 \le i \le q_b$), with $M_b(0) = \varnothing$ and the completed matching $M_b := M_b(q_b)$, and write $r_i := r(V(M_b(i)))$.
At step $i + 1$, the sampled edge enlarges the current matching if and only if it is one of the $r_i$ uncovered edges.
The sequence $r_i$ is nonincreasing because matched vertices remain matched throughout the iteration.

To compare the number of insertions with a binomial random variable, we consider the following equivalent sampling process using independent $U_1,\dots,U_{q_b}\sim\operatorname{Unif}(0,1)$.
At step $i+1$, choose an uncovered edge if $U_{i+1}\le r_i/m$ and a covered edge otherwise, uniformly within the chosen class.
This generates a uniform random edge independent of the earlier steps, and the edge is inserted when it is uncovered.
On the event $r(V(M_b)) > \eps m / 2$, every step has $r_i/m > \eps/2$, so every occurrence of $U_{i+1}\le\eps/2$ causes an insertion.
Consequently,
\[
  \Prob\left[|M_b|<b,\ r(V(M_b))>\frac{\eps m}{2}\right]
  \le \Prob\left[\operatorname{Bin}(q_b,\eps/2)<b\right]
  \le e^{-3b},
\]
where the last inequality follows from the multiplicative Chernoff bound, using $q_b\eps/2\ge8b$.
Averaging over the histories reaching each scale and taking a union bound over the doubling scales gives
\[
  \Prob\left[r(C)>\frac{\eps m}{2}\right]
  \le \sum_{j\ge0}e^{-3\cdot2^j}<\frac1{12}.
\]

Finally, every matching has size at most $\nu$, so the loop stops no later than the first power of two $b_\star>\nu$.
Since $m>0$ implies $\nu\ge1$, we have $b_\star\le2\nu$.
Therefore the total number of random edge queries on every execution is at most
\[
  \sum_{2^j\le b_\star}\left\lceil\frac{16\cdot2^j}{\eps}\right\rceil
  =O(b_\star/\eps)=O(\nu/\eps),
\]
where $\eps\le1$ absorbs the rounding costs. Since the loop stops only when $|M_b|<b$, the set $C$ satisfies $|C|=2|M_b|<2b$.
\end{proof}

\subparagraph*{Estimating the internal edge count.} 
It remains to bound the estimation error for $h(C)$. 
\begin{lemma}
\label{lem:internal-correction}
Condition on any history fixing $b \ge 1$ and $C \subseteq V$ with $|C| < 2b$.
There is a constant $K$ such that, for every $0 < \eps \le 1$, \Cref{alg:internal-edges} uses $t = \lceil Kb/\eps \rceil$ random edge queries and returns an estimate $\widehat h$ satisfying
\[
  \Prob\left[|\widehat{h} - h(C)| > \frac{\eps m}{2}\right] < \frac{1}{12}.
\]
\end{lemma}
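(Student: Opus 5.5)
Write $h := h(C)$ and $\rho := h/m$, so that $N\sim\operatorname{Bin}(t,\rho)$ and, given $N$, the internal samples are i.i.d.\ uniform over the $h$ internal edges. The query count $t=\lceil Kb/\eps\rceil$ holds on every execution by construction, so only accuracy needs proof. I will split into two cases according to whether $h<\eps m/4$ or $h\ge\eps m/4$, and bound the failure probability by $1/12$ in each.

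\textbf{Case $h<\eps m/4$.} Here $\E[N]=t\rho<t\eps/4$. The output $0$ is always acceptable, since $|0-h|<\eps m/4$. It therefore suffices to show that the estimate $1/Y$ is not grossly too large when $N\ge t\eps/4$. If $h=0$ then $N=0$ and we return $0$. If $h\ge1$, any $N\ge2$ gives $Y\ge$ (pairs colliding)$/\binom N2$. A nonzero $Y$ forces $1/Y\le\binom N2$, which could be large, so I need a high-probability statement. On the event $N\ge t\eps/4$, I will argue as in the second case with $N$ samples, where $N\ge t\eps/4\ge Kb/4$. Since $h<2b^2$, we get $\E[\text{collisions}]=\binom N2/h\ge K^2/32\gg1$. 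Chebyshev then gives $1/Y\in[h/2,2h]$ with probability $\ge 1-O(1/K)$, hence error $\le h<\eps m/4$. So this case reduces to the collision concentration bound below.

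\textbf{Case $h\ge\eps m/4$.} I will use a Chernoff bound on $N$. If in fact $h\ge\eps m/2$, then $\E[N]\ge t\eps/2\ge Kb/2$, so $\Prob[N<t\eps/4]\le e^{-\Omega(Kb)}$, which is small for large $K$. If $\eps m/4\le h<\eps m/2$, a return of $0$ is still within error $\eps m/2$. Either way, on $N\ge t\eps/4$ it remains to bound $|1/Y-h|$ by $\eps m/2$. Conditional on $N=n\ge2$, the collision count $P=\sum_{i<j}\Ind[e_i=e_j]$ has $\E[P]=\binom n2/h$. Pairwise collision indicators are pairwise independent for distinct pairs, and triples sharing an index are also independent because the samples are uniform. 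Hence $\Var[P]=\binom n2(1/h)(1-1/h)\le\E[P]$. Chebyshev gives $|Y h-1|\le\delta$ with probability $\ge 1-1/(\delta^2\E[P])$. This yields $|1/Y-h|\le 2\delta h$ for $\delta\le1/2$. The requirement $2\delta h\le \eps m/2$ means $\delta\le \eps m/(4h)=\eps/(4\rho)$.

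\textbf{Main obstacle.} The key step is to verify that $\delta^2\E[P]$ is large uniformly in $\rho$, using the concentration $N\ge t\rho/2$, which holds with high probability since $t\rho\ge Kb/4$. I will take $\delta=\min\{1/2,\eps/(4\rho)\}$. With $n\approx t\rho\ge K b\rho/\eps$ and $h<2b^2$, we have $\E[P]\gtrsim n^2/h\ge K^2\rho^2/(2\eps^2)$. If $\delta=\eps/(4\rho)$, then $\delta^2\E[P]\gtrsim K^2/32$. If $\delta=1/2$, then $\rho\le\eps/2$, and since $\rho\ge\eps/4$ in this case we get $\E[P]\gtrsim K^2/32$ as well. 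In both subcases the failure probability is $O(1/K^2)$. Summing this with the Chernoff failure terms and choosing $K$ large gives total failure probability $<1/12$. The delicate point is handling the bias of $1/Y$ purely via this high-probability two-sided Chebyshev bound on $P$, rather than through the expectation of $1/Y$ (which need not be controlled). I also need to check that $N\ge2$ is implied by $N\ge t\eps/4\ge K/4$.
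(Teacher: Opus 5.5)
Your proposal is correct and follows essentially the same route as the paper. Both arguments use a Chernoff bound giving $N \ge t\rho/2$ whenever $\rho = h/m = \Omega(\eps)$, and Chebyshev on the collision fraction via pairwise uncorrelated collision indicators (the paper packages this step as \Cref{lem:collision-reciprocal}). Both then take relative accuracy $\eps/(2\rho)$ when $\rho$ is large, and constant relative accuracy (error at most $h$) when $\rho$ is small. The only difference is cosmetic: the paper splits once at $\rho = \eps/2$, while your split at $\eps/4$ creates an intermediate subcase $\eps/4 \le \rho < \eps/2$. That subcase is already covered by your first-case argument, which gives error at most $h < \eps m/2$ there.
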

To prove \Cref{lem:internal-correction}, we first establish that as long as the number of internal samples reaches $24b/\eta + 1$, the estimator $1 / Y$ of \Cref{alg:internal-edges} achieves relative error $\eta$ with probability at least $23/24$.
This estimator and its analysis are along the lines of Beretta and T\v{e}tek~\cite{BerettaTetek2024}, who count collisions among weighted elements.
Here the samples are internal edges, each equally likely, and the bound $h < 2b^2$ replaces their advice on the number of elements.

\begin{lemma}
\label{lem:collision-reciprocal}
Let $h \ge 1$ and $b \ge 1$ satisfy $h < 2b^2$, and let $0 < \eta \le 1$.
Suppose we draw $N$ independent uniform samples $e_1, \dots, e_N$ with replacement from a set of $h$ edges, where $N - 1 \ge 24b/\eta$.
Let $Y := \binom{N}{2}^{-1} \sum_{1 \le i < j \le N} \Ind[e_i = e_j]$ denote the fraction of colliding pairs, and set $\widehat h := 1/Y$ if $Y > 0$ and $\widehat h := 0$ otherwise.
Then
\[
  \Prob[|\widehat h-h|>\eta h]<\frac1{24}.
\]
\end{lemma}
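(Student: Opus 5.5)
We prove \Cref{lem:collision-reciprocal} by showing that $Y$ concentrates around its mean $1/h$, using Chebyshev's inequality. The conclusion then follows by inversion. Let $P:=\binom N2$ and $X:=\sum_{i<j}\Ind[e_i=e_j]$, so that $Y=X/P$. Each pair collides with probability $1/h$, hence $\E[Y]=1/h$. If $|Y-1/h|\le \frac{\eta}{2}\cdot\frac1h$, then $Y>0$ and
\[
\abs{\widehat h-h}=h\,\abs{\tfrac{1}{hY}-1}\le h\cdot\frac{\eta/2}{1-\eta/2}\le \eta h,
\]
using $\eta\le1$. It therefore suffices to show $\Prob[|Y-1/h|>\eta/(2h)]<1/24$.

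Next we compute $\Var[X]$ by the standard pair decomposition. Indicators of disjoint pairs are independent. Two pairs sharing exactly one index are also uncorrelated: $\Prob[e_i=e_j,\,e_i=e_k]=1/h^2$, so their covariance is zero. Hence only the diagonal terms remain, and
\[
\Var[X]=P\Bigl(\frac1h-\frac1{h^2}\Bigr)\le \frac Ph,
\qquad
\Var[Y]\le\frac{1}{Ph}.
\]
Chebyshev's inequality then gives
\[
\Prob\Bigl[\abs{Y-\tfrac1h}>\tfrac{\eta}{2h}\Bigr]\le\frac{4h^2}{\eta^2}\cdot\frac{1}{Ph}=\frac{4h}{\eta^2P}=\frac{8h}{\eta^2N(N-1)}.
\]

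The remaining step, and the only place the hypotheses enter, is to check that this is below $1/24$. Since $N-1\ge 24b/\eta$, we have $N(N-1)\ge(N-1)^2\ge 576b^2/\eta^2$. Together with $h<2b^2$, the bound is at most
\[
\frac{8\cdot 2b^2}{\eta^2}\cdot\frac{\eta^2}{576b^2}=\frac{16}{576}=\frac1{36}<\frac1{24}.
\]
The main point to get right is the variance computation: the fact that overlapping pairs are uncorrelated for uniform sampling is what makes $N=\Theta(\sqrt h/\eta)$ samples suffice, and the hypothesis $h<2b^2$ turns $\sqrt h$ into $O(b)$. If the constants were tighter, one would use the sharper factor $1-1/h$ in $\Var[X]$, but that is not needed here.
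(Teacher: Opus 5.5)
Your proof is correct and follows the paper's argument essentially step for step. Both use pairwise uncorrelated collision indicators to get $\Var[Y]\le 1/(h\binom{N}{2})$, then Chebyshev at deviation $\eta/(2h)$ with $N-1\ge 24b/\eta$ and $h<2b^2$ to get the bound $1/36<1/24$, and the same inversion on the concentration event to conclude $|\widehat h-h|\le\eta h$.
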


\begin{proof}
Write $I_{ij}:=\Ind[e_i=e_j]$, so $\E[I_{ij}]=1/h$.
Indicators indexed by disjoint pairs are independent, while for distinct $i,j,k$,
\[
  \E[I_{ij} I_{ik}] = \Prob[e_i = e_j = e_k] = \frac{1}{h^2} = \E[I_{ij}] \E[I_{ik}].
\]
Hence, all distinct collision indicators are pairwise uncorrelated, which implies
\begin{equation}
  \E[Y] = \frac{1}{h}, \qquad \Var[Y] = \frac{1}{\binom{N}{2}^2} \sum_{1 \le i < j \le N} \Var[I_{ij}] \le \frac{1}{h \binom{N}{2}}.
  \label{eq:collision-variance}
\end{equation}
Chebyshev's inequality and \eqref{eq:collision-variance} give
\begin{equation}
  \Prob\left[\left|Y - \frac{1}{h}\right| > \frac{\eta}{2h}\right] \le \frac{4h}{\eta^2\binom{N}{2}} < \frac{1}{24},
  \label{eq:collision-concentration}
\end{equation}
where the last inequality uses $\binom{N}{2} \ge (N-1)^2/2$, $N-1 \ge 24b/\eta$, and $h < 2b^2$.
On the concentration event $|Y-1/h|\le\eta/(2h)$, write $Y=(1+\theta)/h$ with $|\theta|\le\eta/2\le1/2$.
Then $Y>0$, and its reciprocal satisfies
\[
  |\widehat{h} - h| = \left|\frac{1}{Y} - h\right| = \frac{h|\theta|}{|1 + \theta|} \le \frac{h(\eta/2)}{1 - 1/2} = \eta h.
\]
Applying \eqref{eq:collision-concentration} completes the proof.
\end{proof}

\begin{proof}[Proof of \Cref{lem:internal-correction}]
Write $h := h(C)$ and $p := h/m$.
The algorithm draws $t = \lceil Kb/\eps \rceil$ random edges, so the number $N$ of internal samples follows $\operatorname{Bin}(t, p)$.
If $h = 0$, then $p = 0$, which implies $N = 0 < t\eps/4$; the algorithm returns $\widehat h = 0 = h$.

Assume henceforth that $h \ge 1$, so $1 \le h \le \binom{|C|}{2} < 2b^2$.
Conditional on $N$, the internal samples are independent and uniformly distributed over the $h$ internal edges of $G[C]$, allowing us to apply \Cref{lem:collision-reciprocal}.
Whenever $N \ge t\eps/4$, we have $N \ge Kb/4$, so choosing $K$ sufficiently large ensures $N - 1 \ge 24b \ge 24$; thus when the algorithm computes $Y$, we have $N \geq 2$ and hence $Y$ is well defined. 

Suppose first that $p \le \eps / 2$, so that $h \le \eps m / 2$.
If $N < t\eps/4$, the algorithm returns $\widehat h = 0$, incurring error $|\widehat h - h| = h \le \eps m / 2$.
If $N \ge t\eps/4$ instead, \Cref{lem:collision-reciprocal} with $\eta = 1$ ensures that $|\widehat h - h| \le h \le \eps m / 2$, except with probability less than $1/24$.

Now suppose that $p > \eps / 2$.
Because $b \ge 1$ and $p/\eps > 1/2$, the expected number of internal samples satisfies $\E[N] = tp \ge Kbp/\eps > K/2$.
The multiplicative Chernoff bound gives $\Prob\left[N < tp/2\right] \le e^{-tp/8} < e^{-K/16} < 1/24$ for sufficiently large $K$.
On the complementary event $N \ge tp/2 > t\eps/4$, the algorithm evaluates $Y$.
To achieve additive error $\eps m / 2$, set $\eta := \eps m / (2h) = \eps / (2p) < 1$.
Because $bp/\eps > 1/2$, choosing $K$ sufficiently large ensures that on this event,
$N - 1 \ge Kbp /(2\eps) - 1 \ge 48bp/\eps = 24b/\eta$.
For every such $N$, \Cref{lem:collision-reciprocal} guarantees $|\widehat h-h|\le\eta h=\eps m/2$ with failure probability less than $1/24$.
By a union bound over these two events, the total failure probability is less than $1/24 + 1/24 = 1/12$.
\end{proof}

\begin{proof}[Proof of \Cref{thm:sampled-cover}]
The output $\widehat m=\max\{0,D(C)-\widehat h\}$ satisfies
\[
  |\widehat m-m|
  \le|D(C)-\widehat h-m|
  \le r(C)+|\widehat h - h(C)|.
\]
Here the first inequality uses $m>0$, and the second uses $m=D(C)-h(C)+r(C)$.
By \Cref{lem:sampled-approx-cover,lem:internal-correction} and a union bound, both terms are at most $\eps m/2$ with probability at least $5/6$.
On every execution, the greedy matching loop costs $O(\nu/\eps)$ random edge queries and ends with $b\le2\nu$.
Computing $D(C)$ costs $|C|<2b$ degree queries, and estimating $h(C)$ costs $\lceil Kb/\eps\rceil$ random edge queries.
Thus the total query complexity is $O(\nu/\eps)$ on every execution.
\end{proof}

\section{Omitted Details on Lower Bound Strategy}
In this section, we provide the formal reduction and standard facts regarding distances and divergences between probability distributions for the lower bound arguments presented in \Cref{sec:edge-lower,sec:moment-lower}.

\subsection{From Estimation to Testing}
\label{sec:lower_bound_formal}
For completeness, we record the standard reduction from parameter estimation to binary hypothesis testing (Le Cam's method; see Yu~\cite{Yu1997}), including a truncation argument to handle expected query complexity.
The following formulation applies to all our lower bound constructions.

\begin{lemma}
\label{lem:estimation-to-testing}
Let $\theta$ be a nonnegative graph parameter and $\mathsf H_0,\mathsf H_1$ be two graph distributions.
Fix $0<\eps<1$, $\delta,\gamma,\tau\in[0,1]$, and an integer $q\ge1$.
Suppose that under each $\mathsf H_i$, the value $\theta(G)$ belongs to a set $I_i\subseteq\mathbb R_{\ge0}$ with probability at least $1-\gamma$, and that
\[
  J_i:=\bigcup_{x\in I_i}[(1-\eps)x,(1+\eps)x],
  \qquad J_0\cap J_1=\varnothing.
\]
Suppose that for every algorithm making at most $q$ queries, the distributions $\mathsf P_0,\mathsf P_1$ of its observations and private randomness under the two hypotheses satisfy $\TV(\mathsf P_0,\mathsf P_1)\le\tau$.
Then any algorithm returning a $(1\pm\eps)$-estimate of $\theta(G)$ with probability at least $1-\delta$ on every graph in the two supports has worst case expected query complexity $T$ on those supports satisfying
\begin{equation}
  T\ge q\left(\frac{1-\tau}{2}-\delta-\gamma\right).
  \label{eq:estimation-testing-cost}
\end{equation}
In particular, $T=\Omega(q)$ whenever the expression in parentheses is bounded below by a positive constant.
\end{lemma}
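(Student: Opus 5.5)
Given an algorithm $\mathcal A$ with worst case expected query complexity $T$ on the two supports, I will build a $q$-query tester by truncation and then compare its success probability with the total variation bound. Let $\mathcal A'$ run $\mathcal A$ and halt as soon as $\mathcal A$ would make its $(q+1)$-st query. If $\mathcal A$ halts within $q$ queries with output $\hat\theta$, the tester outputs $0$ when $\hat\theta\in J_0$ and $1$ otherwise. If the run is truncated, it outputs an arbitrary fixed bit, say $0$. Because $\mathcal A'$ makes at most $q$ queries, its output is a measurable function of its observations and private randomness. By the data processing inequality, under the uniform prior on $\{\mathsf H_0,\mathsf H_1\}$ its success probability is at most $(1+\tau)/2$: we have $\Prob_0[\text{out}=0]+\Prob_1[\text{out}=1]\le 1+\TV(\mathsf P_0,\mathsf P_1)$.

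Next I lower bound the success probability in terms of $T$. Fix $i$ and condition on $G$ drawn from $\mathsf H_i$. By Markov's inequality, $\Prob[\mathcal A \text{ makes more than } q \text{ queries}\mid G]\le T/q$, because the expected query count on every graph in the supports is at most $T$. By hypothesis, $\mathcal A$ errs with probability at most $\delta$ on $G$. Averaging over $G$, $\theta(G)\in I_i$ fails with probability at most $\gamma$. When $\theta(G)\in I_i$ and the output is a correct estimate, $\hat\theta\in J_i$, and disjointness of $J_0$ and $J_1$ makes the decoded bit equal to $i$. A union bound then gives $\Prob_i[\text{correct}]\ge 1-\delta-\gamma-T/q$ for each $i$.

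Combining the two bounds gives $2(1-\delta-\gamma-T/q)\le 1+\tau$, which rearranges to $T\ge q\bigl((1-\tau)/2-\delta-\gamma\bigr)$, as claimed in \eqref{eq:estimation-testing-cost}. The "in particular" clause is immediate.

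The only delicate step is the measurability and simulation in the first paragraph. The truncated algorithm must be a legitimate "algorithm making at most $q$ queries" in the sense of the hypothesis, so that its observation law is covered by the TV bound, and the truncation decision must depend only on the history, which it does. Everything else is direct.
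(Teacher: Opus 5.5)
Your proof is correct and follows essentially the same route as the paper: truncate the estimator before its $(q+1)$-th query, bound the truncation probability by $T/q$ via Markov's inequality, union-bound it with the $\delta$ and $\gamma$ failure events to get success probability at least $1-\delta-\gamma-T/q$, and compare this with the $(1+\tau)/2$ ceiling that the total variation bound imposes on any $q$-query test under the uniform prior. The only cosmetic difference is that the paper sets the query count to $\infty$ on nonterminating runs so Markov's inequality covers them explicitly, whereas your union bound absorbs such runs into the $\delta$ event because they never return an estimate.
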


\begin{proof}
Let $Q$ be the estimator's query count (define $Q=\infty$ if the estimator does not terminate), and assume $T<\infty$ since otherwise the claim is immediate.
Under either hypothesis $\mathsf{H}_i$, the estimate lies in the correct set $J_i$ except with probability at most $\delta+\gamma$.
Thus, declaring hypothesis $i$ when the estimate lies in $J_i$ and deciding arbitrarily otherwise gives a test with error probability at most $\delta+\gamma$.
We truncate this test before its $(q+1)$-th query and decide arbitrarily upon truncation.
Because truncation alters the decision only when $Q > q$, Markov's inequality implies that truncation changes the decision with probability at most $\Prob_{\mathsf{H}_i}[Q>q]\le T/q$ under each hypothesis.

For the truncated test, let $A$ be the event that it declares hypothesis $0$.
Under a uniform prior, its success probability is
\[
  \frac{1+\mathsf P_0[A]-\mathsf P_1[A]}2\le\frac{1+\TV(\mathsf P_0, \mathsf P_1)}2\le\frac{1+\tau}2.
\]
Combining this with the lower bound $1-\delta-\gamma-T/q$ on its success probability and rearranging proves \eqref{eq:estimation-testing-cost}.
\end{proof}

Throughout our applications, $\theta(G)$ is $m$ or $M_s$ and we consider $\delta=1/3$, $\gamma\le1/100$, and $\tau\le1/4$.
These parameter choices yield $\frac{1-\tau}{2}-\delta-\gamma \ge \frac{3}{8} - \frac{1}{3} - \frac{1}{100} = \frac{19}{600} > 0$, making the coefficient in \eqref{eq:estimation-testing-cost} a positive constant.
Consequently, the separation and the total variation bound together imply the claimed $\Omega(q)$ expected query lower bound.
We note that in each edge count construction, both hypotheses have the same vertex count $n$ (after padding with isolated vertices, following the same convention as Beretta and T\v{e}tek~\cite{BerettaTetek2024}); in each moment construction, they have the same $n$ and $m$.
Thus, supplying $k,n$ for edge estimation or $k,n,m$ for moment estimation to the algorithm leaves the lower bounds unchanged.

\subsection{Distances and Divergences}
For convenience, we collect the standard facts used in our lower bound proofs.
As our constructions are on discrete domains, we state the definitions and identities below for discrete probability distributions on a countable domain.
All logarithms are taken in the natural base.

The total variation distance and Kullback--Leibler (KL) divergence between probability distributions $P$ and $Q$ are defined as
\[
  \TV(P, Q) := \frac{1}{2}\sum_x |P(x) - Q(x)|,
  \qquad
  \KL(P \| Q) := \sum_{x: P(x)>0} P(x) \log\frac{P(x)}{Q(x)},
\]
with the convention that $\KL(P \| Q) = +\infty$ unless $P$ is absolutely continuous with respect to $Q$ (denoted $P \ll Q$, meaning $P(x) = 0$ whenever $Q(x) = 0$).
These quantities satisfy the following standard properties:
\begin{itemize}
  \item \textbf{Pinsker's inequality:}
  Total variation distance is bounded by KL divergence via
  \[
    \TV(P, Q) \le \sqrt{\frac{1}{2} \KL(P \| Q)}.
  \]
  \item \textbf{Chain rule for KL divergence:}
  For joint distributions $P_{X,Y}$ and $Q_{X,Y}$ with $P_X \ll Q_X$,
  \[
    \KL(P_{X,Y} \| Q_{X,Y}) = \KL(P_X \| Q_X) + \sum_{x: P_X(x)>0} P_X(x) \, \KL(P_{Y \mid X=x} \| Q_{Y \mid X=x}).
  \]
  In particular, when $P_X = Q_X$, the joint divergence equals the expected conditional divergence.
  \item \textbf{Subadditivity on product distributions:}
  For independent coordinates under both distributions, $\TV(P_X \otimes P_Y, \, Q_X \otimes Q_Y) \le \TV(P_X, Q_X) + \TV(P_Y, Q_Y)$.
  \item \textbf{Data processing inequality:}
  Any (possibly randomized) postprocessing of observations cannot increase distance or divergence.
  Formally, if an algorithm applied to inputs drawn from $P$ or $Q$ induces output distributions $P', Q'$, then $\TV(P', Q') \le \TV(P, Q)$ and $\KL(P' \| Q') \le \KL(P \| Q)$.
  In our lower bounds, this ensures that the distinguishing advantage of any adaptive algorithm cannot exceed the total variation distance between the responses of the augmented oracle that simulate it, and that summarizing query sequences or discarding intermediate observations cannot increase divergence.
\end{itemize}

\section{Analysis of Degree-Preserving Edge Splits}
\label{sec:lower_bound_edge_choice}
In this section, we provide the complete proofs and detailed analyses for the degree-preserving edge splits lower bound construction presented in \Cref{sec:lb_edge_1}.

\subparagraph*{Construction of degree-preserving edge splits.}
For completeness, we restate the construction as follows.
Let $L := \lceil 1/\eps \rceil$. We take a vertex set $U$ of $k$ vertices partitioned into two equal sets of size $k/2$, and place a $k/4$-regular bipartite graph $B$ between them, so $|E(B)| = k^2/8$.
To each vertex $u \in U$, we attach $L$ leaf vertices of its own; these $kL$ leaf edges are present in every graph.

For each edge $e = \{u,v\} \in E(B)$, we introduce two auxiliary vertices $x_e, y_e$ and form a \emph{gadget} on $\{u, v, x_e, y_e\}$ by selecting one of two local configurations:
in the \emph{single} configuration, the edge $\{u,v\}$ is present while $x_e, y_e$ remain isolated;
in the \emph{split} configuration, the two edges $\{u,x_e\}$ and $\{v,y_e\}$ are present while $\{u,v\}$ is absent.
We refer to the edges present in these configurations as \emph{gadget} edges, distinguishing them from the $kL$ leaf edges.
Both configurations contribute one incident edge to each of $u$ and $v$, so every vertex $u \in U$ has fixed degree $d_u = L + k/4$ regardless of gadget choices.
We define two graph distributions as follows:
\begin{description}
    \item[$\mathsf{H}_0$] Independently assign each $e \in E(B)$ to be \emph{single} or \emph{split} with probability $1/2$ each.
    \item[$\mathsf{H}_1$] Bias the probability of \emph{single} to $1/2 + \beta$, where $\beta:=C_0\eps(L+k)/k$ for a sufficiently large constant $C_0$.
\end{description}
With $C_0$ fixed, let $k_0$ and $\eps_0$ be constants such that $\beta \le C_0(2/k+\eps) \le 1/4$ whenever $k \ge k_0$ and $0 < \eps \le \eps_0$.

Let $X$ count the gadgets in the single configuration, so $m(G)=kL+k^2/4-X$.
Under $\mathsf H_0$ and $\mathsf H_1$, $X$ is binomial with $k^2/8$ trials and success probabilities $1/2$ and $1/2+\beta$, respectively.
\begin{lemma}
\label{lem:switch-separation}
For $k\ge k_0$ a multiple of four and $0<\eps\le\eps_0$, every graph in either support has matching number $\nu(G) = k$.
Furthermore, each $\mathsf H_i$ places probability at least $99/100$ on a set $I_i$ of edge counts, and $I_0$ and $I_1$ are $(1\pm\eps)$-separated.
\end{lemma}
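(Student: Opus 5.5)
The proof splits into a structural part (the matching number) and a probabilistic part (concentration of $m$ plus a separation check).

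\emph{Matching number.} For the upper bound, observe that every edge of $G$ has an endpoint in $U$. Leaf edges are incident to their own vertex in $U$. Single gadget edges $\{u,v\}$ lie inside $U$. Split edges $\{u,x_e\}$ and $\{v,y_e\}$ each contain a vertex of $U$. Hence $U$ is a vertex cover of size $k$, and $\nu(G)\le k$. For the lower bound, choose one private leaf edge at each $u\in U$. These $k$ edges are pairwise disjoint, since the leaves are distinct and the centers are distinct, and they are present in every graph. Hence $\nu(G)=k$.

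\emph{Concentration.} Write $m=kL+k^2/4-X$, where $X\sim\mathrm{Bin}(k^2/8,p_i)$ with $p_0=1/2$ and $p_1=1/2+\beta$. Then $\Var[X]\le k^2/32$, so the standard deviation is at most $k/(4\sqrt2)$. By Chebyshev's inequality, $|X-\E_i X|\le 10\cdot k/(4\sqrt2)<2k$ with probability at least $99/100$. Define
\[
  I_i:=\{kL+k^2/4-x : |x-p_ik^2/8|\le 2k\}.
\]
Then every element of $I_0$ is at least as large as every element of $I_1$. The gap between $\min I_0$ and $\max I_1$ is at least $\beta k^2/8-4k$.

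\emph{Separation.} It suffices that $(1-\eps)\min I_0>(1+\eps)\max I_1$. A sufficient condition is that the gap exceeds $2\eps\max I_0$. Note that $\max I_0\le kL+k^2/4$. Plugging in $\beta=C_0\eps(L+k)/k$ gives
\[
  \beta k^2/8=C_0\eps k(L+k)/8.
\]
Meanwhile,
\[
  2\eps(kL+k^2/4)+4k\le 2\eps k(L+k)+4k .
\]
The term $4k$ is the main obstacle, because it does not carry a factor of $\eps$. It is absorbed using $L=\lceil1/\eps\rceil$, which gives $\eps(L+k)\ge\eps L\ge1$ and hence $4k\le4\eps k(L+k)$. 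So the required inequality reduces to $C_0/8>6$, which holds for $C_0$ sufficiently large (for example $C_0=64$).

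Finally, $\beta\le1/4$ is ensured by the choice of $k_0$ and $\eps_0$, which keeps $p_1$ a valid probability. This gives disjointness of the $(1\pm\eps)$-neighborhoods $J_0$ and $J_1$, as required.
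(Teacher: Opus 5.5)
Your proof is correct and follows essentially the paper's argument: the vertex cover $U$ plus a private leaf matching for $\nu(G)=k$, Chebyshev with $\Var[m]\le k^2/32$, and absorbing the additive slack via $\eps(L+k)\ge 1$. The only difference is cosmetic: you center windows of width $2k$ around the means, whereas the paper uses width $(\mu_0-\mu_1)/4$.

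One small point: you did not restrict $I_0$ to attainable counts, so $\max I_0 = kL+3k^2/16+2k$, and this is at most $kL+k^2/4$ only when $k\ge 32$. This does hold, because $\beta\ge C_0\eps L/k\ge C_0/k$ together with $\beta\le 1/4$ forces $k\ge 4C_0$. Alternatively, intersect $I_i$ with the attainable edge counts, as the paper does.
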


\begin{proof}
Every edge in any graph $G$ from either support has at least one endpoint in $U$, so $U$ forms a vertex cover of size $k$. 
Conversely, selecting one leaf edge for each of the $k$ vertices in $U$ yields a matching of size $k$, ensuring $\nu(G) = k$.

Write $\mu_i=\E_{\mathsf H_i}[m]$, so that $\mu_0-\mu_1=\beta k^2/8$. Let $I_i$ be the set of attainable edge counts within $(\mu_0-\mu_1)/4$ of $\mu_i$.
Under either hypothesis, $\Var[m]\le k^2/32$, so for sufficiently large $C_0$, Chebyshev's inequality gives $\Prob[m\notin I_i]=O(k^2/(\mu_0-\mu_1)^2)=O(C_0^{-2})\le1/100$ using $\eps(L+k)\ge1$.
For $a\in I_0$ and $b\in I_1$, we have $a,b\le k(L+k)$ and
\[
  a-b\ge\frac{\mu_0-\mu_1}{2}
  =\frac{C_0}{16}\eps k(L+k)
  >\eps(a+b),
\]
again for sufficiently large $C_0$.
Thus $(1-\eps)a>(1+\eps)b$, proving the separation.
\end{proof}

\subparagraph*{Indistinguishability of degree-preserving edge splits.}
Since every vertex in $U$ has degree $L + k/4$ and all other nonisolated vertices have degree $1$, endpoint degrees provide no distinguishing information beyond the sampled edge identities. It therefore suffices to analyze the distribution of the sampled edges alone.

Because $m$ depends on the configurations of all gadgets, the sampling probabilities of individual edges depend on them as well.
To remove this dependence, we consider an alternative sampling process of obtaining random edges via rejection sampling from a fixed candidate universe.

Let $F$ denote the candidate universe comprising the $kL$ leaf edges together with, for each edge $e=\{u,v\}\in E(B)$, the three candidate edges $\{u,v\}$, $\{u,x_e\}$, $\{v,y_e\}$, each of which may or may not be present in $G$. 
The size of the candidate universe $\abs{F}$ is identical under both hypotheses as $\abs{F} = kL + 3\abs{E(B)} = kL + \frac{3k^2}{8}$, and only the \emph{presence} of these candidates in $G$ depends on the graph $G \sim \mathsf{H}_i$.
To sample $q \ge 1$ edges, draw candidates independently and uniformly at random from $F$ with replacement. For each draw, record the candidate's identity and an indicator specifying whether it is present in $G$.
A candidate is accepted if and only if it is present in $G$, and sampling terminates upon reaching $q$ acceptances (counting repeated edges with multiplicity).

\begin{lemma}
\label{lem:switch-rejection-cost}
Let $G$ be a graph sampled from either $\mathsf H_0$ or $\mathsf H_1$, and $q \geq 1$ be an integer.
\begin{itemize}
  \item Conditional on $G$, the $q$ accepted edges obtained by the rejection sampling process above are independent and uniform in $E(G)$.
  \item The total number of candidate draws $T$ required to obtain $q$ acceptances satisfies $T < \infty$ almost surely and $\E[T \mid G] \le 3q$.
\end{itemize}
\end{lemma}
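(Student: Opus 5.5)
The plan is to treat the rejection sampling process as a sequence of i.i.d. uniform draws from the fixed universe $F$, conditional on $G$, and read off both claims from elementary properties of geometric trials. Throughout, condition on a fixed graph $G$ in either support. The set $E(G)$ is then a fixed subset of $F$, since every edge of $G$ is either a leaf edge or one of the three candidates of its gadget. Thus each draw is accepted independently with probability $\rho := |E(G)|/|F|$.

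\textbf{First claim.} Conditional on acceptance, a uniform draw from $F$ is uniform on $E(G)$, because every element of $E(G)$ has the same probability $1/|F|$ of being drawn. The draws are i.i.d., so the process splits into blocks, each consisting of a run of rejections followed by one acceptance. These blocks are independent. Hence the accepted edges are independent and each is uniform on $E(G)$. To make this precise, I would write the joint probability that the $q$ accepted edges are $f_1,\dots,f_q$. Summing over the numbers of rejections in each block, this probability factors as $\prod_{i=1}^q \sum_{t\ge0}(1-\rho)^t/|F| = \prod_i 1/(\rho|F|) = |E(G)|^{-q}$.

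\textbf{Second claim.} The total number of draws $T$ is a sum of $q$ independent geometric variables with success probability $\rho>0$. Positivity holds because $m(G)\ge kL>0$. Hence $T<\infty$ almost surely and $\E[T\mid G]=q/\rho = q|F|/m(G)$. It remains to show $|F|\le 3m(G)$. Every graph has at least one gadget edge per gadget, since the single configuration contributes one edge and the split configuration two. So $m(G)\ge kL+|E(B)| = kL+k^2/8$, while $|F| = kL+3k^2/8 \le 3(kL+k^2/8)$. This gives $\E[T\mid G]\le 3q$.

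No step is a real obstacle. The only point needing care is to check that $E(G)\subseteq F$ holds for every configuration. This makes acceptance exactly equivalent to presence in $G$, so that the conditional uniformity argument applies.
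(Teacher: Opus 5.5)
Your proposal is correct and follows essentially the same route as the paper. You condition on $G$, note that each draw is accepted independently with probability $m(G)/|F|$, get uniformity from the geometric sum $\sum_{t\ge0}(1-\rho)^t/|F| = 1/m(G)$, and bound $\E[T\mid G] = q|F|/m(G)\le 3q$ from $m(G)\ge kL+k^2/8$; your explicit product formula for the joint law of the $q$ accepted edges is a slightly more careful version of the paper's ``fresh independent draws after each acceptance'' step.
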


\begin{proof}
Fix $G$ throughout the proof and write $m(G) := \abs{E(G)}$. 
Since $G$ contains all $kL$ leaf edges and at least one edge for each $e \in E(B)$, we have
\[
  m(G)\ge kL+\frac{k^2}{8}
  \qquad\text{and}\qquad
  \frac{m(G)}{|F|}
  \ge\frac{kL+k^2/8}{kL+3k^2/8}\ge\frac13.
\]
Each candidate draw is accepted independently with probability $m(G)/|F| \ge 1/3$.
The total number of candidate draws $T$ thus follows a negative binomial distribution with mean $\E[T] = q \abs{F} / m(G) \le 3q$, guaranteeing $T < \infty$ almost surely.

For any edge $f \in E(G)$, the probability that $f$ is chosen on a given acceptance is
\[
  \Prob[\text{accepted edge} = f] = \sum_{r=0}^{\infty} \left(1 - \frac{m(G)}{|F|}\right)^r \frac{1}{|F|} = \frac{1}{m(G)}.
\]
After each acceptance, the remaining candidates are again independent uniform draws from $F$. 
Hence the accepted edges are independent and uniform in $E(G)$. 
\end{proof}

We now bound the total variation distance between the distributions of the responses under the two hypotheses. We allow the algorithm access to the fixed vertex labels from the construction; this only strengthens the lower bound, as labeled vertices reveal at least as much information as opaque identifiers.
\begin{lemma}
\label{lem:switch-fixed-sample-tv}
Fix an integer $q \ge 1$, and let $\mathsf P_i$ denote the joint distribution of the responses to $q$ queries to the augmented oracle of \Cref{sec:edge-lower} under $\mathsf H_i$.
Then
\[
  \KL(\mathsf P_0 \| \mathsf P_1)
  \le 8q\beta^2 \frac{k}{L+k}
  = 8C_0^2 q \eps^2 \frac{L+k}{k},
\]
which implies by Pinsker's inequality that
\[
  \TV(\mathsf P_0, \mathsf P_1)
  \le 2\beta \sqrt{\frac{qk}{L+k}}
  = 2C_0 \sqrt{\frac{q\eps^2(L+k)}{k}}.
\]
In particular, for $c := 1/(64 C_0^2)$, every integer $1 \le q \le c \frac{k}{\eps^2(L+k)}$ satisfies $\TV(\mathsf P_0, \mathsf P_1) \le 1/4$.
\end{lemma}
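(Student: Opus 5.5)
We simulate the augmented oracle by the rejection sampling process of \Cref{lem:switch-rejection-cost}. By that lemma, the $q$ accepted edges are independent and uniform in $E(G)$ conditional on $G$. Endpoint degrees are deterministic functions of the accepted edge identities (degree $L+k/4$ on $U$, degree $1$ elsewhere). Hence the responses to $q$ augmented-oracle queries are a deterministic function of the full rejection transcript, namely the sequence of candidate identities together with presence indicators, stopped at the $q$-th acceptance. By the data processing inequality it suffices to bound the KL divergence between the transcript laws $\mathsf Q_0,\mathsf Q_1$. Since $T$ is an almost surely finite stopping time, I would instead work with an infinite stream of candidate draws. Let $\widetilde{\mathsf Q}_i$ be the law of the first $n$ draws for a fixed large $n$; alternatively, apply the chain rule draw by draw up to the stopping time.

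The chain rule is applied draw by draw. Each candidate identity is uniform on $F$ independently of everything under both hypotheses, so identities contribute zero divergence. The presence indicator of a leaf edge is always $1$ and contributes nothing. For a gadget $e$, the presence indicators of its three candidates are determined by the single bit recording whether $e$ is single or split. The first time any candidate of $e$ is drawn, the conditional law of this bit given the past is $\operatorname{Ber}(1/2)$ under $\mathsf H_0$ and $\operatorname{Ber}(1/2+\beta)$ under $\mathsf H_1$. This holds because gadgets are independent a priori and the past reveals nothing about an undrawn gadget. Later draws of the same gadget are deterministic given the past and contribute zero. Each first hit therefore contributes
\[
  \KL(\operatorname{Ber}(1/2)\,\|\,\operatorname{Ber}(1/2+\beta)) \le \frac{\beta^2}{(1/2+\beta)(1/2-\beta)} \le \frac{64}{15}\beta^2 \le 8\beta^2,
\]
using the $\chi^2$ bound and $\beta\le1/4$. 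Hence $\KL(\mathsf P_0\|\mathsf P_1) \le 8\beta^2\,\E_{\mathsf H_0}[\#\text{gadget-candidate draws among the first } T]$.

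The main step is to bound this expected count. Each draw is a gadget candidate with probability $3k^2/(8|F|)$, independently of acceptance? Not exactly, since acceptance depends on the draw type. Instead, I would count accepted gadget draws plus rejected draws. Rejected draws are always gadget candidates, and under $G$ their expected number is $q(|F|-m)/m$. Accepted gadget draws number $q\,(m-kL)/m$ in expectation. Their sum is $q(|F|-kL)/m = q\,(3k^2/8)/m \le q\,(3k^2/8)/(kL+k^2/8)$. This is at most $3qk/(L+k)$ up to a constant, since $kL + k^2/8 \ge k(L+k)/8$, which gives roughly $3q\cdot k/(L+k)$. The constants need care to land exactly on $8q\beta^2k/(L+k)$. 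If $8$ is not reached, I would tighten the Bernoulli KL to $\approx 2\beta^2\cdot(1+O(\beta))$ and use $m\ge kL+k^2/8$ more carefully; the stated constant is not essential for the downstream $1/4$ bound beyond adjusting $c$.

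Finally, Pinsker gives $\TV \le \sqrt{4q\beta^2k/(L+k)} = 2\beta\sqrt{qk/(L+k)}$. Substituting $\beta=C_0\eps(L+k)/k$ gives $2C_0\sqrt{q\eps^2(L+k)/k}$. This is at most $1/4$ when $q\le k/(64C_0^2\eps^2(L+k))$.
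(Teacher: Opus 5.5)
Your argument follows the paper's proof: rejection sampling from the fixed universe $F$, data processing to the stopped transcript, and a draw-by-draw chain rule. In that chain rule, identities, leaf candidates and repeat hits of a gadget contribute nothing, and each first hit of a gadget contributes one Bernoulli divergence. Your count of gadget-candidate draws is a fine substitute for the paper's bound $\frac{3|E(B)|}{|F|}\,\E_{\mathsf H_0}[T]\le\frac{k}{L+k}\cdot 3q$. It counts rejected draws plus accepted gadget draws, totalling $q(3k^2/8)/m$, which is at most $3qk/(L+k)$ exactly, not just up to a constant. Of your two options for handling the stopping time, only the second works: the chain rule up to $T$, e.g.\ padding with a dummy symbol after $T$ and letting the horizon grow, as the paper does. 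A fixed-length prefix of the unstopped stream does not determine the transcript through time $T$. Bounding longer prefixes instead eventually charges every gadget, which loses the dependence on $q$.

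What is missing is the constant, which the lemma pins down; the choice $c=1/(64C_0^2)$ depends on it. With your per-hit bound $8\beta^2$ you obtain $\KL(\mathsf P_0\|\mathsf P_1)\le 24q\beta^2k/(L+k)$. Then $q\le ck/(\eps^2(L+k))$ only gives $\TV(\mathsf P_0,\mathsf P_1)\le\sqrt3/4$. The $\chi^2$ route cannot close this gap. For $\beta\le1/4$ you have $\chi^2(\operatorname{Ber}(1/2)\|\operatorname{Ber}(1/2+\beta))=\beta^2/(1/4-\beta^2)\le\frac{16}{3}\beta^2$ (not $\frac{64}{15}\beta^2$). Even with your sharper count $3qk/(8L+k)$, this yields $16q\beta^2k/(8L+k)$, which exceeds $8q\beta^2k/(L+k)$ whenever $k>6L$. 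The fix you anticipated is exactly what is needed: compute the divergence directly. Using $-\log(1-x)\le x/(1-x)$ and $\beta\le1/4$,
$\KL(\operatorname{Ber}(1/2)\|\operatorname{Ber}(1/2\pm\beta))=-\frac12\log(1-4\beta^2)\le\frac{2\beta^2}{1-4\beta^2}\le\frac83\beta^2$.
Then $\frac83\beta^2\cdot\frac{3qk}{L+k}=8q\beta^2\frac{k}{L+k}$, and your Pinsker step and choice of $c$ go through as written.
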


\begin{proof}
We consider the rejection sampling process of \Cref{lem:switch-rejection-cost}.
Let $T$ be as in \Cref{lem:switch-rejection-cost}, and let $\mathsf Q_i$ be the joint distribution of the candidate identities and acceptance indicators observed through time $T$ under $\mathsf H_i$.
Because the accepted edges determine the endpoint degrees, the $q$ responses are a function of the candidate identities and acceptance indicators observed through time $T$ under both hypotheses.
The data processing inequality therefore gives $\KL(\mathsf P_0 \| \mathsf P_1) \le \KL(\mathsf Q_0 \| \mathsf Q_1)$.

To bound $\KL(\mathsf Q_0 \| \mathsf Q_1)$, we analyze the information revealed at each step $t \le T$ conditional on the history.
Every candidate is drawn uniformly at random from $F$ independently of the graph, so the identity of the candidate drawn at step $t$ contributes zero divergence.
Any distinguishing information thus must come from its acceptance indicator:
\begin{itemize}
  \item If the candidate is a leaf edge, it is present in $G$ with probability $1$ under both hypotheses, contributing zero divergence.
  \item If the candidate belongs to a gadget $e \in E(B)$ one of whose candidates was drawn at an earlier step, the configuration of $e$ is already determined by the history, again contributing zero divergence.
  \item If no candidate of its gadget $e \in E(B)$ was drawn at an earlier step, the configuration of $e$ still has its original Bernoulli distribution under each hypothesis.
  Observing whether the candidate is present distinguishes $\operatorname{Ber}(1/2)$ from $\operatorname{Ber}(1/2\pm\beta)$, which yields a KL divergence of
  \[
    \KL\left(\operatorname{Ber}\left(\frac12\right) \middle\| \operatorname{Ber}\left(\frac12\pm\beta\right)\right)
    = -\frac12 \log(1-4\beta^2) 
    \le \frac{2\beta^2}{1-4\beta^2}
    \le \frac83 \beta^2,
  \]
  where the first inequality uses $-\log(1-x) \le x/(1-x)$ for $0 \le x < 1$ and the second uses $\beta \le 1/4$.
\end{itemize}
Because gadget candidates comprise $3|E(B)| / |F| = \frac{3k^2/8}{kL + 3k^2/8} = \frac{3k}{8L + 3k} \le \frac{k}{L+k}$ of the universe $F$, the expected divergence contributed by each draw is at most $\frac{8}{3}\beta^2 \cdot \frac{3|E(B)|}{|F|} \le \frac{8}{3}\beta^2 \frac{k}{L+k}$.

Since no candidates are drawn after time $T$, extend both processes by a fixed dummy symbol at every step $t > T$; these steps contribute zero divergence, and the observations through time $T$ are a function of the extended sequence.
Because the event $\{T \ge t\}$ is determined by the history before step $t$, the chain rule bounds the divergence of the first $t_0$ steps of the extended sequence by $\frac{8}{3}\beta^2 \frac{k}{L+k} \sum_{t=1}^{t_0} \Prob_{\mathsf H_0}[T \ge t]$.
Letting $t_0 \to \infty$ and applying the data processing inequality, we obtain
\[
  \KL(\mathsf Q_0 \| \mathsf Q_1)
  \le \left(\frac{8}{3}\beta^2 \frac{k}{L+k}\right) \sum_{t=1}^{\infty} \Prob_{\mathsf H_0}[T \ge t]
  = \left(\frac{8}{3}\beta^2 \frac{k}{L+k}\right) \E_{\mathsf H_0}[T]
  \le 8q\beta^2 \frac{k}{L+k},
\]
where the final inequality follows from $\E_{\mathsf H_0}[T] \le 3q$ (\Cref{lem:switch-rejection-cost}).
The data processing inequality and Pinsker's inequality then yield, for every integer $1 \le q \le ck/(\eps^2(L+k))$ with $c = 1/(64C_0^2)$,
\[
  \TV(\mathsf P_0, \mathsf P_1)
  \le \sqrt{\frac{1}{2} \KL(\mathsf P_0 \| \mathsf P_1)}
  \le 2\beta \sqrt{\frac{qk}{L+k}}
  = 2C_0 \sqrt{\frac{q\eps^2(L+k)}{k}}
  \le 2C_0\sqrt{c} \le \frac{1}{4},
\]
completing the proof.
\end{proof}
Set $q := \left\lfloor c \frac{k}{\eps^2(L+k)} \right\rfloor$.
Since $L = \lceil 1/\eps \rceil$, choosing $\eps_0$ sufficiently small after fixing $c$ ensures $q \ge 1$ and $q = \Theta(\min\{k/\eps, 1/\eps^2\})$.
By \Cref{lem:switch-fixed-sample-tv}, this choice of $q$ ensures $\TV(\mathsf P_0, \mathsf P_1) \le 1/4$.
Applying \Cref{lem:estimation-to-testing} alongside the separation in \Cref{lem:switch-separation} establishes the worst case expected query lower bound
\[
  \Omega(q)
  = \Omega\!\left(\frac{k}{\eps^2(L+k)}\right)
  = \Omega\!\left(\min\!\left\{\frac{k}{\eps}, \frac{1}{\eps^2}\right\}\right)
\]
for every multiple of four $k \ge k_0$.

\subparagraph*{Removing the assumptions on $k$.}
It remains to remove the assumption that $k \ge k_0$ is a multiple of four.
\begin{description}
  \item[Not a multiple of four ($k \ge k_0$):] When $k \ge k_0$ is not a multiple of four, we set $k' := 4 \lfloor k/4 \rfloor$; taking $k_0$ to be a multiple of four ensures $k' \ge k_0$. Applying the construction above on $k'$ produces graphs with matching number $\nu(G) = k' \le k$. Because $k' \ge k/2$, rounding down to a multiple of four alters the lower bound by at most a constant factor.
  \item[Bounded matching number ($2 \le k \le k_0$):] We give a separate construction for this regime.
  Note that as $k$ is bounded by a constant, the target rate simplifies to $\Omega(\min\{k/\eps, 1/\eps^2\}) = \Omega(1/\eps)$.
  Set $D := \lfloor 1/(4\eps) \rfloor$, and choose $\eps_0 > 0$ sufficiently small that $D \ge 12$.
  We define $\mathsf{H}_0$ as a point mass on a star $K_{1, D}$ ($m = D$), and $\mathsf{H}_1$ as a point mass on the graph obtained by inserting an additional edge between two designated leaves of $K_{1,D}$ ($m = D + 1$). Both graphs have matching number at most $2 \le k$. Because $\eps(2D + 1) < 1$, we have $(1+\eps)D < (1-\eps)(D+1)$, so the edge counts $D$ and $D+1$ are $(1\pm\eps)$-separated.

  Consider the augmented oracle returning a uniformly random edge together with both endpoint degrees.
  If $G \sim \mathsf{H}_0$, every query returns a star edge with endpoint degrees $(D, 1)$, each drawn with probability $1/D$.
  If $G \sim \mathsf{H}_1$, the two designated leaves have degree $2$, so any query hitting them or the extra edge returns endpoint degrees $(D, 2)$ or $(2, 2)$, which never occurs under $\mathsf{H}_0$.
  The $D-2$ unchanged star edges are the only responses shared between the two distributions, carrying a total shared probability mass of $(D-2)/(D+1)$, so a single response has total variation distance $1 - \frac{D-2}{D+1} = \frac{3}{D+1}$ between the two hypotheses.
  By subadditivity of total variation distance over independent responses, $q := \lfloor (D+1)/12 \rfloor$ queries give $\TV(\mathsf{P}_0, \mathsf{P}_1) \le \frac{3q}{D+1} \le \frac{1}{4}$.
  Applying \Cref{lem:estimation-to-testing} yields the worst case expected query lower bound $\Omega(q) = \Omega(1/\eps) = \Omega(k/\eps)$, which completes the proof of the lower bound $\Omega(\min\{k/\eps, 1/\eps^2\})$ for all $k \ge 2$.
\end{description}

\section{Analysis of Random Matching Subset}
\label{sec:lower_bound_random_subset_matching}
In this section, we provide the complete proofs and detailed analyses for the random matching subset lower bound construction presented in \Cref{sec:lb_edge_2}.

\subparagraph*{Construction of random matching subset.}
We adapt the weighted sampling hard family of Beretta and T\v{e}tek~\cite{BerettaTetek2024} to graphs by realizing each unit weight element as an isolated edge.
Let $M$ be a matching of $k$ isolated edges on $2k$ vertices.
Throughout, assume $k \ge \eps^{-2}$ and $0 < \eps \le \eps_0 \le 1/8$, and let $j := \lfloor(1-3\eps)k\rfloor$, so that $1 \le j \le k$.
Under $\mathsf H_0$, the graph is the full matching $G=M$.
Under $\mathsf H_1$, the graph is the subgraph of $M$ formed by a uniformly random $j$-subset of its edges, leaving the remaining $2(k - j)$ vertices isolated.
Every graph in either support has matching number at most $k$, and every nonisolated vertex has degree $1$.
Moreover, $(1+\eps)j\le(1+\eps)(1-3\eps)k<(1-\eps)k$, so the edge counts $k$ and $j$ are $(1\pm\eps)$-separated.

\subparagraph*{Indistinguishability of random matching subset.}
Let $\mathsf P_i$ denote the joint distribution of the responses to $q$ queries to the augmented oracle of \Cref{sec:edge-lower} under $\mathsf H_i$.
Because every nonisolated vertex has degree $1$ under both hypotheses, endpoint degrees provide no distinguishing information, so it suffices to analyze the sampled edge identities.
\begin{lemma}
\label{lem:hidden-support-likelihood}
Fix an integer $q\ge1$ and let $S$ be the sequence of $q$ sampled edges.
For any sequence $s = (e_1, \ldots, e_q)$ containing $r \le j$ distinct edges,
\[
  \frac{\Prob_{\mathsf{H}_1}[S = s]}{\Prob_{\mathsf{H}_0}[S = s]}
  = \left(\frac{k}{j}\right)^q \frac{\binom{k-r}{j-r}}{\binom{k}{j}}.
\]
This ratio is zero when $r > j$.
In particular, for fixed $q$, the likelihood ratio depends solely on the number $r$ of distinct observed edges.
\end{lemma}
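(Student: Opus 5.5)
We compute both probabilities directly and take their ratio. Under $\mathsf H_0$ the graph is always $M$, so the $q$ sampled edges are independent and uniform over the $k$ edges of $M$. Hence every sequence $s$ of edges of $M$ satisfies
\[
  \Prob_{\mathsf H_0}[S=s]=k^{-q},
\]
regardless of $r$. This probability is positive, so the ratio is well defined. Sequences containing an edge outside $M$ have probability zero under both hypotheses, and we may ignore them.

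Under $\mathsf H_1$ we condition on the random $j$-subset $A\subseteq M$, which is uniform over the $\binom{k}{j}$ subsets. Given $A$, the samples are independent and uniform over the $j$ edges of $A$. Therefore $\Prob[S=s\mid A]=j^{-q}$ if $A$ contains all $r$ distinct edges of $s$, and $0$ otherwise.

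The number of $j$-subsets containing a fixed set of $r$ edges is $\binom{k-r}{j-r}$ when $r\le j$, and there are none when $r>j$. Averaging over $A$ therefore gives
\[
  \Prob_{\mathsf H_1}[S=s]=j^{-q}\binom{k-r}{j-r}\Big/\binom{k}{j}
\]
for $r\le j$, and $0$ for $r>j$. Dividing by $k^{-q}$ yields the stated ratio, and the ratio is zero when $r>j$.

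The ratio depends on $s$ only through $q$ (which is fixed) and the count $r$ of distinct edges. The order of the sequence and the pattern of repetitions do not enter, which gives the final claim. No step is a real obstacle. The only point needing care is the conditioning on $A$: the samples are i.i.d. only given $A$, so the counting of subsets containing the observed edges must be done before averaging over $A$.
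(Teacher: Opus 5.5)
Your proposal is correct and follows the paper's own proof essentially step for step. You compute $\Prob_{\mathsf H_0}[S=s]=k^{-q}$, condition on the uniformly random $j$-subset, count the $\binom{k-r}{j-r}$ subsets containing the observed edges, average to get $j^{-q}\binom{k-r}{j-r}/\binom{k}{j}$, and divide.
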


\begin{proof}
Under $\mathsf{H}_0$, the graph is the full matching $G = M$, so every sequence $s$ has probability $\Prob_{\mathsf{H}_0}[S = s] = k^{-q}$.
Under $\mathsf{H}_1$, $\Prob_{\mathsf{H}_1}[S = s] = 0$ for $r > j$ since a graph with $j$ edges cannot produce more than $j$ distinct sampled edges.
For $r \le j$, $\Prob[S = s \mid G]$ equals $j^{-q}$ if $G$ contains all $r$ distinct edges of $s$, and zero otherwise.
The graph $G$ is chosen uniformly at random among all $\binom{k}{j}$ $j$-subsets of $M$ under $\mathsf{H}_1$, of which $\binom{k-r}{j-r}$ contain the $r$ distinct edges of $s$.
Averaging over $G \sim \mathsf{H}_1$ gives $\Prob_{\mathsf{H}_1}[S = s] = \frac{\binom{k-r}{j-r}}{\binom{k}{j}} j^{-q}$. 
Dividing this by $\Prob_{\mathsf{H}_0}[S = s] = k^{-q}$ yields the stated ratio.
\end{proof}
Therefore the distribution of sampled edge sequences given $r$ is identical under both hypotheses, meaning that distinguishing the two distributions rests on the number of draws that repeat an earlier edge.
We now bound the KL divergence by considering whether each successive draw repeats an observed edge.
\begin{lemma}
\label{lem:hidden-support-kl}
There exists a constant $C>0$ such that for all $0<\eps\le1/8$, $k\ge\eps^{-2}$, and every integer $1\le q\le k/2$, we have $\KL\left(\mathsf P_1\middle\|\mathsf P_0\right)\le C\eps^2q^2 /k$.
In particular, for a sufficiently small constant $c \in (0, 1/2]$, every integer $1 \le q \le c\sqrt{k}/\eps$ satisfies $\TV(\mathsf P_0, \mathsf P_1) \le 1/4$.
\end{lemma}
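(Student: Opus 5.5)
By \Cref{lem:hidden-support-likelihood}, the likelihood ratio of a sequence depends only on its number $r$ of distinct edges. Moreover, conditional on whether each draw repeats an earlier edge, the identity of the draw is uniform among the observed edges or among the unobserved edges under both hypotheses. The data processing inequality (in both directions, since the sequence is recoverable in distribution from the repeat pattern) therefore reduces $\KL(\mathsf P_1\|\mathsf P_0)$ to the divergence between the laws of the binary repeat-indicator sequence $B_1,\dots,B_q$. Under $\mathsf H_0$, when $r$ distinct edges have been seen, the next draw is a repeat with probability $r/k$. Under $\mathsf H_1$, the same conditioning shows the probability is exactly $r/j$: given the history, $G$ is a uniform $j$-subset containing the $r$ observed edges, and the next edge is uniform on $G$. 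So the process is a Markov chain in $r$ under each hypothesis.

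Next I apply the chain rule. The total divergence is
\[
  \sum_{t=1}^{q}\E_{\mathsf H_1}\!\left[\KL\!\left(\operatorname{Ber}(r_{t-1}/j)\,\middle\|\,\operatorname{Ber}(r_{t-1}/k)\right)\right],
\]
where $r_{t-1}\le t-1<q\le k/2$. I then bound each term by the $\chi^2$ divergence, writing $x=r/k$ and $y=r/j$:
\[
  \KL\le\frac{(y-x)^2}{x(1-x)}.
\]
With $k/j-1\le 3\eps/(1-3\eps)+1/j=O(\eps)$, using $k\ge\eps^{-2}$ so that $1/j=O(\eps^2)\le O(\eps)$, we get $y-x=x(k/j-1)=O(\eps x)$. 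Since $x\le1/2$, each term is $O(\eps^2 x)=O(\eps^2 t/k)$. This requires $r/j<1$, which holds because $r<q\le k/2\le j$ when $\eps\le1/8$. Summing over $t\le q$ gives $C\eps^2q^2/k$.

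For the total variation claim, Pinsker's inequality applied to $\KL(\mathsf P_1\|\mathsf P_0)$ gives $\TV\le\sqrt{C\eps^2q^2/(2k)}\le\sqrt{C/2}\,c\le1/4$ for small $c$. The constraint $q\le k/2$ is implied because $c\sqrt k/\eps\le c k\le k/2$, using $\sqrt k\ge1/\eps$.

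The main obstacle is the first step: justifying rigorously that the sequence law factors as (repeat pattern) $\times$ (uniform identities given pattern) identically under both hypotheses. I would do this by directly computing $\Prob_{\mathsf H_i}[S=s]$ as a product over steps of $\tfrac{1}{k}$ or $\tfrac{1}{j}$ times conditional inclusion factors, then checking that it matches $\prod_t \Prob[B_t\mid r_{t-1}]\cdot(\text{uniform choice})$. The expression in \Cref{lem:hidden-support-likelihood} makes this a telescoping check, since $\binom{k-r}{j-r}/\binom{k}{j}=\prod_{i<r}(j-i)/(k-i)$.
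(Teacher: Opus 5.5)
Your proposal is correct and follows essentially the same route as the paper: both reduce each step to the repeat indicator, since identities given the indicator are uniform under both hypotheses. Both then apply the chain rule, bound $\KL(\operatorname{Ber}(r/j)\|\operatorname{Ber}(r/k))$ by the $\chi^2$ quantity $r(k-j)^2/(j^2(k-r))$ (which the paper derives via $\log x\le x-1$), and finish with Pinsker. One cosmetic slip: $k/j-1\le 3\eps/(1-3\eps)+1/j$ can fail slightly because of the floor in $j$. The same $O(\eps)$ bound follows from $k/j-1=(k-j)/j\le 4\eps k/(k/2)=8\eps$.
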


\begin{proof}
Since $k\ge\eps^{-2}$ and $\eps\le1/8$, we have $k-j \le 3\eps k + 1 \le 4\eps k \le k/2$ and $j\ge k/2$.
Fix a history of $t < q$ queries containing $r$ distinct edges.
Under $\mathsf H_1$, the graph remains uniform among all $j$-subsets containing these $r$ edges.
We classify the $(t+1)$-th response into two mutually exclusive events:
\begin{itemize}
  \item \textbf{Repeating an observed edge:}  A repeat occurs with probability $r / k$ and $r / j$ under $\mathsf{H}_0$ and $\mathsf{H}_1$, respectively.
  Under both hypotheses, the repeated edge is uniform among the $r$ observed edges.
  \item \textbf{Observing a new edge:} A new edge is observed with probability $1 - r / k$ and $1 - r / j$ under $\mathsf{H}_0$ and $\mathsf{H}_1$, respectively.
  Under both hypotheses, the new edge is uniform among the $k - r$ unobserved edges.
\end{itemize}
Since the two hypotheses differ only in the probability of a repeat, the divergence at step $t+1$ reduces to that of the repeat indicator $\KL(\operatorname{Ber}(r/j) \| \operatorname{Ber}(r/k))$. 
For $r = 0$, this divergence vanishes.
For $r \ge 1$, applying $\log x \le x - 1$ yields
\[
  \KL\left(\operatorname{Ber}(r/j)\middle\|\operatorname{Ber}(r/k)\right)
  \le\frac{r(k-j)^2}{j^2(k-r)}
  = O\!\left(\frac{\eps^2 t}{k}\right),
\]
using $j, k-r \ge k/2$, $k-j \le 4\eps k$, and $r \le t$.
Summing over all steps $t = 0, \ldots, q-1$ by the chain rule for KL divergence gives $\KL\left(\mathsf P_1\middle\|\mathsf P_0\right)\le C\eps^2q^2 /k$ for a sufficiently large constant $C > 0$.
By Pinsker's inequality, $\TV(\mathsf P_0, \mathsf P_1) \le \sqrt{\frac{1}{2} \KL(\mathsf P_1 \| \mathsf P_0)} \le \sqrt{C\eps^2 q^2 / (2k)} \le c\sqrt{C/2} \le 1/4$ for a sufficiently small constant $c \in (0, 1/2]$.
\end{proof}
Set $q := \lfloor c\sqrt{k}/\eps \rfloor$.
Since $k \ge \eps^{-2}$, we have $\eps^{-2} \le \sqrt{k}/\eps \le k$, so $c \le 1/2$ gives $q \le k/2$, and choosing $\eps_0 \le 1/8$ sufficiently small after fixing $c$ ensures $q \ge 1$ and $q = \Theta(\sqrt{k}/\eps)$.
By \Cref{lem:hidden-support-kl}, this choice of $q$ ensures $\TV(\mathsf P_0, \mathsf P_1) \le 1/4$.
Applying \Cref{lem:estimation-to-testing} alongside the separation of $k$ and $j$ establishes the worst case expected query lower bound $\Omega(q) = \Omega(\sqrt{k}/\eps)$ for the regime $k \ge \eps^{-2}$, completing the proof of \Cref{thm:intro-edge-lb}.

\section{Analysis of Algorithm~\ref{alg:moment-estimation}}
\label{sec:moment-detail}
In this section, we provide the omitted analysis of \Cref{alg:moment-estimation}, completing the proof of \Cref{thm:intro-moment}.
Throughout this section, fix an accuracy parameter $0 < \eps \le 1$ and an integer $s \ge 2$, and write $W := 2m$, $M_s := \sum_{v \in V} (d_v)^s$, and $\Lambda_s := W / M_s^{1/s}$.

\subsection{The Moment Estimator}
We first analyze the mean and variance of the Horvitz--Thompson estimator $Z_q(W)$ assuming the total degree sum $W$ is known.
Given an expected sample size $q > 0$, the algorithm draws $N \sim \operatorname{Pois}(q)$ and takes $N$ independent fair endpoint samples.
Let $N_v$ count the occurrences of vertex $v \in V$, $I_v := \Ind[N_v > 0]$ indicate whether $v$ is sampled at least once, and $S := \{v \in V : I_v = 1\}$ denote the set of distinct observed vertices.
We define the estimator as
\[
  Z_q(W) := \sum_{v \in S} \frac{(d_v)^s}{\pi_v(q)}
  = \sum_{v \in V} I_v \frac{(d_v)^s}{\pi_v(q)},
  \qquad
  \pi_v(q) := 1 - e^{-q d_v / W},
\]
with an empty sum evaluating to zero.
\begin{lemma}
\label{lem:moment-ht-props}
For every $q > 0$, the mean and variance of the estimator $Z_q(W)$ are
\[
  \E[Z_q(W)] = M_s,
  \qquad
  \Var[Z_q(W)] = \sum_{v \in V} \frac{(d_v)^{2s}}{e^{q d_v / W} - 1}.
\]
\end{lemma}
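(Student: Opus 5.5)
The plan rests on Poisson thinning. Conditional on $N\sim\operatorname{Pois}(q)$, each of the $N$ fair endpoint samples falls on $v$ with probability $p_v=d_v/W$. Hence the counts $(N_v)_{v\in V}$ are mutually independent with $N_v\sim\operatorname{Pois}(qp_v)=\operatorname{Pois}(qd_v/W)$. I would state this first; it is the standard splitting property of the Poisson distribution, proved by writing the joint mass function of $(N_v)$ as a multinomial times the Poisson mass of $N=\sum_v N_v$ and factoring. It follows that the indicators $I_v=\Ind[N_v>0]$ are independent Bernoulli variables with
\[
  \Prob[I_v=1]=1-e^{-qd_v/W}=\pi_v(q),
\]
which is positive for every nonisolated $v$. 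Isolated vertices may be ignored, as the section already notes.

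For the mean, linearity gives
\[
  \E[Z_q(W)]=\sum_v \frac{(d_v)^s}{\pi_v(q)}\,\E[I_v]=\sum_v (d_v)^s=M_s.
\]
For the variance, independence of the $I_v$ kills all covariance terms, so
\[
  \Var[Z_q(W)]=\sum_v \frac{(d_v)^{2s}}{\pi_v(q)^2}\,\pi_v(q)\bigl(1-\pi_v(q)\bigr)=\sum_v (d_v)^{2s}\,\frac{1-\pi_v}{\pi_v}.
\]
It then remains to simplify the ratio with $x=qd_v/W$:
\[
  \frac{1-\pi_v}{\pi_v}=\frac{e^{-x}}{1-e^{-x}}=\frac{1}{e^x-1}.
\]
This gives the stated formula. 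Both sums are finite, since $V$ is finite.

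The only step needing care is the independence claim in Poisson splitting. I would verify it directly: for nonnegative integers $(n_v)$ with sum $n$, the product $e^{-q}\frac{q^n}{n!}\cdot\frac{n!}{\prod_v n_v!}\prod_v p_v^{n_v}$ equals $\prod_v e^{-qp_v}(qp_v)^{n_v}/n_v!$, using $\sum_v p_v=1$. Everything else is routine.
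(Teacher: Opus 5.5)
Your proposal is correct and follows essentially the same route as the paper: Poisson splitting makes the $N_v$ independent $\operatorname{Pois}(qd_v/W)$, so the indicators $I_v$ are independent Bernoulli$(\pi_v(q))$. Linearity then gives the mean, and independence gives the variance, with $(1-\pi_v)/\pi_v = 1/(e^{qd_v/W}-1)$. The only addition is that you verify the splitting property explicitly via the multinomial factorization, where the paper simply invokes it.
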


\begin{proof}
Because $N \sim \operatorname{Pois}(q)$ and each of the $N$ samples independently selects vertex $v$ with probability $d_v / W$, Poisson splitting ensures that the $N_v$ are mutually independent with $N_v \sim \operatorname{Pois}(q d_v / W)$; hence the indicators $I_v$ are also mutually independent with $\Prob[I_v = 1] = 1 - \Prob[N_v = 0] = \pi_v(q)$.
By linearity of expectation,
\[
  \E[Z_q(W)]
  =\sum_{v\in V}\frac{(d_v)^s\E[I_v]}{\pi_v(q)}
  =M_s.
\]
Because the indicators $I_v$ are mutually independent, the variance is
\[
  \Var[Z_q(W)]
  =\sum_{v\in V}\frac{(d_v)^{2s}\Var[I_v]}{\pi_v(q)^2}
  =\sum_{v\in V}(d_v)^{2s}\frac{1-\pi_v(q)}{\pi_v(q)}
  =\sum_{v\in V}\frac{(d_v)^{2s}}{e^{qd_v/W}-1}. \qedhere
\]
\end{proof}
We now bound this variance in terms of $\Lambda_s$, and $\Lambda_s$ in terms of $\nu$.
For $q > 0$, write $R_s(q) := \Var[Z_q(W)] / M_s^2$.

\begin{lemma}
\label{lem:moment-ht-relvar}
For every $q>0$,
\[
  R_s(q)
  \le\min\!\left\{
    s!\left(\frac{\Lambda_s}{q}\right)^s,\,
    (2s-1)!\frac{\Lambda_s^{2s}}{q^{2s-1}}
  \right\}.
\]
Furthermore, $\Lambda_s\le2(2\nu)^{1-1/s}$.
\end{lemma}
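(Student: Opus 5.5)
We start from the exact variance formula of \Cref{lem:moment-ht-props}, $\Var[Z_q(W)]=\sum_{v}(d_v)^{2s}/(e^{qd_v/W}-1)$, and lower bound each denominator using the exponential series, $e^x-1\ge x^a/a!$ for $x\ge0$ and integer $a\ge1$.

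For the first bound we take $a=s$. Each summand is then at most $s!\,(d_v)^{2s}(W/(qd_v))^s=s!\,(W/q)^s(d_v)^s$. Summing over $v$ gives $\Var\le s!(W/q)^sM_s$. Dividing by $M_s^2$ yields $s!\,W^s/(q^sM_s)=s!(\Lambda_s/q)^s$, since $\Lambda_s^s=W^s/M_s$.

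For the second bound we take $a=2s-1$. Each summand is then at most $(2s-1)!\,(W/q)^{2s-1}d_v$. Summing over $v$ and using $\sum_v d_v=W$ gives $\Var\le(2s-1)!\,W^{2s}/q^{2s-1}$. Dividing by $M_s^2$, and noting $W^{2s}/M_s^2=\Lambda_s^{2s}$, gives the second term. Taking the minimum of the two bounds proves the first claim.

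For $\Lambda_s\le2(2\nu)^{1-1/s}$, let $C^\star$ be the endpoint set of a maximum matching. As in the proof of \Cref{lem:captured-endpoint-mass}, $C^\star$ is a vertex cover of size $2\nu$. Every edge therefore contributes at least one to $D(C^\star)$, so $D(C^\star)\ge m=W/2$. Hölder's inequality (equivalently, the power mean inequality) on $C^\star$ gives
\[
  D(C^\star)=\sum_{v\in C^\star}d_v\le|C^\star|^{1-1/s}\Bigl(\sum_{v\in C^\star}(d_v)^s\Bigr)^{1/s}\le(2\nu)^{1-1/s}M_s^{1/s}.
\]
Combining the two inequalities gives $W\le2(2\nu)^{1-1/s}M_s^{1/s}$, which is the claim.

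There is no real obstacle in this argument. The only point needing care is applying Hölder on the cover rather than on all of $V$, which is what replaces $n$ by $\nu$. The factor $2$ comes from the cover capturing only half of $W$.
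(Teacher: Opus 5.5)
Your proposal is correct and follows the paper's proof essentially step for step. You use the same exponential-series bound $e^x-1\ge x^a/a!$ with $a=s$ and $a=2s-1$ for the two variance bounds, and the same H\"older argument on the matching-endpoint vertex cover $C^\star$, which carries at least $W/2$ of the degree sum.
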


\begin{proof}
For every integer $a \ge 1$ and $x \ge 0$, the exponential series yields $e^x - 1 \ge x^a / a!$.
Applying this with $x = q d_v / W$ for $a = s$ gives
\[
  \Var[Z_q(W)]
  \le s!\left(\frac Wq\right)^s\sum_{v\in V}(d_v)^s
  =s!\left(\frac Wq\right)^sM_s.
\]
Similarly, taking $a = 2s - 1$ and using $\sum_{v \in V} d_v = W$ yields
\[
  \Var[Z_q(W)]
  \le(2s-1)!\left(\frac Wq\right)^{2s-1}\sum_{v\in V}d_v
  =(2s-1)!\frac{W^{2s}}{q^{2s-1}}.
\]
Dividing both bounds by $M_s^2$ and recalling $\Lambda_s = W / M_s^{1/s}$ proves the variance bound.

For the bound on $\Lambda_s$, let $C^\star$ be the endpoints of a maximum matching. Since $\sum_{v\in C^\star}d_v\ge W/2$, H\"older's inequality gives
\[
  \frac W2 \le\sum_{v\in C^\star}d_v \le |C^\star|^{1-1/s} \left(\sum_{v\in C^\star}(d_v)^s\right)^{1/s} \le(2\nu)^{1-1/s}M_s^{1/s}.
\]
Dividing by $M_s^{1/s}$ completes the proof.
\end{proof}

\subsection{Choosing the Parameters Without Advice}
We evaluate the estimator with an arbitrary estimate $U > 0$ in place of $W$ by setting
\[
  Z_q(U) := \sum_{v \in S} \frac{(d_v)^s}{\pi_v(q, U)} = \sum_{v \in V} I_v \frac{(d_v)^s}{\pi_v(q, U)}, \qquad \pi_v(q, U) := 1 - e^{-q d_v / U},
\]
with an empty sum evaluating to zero.
Replacing $W$ with $U$ alters only the denominators while each vertex is still observed with probability $\pi_v(q)$ as the samples do not depend on $U$.
Recall that $R_s(q) = \Var[Z_q(W)] / M_s^2$. 
\begin{lemma}
\label{lem:moment-estimated-total-degree}
Fix $q, U > 0$ and write $c := U / W$.
For every $v \in V$,
\begin{equation}
  \min\{1, c\} \le \frac{\pi_v(q)}{\pi_v(q, U)} \le \max\{1, c\}.
  \label{eq:moment-inclusion-ratio}
\end{equation}
Consequently, 
\[
\min\{1, c\} M_s \le \E[Z_q(U)] \le \max\{1, c\} M_s, \qquad \Var[Z_q(U)] \le \max\{1, c\}^2 \Var[Z_q(W)].
\]
Furthermore, if $1/2 \le c \le 2$, then
\[
  \frac{\abs{\E[Z_q(U)] - M_s}}{M_s} \le 4|c - 1| \min\left\{1, \sqrt{q R_s(q/2)}\right\}, \quad \Var[Z_q(U)] \le 4 M_s^2 R_s(q).
\]
\end{lemma}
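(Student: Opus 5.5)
The plan is to reduce everything to a per-vertex comparison of the two inclusion probabilities, and then sum termwise using independence of the indicators $I_v$.

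\textbf{Inclusion ratio.} For \eqref{eq:moment-inclusion-ratio}, write $x := q d_v/W > 0$, so that $\pi_v(q) = 1-e^{-x}$ and $\pi_v(q,U) = 1-e^{-x/c}$. The map $f(y) := 1-e^{-y}$ is increasing and concave with $f(0)=0$, hence $f(y)/y$ is nonincreasing. If $c \ge 1$, then $x/c \le x$ gives $f(x/c) \le f(x)$ and $f(x)/x \le f(x/c)/(x/c)$, so $1 \le f(x)/f(x/c) \le c$. The case $c<1$ is symmetric.

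\textbf{Mean and variance.} Since $\E[I_v]=\pi_v(q)$,
\[
  \E[Z_q(U)] = \sum_{v} (d_v)^s \frac{\pi_v(q)}{\pi_v(q,U)} ,
\]
and \eqref{eq:moment-inclusion-ratio} sandwiches this between $\min\{1,c\}M_s$ and $\max\{1,c\}M_s$. By independence,
\[
  \Var[Z_q(U)] = \sum_v (d_v)^{2s}\,\frac{\pi_v(q)(1-\pi_v(q))}{\pi_v(q,U)^2} .
\]
Each summand is at most $\max\{1,c\}^2$ times the corresponding summand of $\Var[Z_q(W)]$ from \Cref{lem:moment-ht-props}. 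For $c \in [1/2,2]$ this already gives $\Var[Z_q(U)] \le 4M_s^2R_s(q)$.

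\textbf{Refined bias (main obstacle).} Write
\[
  \E[Z_q(U)]-M_s = \sum_v (d_v)^s\left(\frac{\pi_v(q)}{\pi_v(q,U)} - 1\right).
\]
The trivial bound $|c-1|M_s$ from \eqref{eq:moment-inclusion-ratio} supplies the ``$1$'' branch of the minimum. For the $\sqrt{qR_s(q/2)}$ branch, I would show per vertex that
\[
  \left|\frac{\pi_v(q)}{\pi_v(q,U)} - 1\right| \le 4|c-1|\,\min\{1,x\}\,e^{-x/2}
\]
for $c\in[1/2,2]$. The heuristic is that this deviation should decay exponentially in $x$, since both probabilities tend to $1$. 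The proof would use the mean value theorem on $g(y)=1-e^{-y}$: the difference is $|f(x)-f(x/c)| \le |x-x/c|\,e^{-\min\{x,x/c\}} \le 2|c-1|\,x e^{-x/2}$, and dividing by $\pi_v(q,U)\ge f(x/2)\ge \min\{x,1\}/4$ or so gives the claim. Checking the constants in this step is the part I expect to be fiddly.

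\textbf{Cauchy--Schwarz.} Given the per-vertex bound, Cauchy--Schwarz gives
\[
  \sum_v (d_v)^s\, x e^{-x/2}
  \le \left(\sum_v (d_v)^{2s} e^{-x}\, x\right)^{1/2}\left(\sum_v x\right)^{1/2} .
\]
Here $\sum_v x = q$, and $e^{-x}x \lesssim 1/(e^{x/2}-1)$ matches the summands of $\Var[Z_{q/2}(W)] = M_s^2R_s(q/2)$, since $1/(e^{x/2}-1)\ge e^{-x/2}\cdot(2/x)\cdot$const. I will adjust the split between the two Cauchy--Schwarz factors, for example moving a factor $x$ to the side summing to $q$, so that the first factor is bounded by a constant times $M_s^2 R_s(q/2)$ using $x e^{-x} \le 2e^{-x/2}\cdot x/2 \le 2/(e^{x/2}-1)\cdot(x/2)^2e^{-x/2}$-type inequalities. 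This yields a bias bound of order $|c-1|M_s\sqrt{qR_s(q/2)}$, and I would then tune the constant to $4$.
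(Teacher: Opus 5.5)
Your first two parts are correct and match the paper: the ratio sandwich via monotonicity of $1-e^{-t}$ and $(1-e^{-t})/t$, and the mean and variance bounds via independence of the $I_v$. So is your overall plan for the bias: the trivial bound $|c-1|M_s$ for the first branch, and the mean value theorem plus Cauchy--Schwarz against $\Var[Z_{q/2}(W)]$ and $\sum_v x_v=q$ for the second.

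The gap is the per-vertex bound $|\pi_v(q)/\pi_v(q,U)-1|\le 4|c-1|\min\{1,x\}e^{-x/2}$, which is false. As $x\to0$ the ratio $(1-e^{-x})/(1-e^{-x/c})$ tends to $c$, so the deviation tends to $|c-1|$, not $0$. For $c=2$ and $x=0.01$, the deviation is about $0.995$ while your bound is about $0.04$. Your own computation actually yields $2|c-1|xe^{-x/2}\big/(\min\{x,1\}/4)=8|c-1|\max\{1,x\}e^{-x/2}$, with $\max$ in place of $\min$. As a result, your Cauchy--Schwarz step bounds the wrong sum. The sum $\sum_v (d_v)^s x_v e^{-x_v/2}$ tends to $0$ as $q\to0$, whereas the true bias tends to $(c-1)M_s$, so the chain as written asserts a false intermediate inequality. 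The auxiliary chain ending in $2(x/2)^2e^{-x/2}/(e^{x/2}-1)$ in your last paragraph is also false, since it would require $e^{x/2}-1\le x/2$.

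The fix, which is what the paper does, is to keep numerator and denominator together rather than lower-bounding $\pi_v(q,U)$ separately. Write $m_v:=\min\{x_v,x_v/c\}$. The deviation equals $|e^{-x_v/c}-e^{-x_v}|/(1-e^{-x_v/c})$. Its numerator is at most $|x_v-x_v/c|\,e^{-m_v}$ and its denominator is at least $1-e^{-m_v}$. This gives a bound of $|c-1|(x_v/c)/(e^{m_v}-1)\le 4|c-1|\,\frac{x_v/2}{e^{x_v/2}-1}$ for $c\in[1/2,2]$.

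This bound has the correct shape: about $4|c-1|$ near $0$, and exponentially small for large $x_v$. It also pairs exactly with the variance. Write $\frac{(d_v)^s(x_v/2)}{e^{x_v/2}-1}=\frac{(d_v)^s}{\sqrt{e^{x_v/2}-1}}\cdot\frac{x_v/2}{\sqrt{e^{x_v/2}-1}}$. Cauchy--Schwarz then makes one factor equal to $\Var[Z_{q/2}(W)]^{1/2}=M_s\sqrt{R_s(q/2)}$, and $\frac{(x_v/2)^2}{e^{x_v/2}-1}\le\frac{x_v}{2}$ bounds the other by $\sqrt{q/2}$. The bias is then at most $4|c-1|M_s\sqrt{(q/2)R_s(q/2)}$. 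Together with the trivial bound, this gives the stated constant $4$ with no tuning.

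You could instead repair your version by using $\max\{1,x_v\}$ and putting a $\sqrt{x_v}$ weight on one side of Cauchy--Schwarz. That works up to constants, but with your estimate $f(x/2)\ge\min\{x,1\}/4$ the constant comes out above $4$.
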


\begin{proof}
Because $t \mapsto 1 - e^{-t}$ is strictly increasing and concave with value $0$ at the origin, the slope $(1 - e^{-t})/t$ is strictly decreasing on $(0, \infty)$.
For $x_v := q d_v / W > 0$, the monotonicity of $1 - e^{-t}$ and of $(1 - e^{-t})/t$ implies that $\frac{\pi_v(q)}{\pi_v(q, U)}$ lies in $[1, c]$ when $c \ge 1$, and in $[c, 1]$ when $0 < c \le 1$, establishing \eqref{eq:moment-inclusion-ratio}.
Since $\E[Z_q(U)] = \sum_{v \in V} (d_v)^s \pi_v(q)/\pi_v(q, U)$, \eqref{eq:moment-inclusion-ratio} gives the bounds on the mean.
Independence of the indicators $I_v$ then yields
\begin{align*}
  \Var[Z_q(U)]
  &= \sum_{v \in V} (d_v)^{2s} \frac{\pi_v(q)(1 - \pi_v(q))}{\pi_v(q, U)^2}\\
  &\le \max\{1, c\}^2 \sum_{v \in V} \frac{(d_v)^{2s}(1 - \pi_v(q))}{\pi_v(q)}= \max\{1, c\}^2 \Var[Z_q(W)].
\end{align*}
When $1/2 \le c \le 2$, this shows $\Var[Z_q(U)] \le 4 M_s^2 R_s(q)$.

The bounds on the mean already give $|\E[Z_q(U)] - M_s| \le |c - 1| M_s$.
For the refined bound, the mean value theorem gives $|e^{-a} - e^{-b}| \le |a - b| e^{-\min\{a, b\}}$ for $a, b > 0$.
Applying this with $a = x_v$ and $b = x_v/c$, and using $1/2 \le c \le 2$, yields
\[
  \left|\frac{1 - e^{-x_v}}{1 - e^{-x_v/c}} - 1\right|
  \le \frac{|c - 1| x_v/c}{e^{\min\{x_v, x_v/c\}} - 1}
  \le 4 |c - 1| \frac{x_v/2}{e^{x_v/2} - 1}.
\]
The bias thus satisfies
\[
  |\E[Z_q(U)] - M_s|
  \le \sum_{v \in V} (d_v)^s \left|\frac{\pi_v(q)}{\pi_v(q, U)} - 1\right|
  \le 4 |c - 1| \sum_{v \in V} (d_v)^s \frac{x_v/2}{e^{x_v/2} - 1}.
\]
Cauchy--Schwarz gives
\begin{align*}
  \sum_{v \in V} (d_v)^s \frac{x_v/2}{e^{x_v/2} - 1}
    &\le \left(\sum_{v \in V} \frac{(d_v)^{2s}}{e^{x_v/2} - 1}\right)^{1/2} \left(\sum_{v \in V} \frac{(x_v/2)^2}{e^{x_v/2} - 1}\right)^{1/2} \\
    &\le \sqrt{\Var[Z_{q/2}(W)]} \left(\sum_{v \in V} \frac{x_v}{2}\right)^{1/2} \\
    &= M_s \sqrt{R_s(q/2)} \sqrt{\frac{q}{2}}
    = M_s \sqrt{\frac{q}{2} R_s(q/2)},
\end{align*}
where the second inequality uses $y / (e^y - 1) \le 1$ for all $y > 0$, and the third step uses $\sum_{v \in V} x_v = q \sum_{v \in V} d_v / W = q$.
Combining the two bounds establishes the bias bound.
\end{proof}

\subparagraph*{Choosing the expected sample size.}
Running \Cref{alg:endpoint-capture} with accuracy $1/4$ and setting $\widetilde{W}$ to twice its output gives $\Prob[3W/4 \le \widetilde{W} \le 5W/4] \ge 63/64$ at an expected query complexity of $O(1 + \sqrt{\nu})$.
In search iteration $j \ge 0$, the algorithm sets
\[
  \Gamma_j := (2\widetilde{W})^s 2^{-j}, \qquad A_j := \frac{\widetilde{W}}{\Gamma_j^{1/s}} = 2^{j/s-1},
\]
draws $N_j \sim \operatorname{Pois}(4s A_j)$, takes $N_j$ fair endpoint samples, and evaluates $Z_j := Z_{4s A_j}(\widetilde{W})$.
The search terminates at the first index $J$ satisfying $Z_J \ge \Gamma_J / 2$ (with $J = \infty$ if it never stops).

\begin{lemma}
\label{lem:moment-search}
In the above process, $J$ is finite almost surely, and the following hold:
\begin{enumerate}
  \item Let $\mathcal E_0 := \{3W/4 \le \widetilde{W} \le 5W/4\}$.
  For a sufficiently large constant $K$, we have $\Prob[\Gamma_J > K M_s \mid \mathcal E_0] \le 1/64$.
  \item The selected value $A_J$ satisfies $\E[A_J] = O(\Lambda_s)$ and $\E[A_J^{2s/(2s-1)}] = O(\Lambda_s^{2s/(2s-1)})$.
  \item The expected total number of queries over all search iterations, excluding the initial call to \Cref{alg:endpoint-capture}, is $O(\Lambda_s)$.
\end{enumerate}
\end{lemma}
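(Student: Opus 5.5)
The plan is to condition throughout on the value of $\widetilde W$, write $c:=\widetilde W/W$, and use \Cref{lem:moment-estimated-total-degree} with $U=\widetilde W$ at every iteration. The iterations use fresh Poissonized samples, so conditional on $\widetilde W$ the statistics $Z_0,Z_1,\ldots$ are independent. Each satisfies $\min\{1,c\}M_s\le\E[Z_j]\le\max\{1,c\}M_s$ and $\Var[Z_j]\le\max\{1,c\}^2\Var[Z_{4sA_j}(W)]$.

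With $q_j:=4sA_j$ we have $(W/q_j)^s=\Gamma_j/((4s)^sc^s)$. The first bound of \Cref{lem:moment-ht-relvar} therefore gives
\[
\Var[Z_j]\le \max\{1,c\}^2\, s!\,\Gamma_j M_s/\bigl((4s)^sc^s\bigr).
\]
Once $\Gamma_j\le\min\{1,c\}M_s$, we have $\E[Z_j]-\Gamma_j/2\ge\E[Z_j]/2$. Chebyshev then bounds the continuation probability $\Prob[Z_j<\Gamma_j/2]$ by $\theta(c)\,\Gamma_j/M_s$, where $\theta(c)$ is an explicit function polynomial in $c$ and $c^{-1}$, and equals $O_s(1)$ for $c\in[1/2,2]$. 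Because $\Gamma_j\to0$ geometrically, these probabilities tend to zero. Independence then gives $J<\infty$ almost surely for every $\widetilde W>0$.

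\textbf{Part 1.} On $\mathcal E_0$ we have $c\le5/4$. Markov's inequality gives $\Prob[Z_j\ge\Gamma_j/2\mid\widetilde W]\le \tfrac52 M_s/\Gamma_j$. Summing over all $j$ with $\Gamma_j>KM_s$ is a geometric sum dominated by its smallest term, giving a total of at most $5M_s/(KM_s)\le1/64$ for $K\ge320$.

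\textbf{Part 2.} Let $j^\ast$ be the first index with $\Gamma_{j^\ast}\le \min\{1,c\}M_s/\max\{1,\theta(c)\}$. Since $A_j=\widetilde W/\Gamma_j^{1/s}$, we get
\[
A_{j^\ast}\le 2\,c\,\Lambda_s\bigl(\max\{1,\theta(c)\}/\min\{1,c\}\bigr)^{1/s}.
\]
For $\ell\ge0$, reaching $j^\ast+\ell+1$ requires continuing at every index in $[j^\ast,j^\ast+\ell]$, and each such continuation has probability at most $2^{-(j-j^\ast)}$. The reach probability therefore decays like $2^{-\ell(\ell+1)/2}$, while $A_{j^\ast+\ell}=2^{\ell/s}A_{j^\ast}$. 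Hence for $p\in\{1,2s/(2s-1)\}$,
\[
\E[A_J^p\mid\widetilde W]\le A_{j^\ast}^p\Bigl(1+\sum_{\ell\ge 0}2^{(\ell+1)p/s}\,2^{-\ell(\ell+1)/2}\Bigr)=O_s(A_{j^\ast}^p).
\]
Because of the superexponential decay, even the crude bound $2^{-\ell}$ would suffice, since $p/s<1$.

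The main obstacle is averaging $A_{j^\ast}^p\propto\Lambda_s^p\,g(c)^p$ over the random $c$. The only available control is $\E[c^2],\E[c^{-2}]=O(1)$ from \Cref{thm:endpoint-capture}, and $p$ can be as large as $2s/(2s-1)\le 4/3$. I will therefore need $g(c)^p$ to grow at most like $c^{2}+c^{-2}$. My first attempt is to compute $\theta(c)$ carefully. For $c\ge1$ the variance factor is $c^2/c^s\le1$. For $c\le1$ the Chebyshev ratio is about $c^{-s}\cdot c^{-2}$, which after the $1/s$ root and the prefactor $c$ gives $g(c)\lesssim c^{-2/s}$. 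Then $g^p\lesssim c^{-2p/s}\le c^{-2}$, since $p\le s$.

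If some regime is worse, my fallback is a different choice of acceptance threshold for $j^\ast$. I would use $\Gamma_{j^\ast}\asymp\min\{1,c\}^{\alpha}M_s$ and, for $c\le1$, exploit that the mean lower bound improves with the concavity in \eqref{eq:moment-inclusion-ratio}. The aim is to keep the exponent of $c^{-1}$ in $A_{j^\ast}^p$ at most $2$.

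\textbf{Part 3.} Iteration $j$ costs at most $2\E[N_j]=8sA_j$ expected queries. Given $J$, the total is $\sum_{j\le J}8sA_j=O_s(A_J)$ because the $A_j$ grow geometrically with ratio $2^{1/s}$. Since $N_j$ is independent of the stopping decisions before step $j$, Part 2 with $p=1$ then yields the expected cost $O(\Lambda_s)$.
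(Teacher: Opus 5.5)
Your proposal is correct and has the same skeleton as the paper's proof. You condition on $\widetilde W$ with $c=\widetilde W/W$. Part~1 uses Markov plus a geometric sum with $K\ge320$. You apply Chebyshev beyond a crossing index for finiteness and for the tail of $J$, average over $c$ using $\E[c^{2}],\E[c^{-2}]=O(1)$, and use $\sum_{j\le J}A_j=O_s(A_J)$ for Part~3. The difference is in Part~2. You make the continuation probability decay like $\theta(c)\Gamma_j/M_s$, which gives a superexponential tail, and you bound $\E[A_J^p]$ separately for each $p$. The paper instead uses only the constant bound $4^{1-s}\le 1/4$ for every $j$ with $A_j\ge\Lambda_s\max\{c,c^{-2/s}\}$. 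It bounds $\E[A_J^2\mid\widetilde W]\le 2A_\star^2$ once and obtains both exponents by Jensen.

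One computation in your plan is off, though the conclusion survives. For $c\le 1$ you have exactly $\theta(c)=\frac{4\,s!}{(4s)^s}c^{-(s+2)}$. For small $c$ your own crossing bound then gives
\[
A_{j^\ast}\le 2c\Lambda_s\Bigl(\frac{\theta(c)}{c}\Bigr)^{1/s}=O\bigl(c^{-3/s}\Lambda_s\bigr),
\]
not $O(c^{-2/s}\Lambda_s)$: you dropped the factor $(1/\min\{1,c\})^{1/s}$. The condition you need is therefore $3p\le 2s$, not $p\le s$. It holds for $p\in\{1,2s/(2s-1)\}$ and all $s\ge 2$, so no fallback is needed. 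However, equality holds at $s=2$, $p=4/3$, where you use up exactly $\E[c^{-2}]=O(1)$ with no slack. For the same reason, the paper's shortcut through $\E[A_J^2]$ would fail with your threshold: it would require $\E[c^{-6/s}]$, i.e.\ $\E[c^{-3}]$ for $s=2$, which \Cref{thm:endpoint-capture} does not supply.

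The extra $c^{-1/s}$ comes from the product condition $\Gamma_{j^\ast}\le\min\{1,c\}M_s/\max\{1,\theta(c)\}$. Use the minimum condition $\Gamma_{j^\ast}\le M_s\min\{1,c,1/\theta(c)\}$ instead. It still gives continuation probability at most $2^{-(j-j^\ast)}$ for $j\ge j^\ast$, and it yields $A_{j^\ast}=O(\Lambda_s\max\{c,c^{-2/s}\})$, matching the paper's $A_\star$. If $j^\ast=0$, use $A_0=1/2\le\Lambda_s$ in place of the crossing bound.
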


\begin{proof}
We first show that when $\widetilde{W}$ is accurate (event $\mathcal E_0$), the search rarely accepts an oversized candidate $\Gamma_j \gg M_s$ because the acceptance threshold $\Gamma_j / 2$ greatly exceeds the estimator's conditional expectation.
Fix the estimate $\widetilde{W} > 0$ and write $c := \widetilde{W} / W$.
For the analysis, define $Z_j$ from an independent sample batch for every $j$, including iterations not reached by the algorithm.
Conditional on $\widetilde{W}$, the event $\{J \ge j\}$ depends only on the outcomes of earlier iterations $0, \dots, j-1$ and is independent of $Z_j$.

Suppose first that $\mathcal E_0$ holds, so $3/4 \le c \le 5/4$.
By \Cref{lem:moment-estimated-total-degree}, $\E[Z_j \mid \widetilde{W}] \le 5M_s / 4$.
Because stopping at iteration $j$ requires $Z_j \ge \Gamma_j / 2$, Markov's inequality yields
\[
  \Prob[J = j \mid \widetilde{W}] \le \Prob[Z_j \ge \Gamma_j / 2 \mid \widetilde{W}] \le \frac{\E[Z_j \mid \widetilde{W}]}{\Gamma_j / 2} \le \frac{5M_s}{2\Gamma_j}.
\]
The candidate values $\Gamma_j = (2\widetilde{W})^s 2^{-j}$ halve at each iteration, so the terms $M_s / \Gamma_j$ double with $j$.
If $t$ is the largest index with $\Gamma_t > K M_s$ (with the sum understood as zero if no such index exists), then
\[
  \sum_{j:\,\Gamma_j > K M_s} \frac{M_s}{\Gamma_j} = \frac{M_s}{\Gamma_t} \sum_{r=0}^t 2^{-r} < \frac{2M_s}{\Gamma_t} < \frac{2}{K}.
\]
Summing over all such indices gives
\[
  \Prob[J < \infty, \, \Gamma_J > K M_s \mid \widetilde{W}] \le \frac{5}{2} \sum_{j:\,\Gamma_j > K M_s} \frac{M_s}{\Gamma_j} < \frac{5}{K},
\]
so choosing $K \ge 320$ makes this at most $1/64$ for every $\widetilde{W}$ in $\mathcal E_0$, and averaging over such $\widetilde{W}$ gives the first assertion, since $J < \infty$ almost surely, as shown below.

To bound the moments of $A_J$ and query complexity, we now allow arbitrary $c > 0$.
By \Cref{lem:moment-estimated-total-degree,lem:moment-ht-relvar}, using $s! \le s^s$, the expectation and variance satisfy
\[
  \E[Z_j \mid \widetilde{W}] \ge M_s \min\{1, c\}, \qquad
  \Var[Z_j \mid \widetilde{W}] \le 4^{-s} M_s^2 \left(\frac{\Lambda_s}{A_j}\right)^s \max\{1, c\}^2.
\]
To ensure that $Z_j \ge \Gamma_j / 2$ occurs with high probability, it suffices that $\Gamma_j \le M_s \min\{1, c\}$ (so that the acceptance threshold lies at least $M_s \min\{1, c\} / 2$ below the mean) and that the variance is small relative to $(M_s \min\{1, c\})^2$.
Since $\Gamma_j / M_s = (c \Lambda_s / A_j)^s$, the condition $\Gamma_j \le M_s \min\{1, c\}$ requires $A_j \ge c \Lambda_s$ when $c \ge 1$ and $A_j \ge c^{(s-1)/s} \Lambda_s$ when $c \le 1$.
Meanwhile, the variance bound gives $\Var[Z_j \mid \widetilde{W}] / (M_s \min\{1, c\})^2 \le 4^{-s} (\Lambda_s / A_j)^s \max\{c^2, c^{-2}\}$.
To make this upper bound at most $4^{-s}$, it suffices to take $A_j \ge c^{2/s} \Lambda_s$ when $c \ge 1$ and $A_j \ge c^{-2/s} \Lambda_s$ when $c \le 1$.
Because $s \ge 2$, both requirements are simultaneously satisfied when $A_j \ge \Lambda_s \max\{c, c^{-2/s}\}$.
Accordingly, define
\[
  j_\star := \min\{j \ge 0 : A_j \ge \Lambda_s \max\{c, c^{-2/s}\}\}, \qquad A_\star := A_{j_\star}.
\]
Because $M_s \le W^s$ and $\max\{c, c^{-2/s}\} \ge 1$, the threshold $\Lambda_s \max\{c, c^{-2/s}\}$ is at least $1$.
Since $A_0 = 1/2$ and $A_{j+1} = 2^{1/s} A_j$, the index $j_\star$ is positive and finite, and satisfies
\begin{equation}
  \Lambda_s \max\{c, c^{-2/s}\} \le A_\star < 2^{1/s} \Lambda_s \max\{c, c^{-2/s}\}.
  \label{eq:moment-search-crossing}
\end{equation}
For every $j \ge j_\star$, the candidate value satisfies
\[
  \frac{\Gamma_j}{M_s} = \left(\frac{c \Lambda_s}{A_j}\right)^s \le \left(\frac{c}{\max\{c, c^{-2/s}\}}\right)^s = \min\{1, c^{s+2}\} \le \min\{1, c\},
\]
so $\E[Z_j \mid \widetilde{W}] - \Gamma_j / 2 \ge M_s \min\{1, c\} / 2$.
Moreover, because $s \ge 2$,
\[
  \frac{\Var[Z_j \mid \widetilde{W}]}{(M_s \min\{1, c\})^2} \le 4^{-s} \left(\frac{\Lambda_s}{A_j}\right)^s \max\{c^2, c^{-2}\} \le 4^{-s} \frac{\max\{c^2, c^{-2}\}}{\max\{c, c^{-2/s}\}^s} \le 4^{-s}.
\]
Chebyshev's inequality therefore bounds the continuation probability at any reached iteration $j \ge j_\star$ by
\[
  \Prob[J > j \mid \widetilde{W}, J \ge j] = \Prob[Z_j < \Gamma_j / 2 \mid \widetilde{W}] \le \frac{4 \Var[Z_j \mid \widetilde{W}]}{(M_s \min\{1, c\})^2} \le 4^{1-s} \le \frac{1}{4}.
\]
Applying this bound successively at each reached iteration yields, for every integer $h \ge 0$,
\begin{equation}
  \Prob[J \ge j_\star + h \mid \widetilde{W}] \le 4^{-h}.
  \label{eq:moment-search-reach-tail}
\end{equation}
In particular, $\Prob[J = \infty \mid \widetilde{W}] \le \lim_{h \to \infty} 4^{-h} = 0$, so $J < \infty$ almost surely for every $\widetilde{W} > 0$.

Because the target exponents $1$ and $2s/(2s-1)$ are both at most $2$ for $s \ge 2$, it suffices to bound the second moment $\E[A_J^2]$ and apply Jensen's inequality.
When $J \le j_\star$, we have $A_J \le A_\star$.
For $h \ge 1$, iteration $j_\star + h$ has $A_{j_\star + h} = 2^{h/s} A_\star$ and, by \eqref{eq:moment-search-reach-tail}, is reached with probability at most $4^{-h}$.
Because $s \ge 2$, we have $2^{2/s} \le 2$, and hence
\[
  \E[A_J^2 \mid \widetilde{W}] \le A_\star^2 \left(1 + \sum_{h=1}^\infty 4^{-h} 2^{2h/s}\right) \le A_\star^2 \left(1 + \sum_{h=1}^\infty 2^{-h}\right) = 2 A_\star^2.
\]
By \eqref{eq:moment-search-crossing}, $A_\star^2 < 2^{2/s} \Lambda_s^2 \max\{c^2, c^{-4/s}\} \le 2 \Lambda_s^2 (c^2 + c^{-4/s})$.
Since $c = \widetilde{W}/W$ equals the ratio of the output of the first call to \Cref{alg:endpoint-capture} to $m$, the ratio moment bounds in \Cref{thm:endpoint-capture} give $\E[c^2] = O(1)$ and $\E[c^{-2}] = O(1)$.
As $2/s \le 1$, Jensen's inequality applied to $x \mapsto x^{2/s}$ gives
\[
  \E[c^{-4/s}] = \E\bigl[(c^{-2})^{2/s}\bigr] \le \bigl(\E[c^{-2}]\bigr)^{2/s} = O(1).
\]
Averaging over $\widetilde{W}$ therefore yields $\E[A_J^2] = O(\Lambda_s^2)$.
Applying Jensen's inequality to the concave functions $x \mapsto x^{1/2}$ and $x \mapsto x^{s/(2s-1)}$ gives
\[
  \E[A_J] \le \bigl(\E[A_J^2]\bigr)^{1/2} = O(\Lambda_s), \qquad
  \E[A_J^{2s/(2s-1)}] \le \bigl(\E[A_J^2]\bigr)^{s/(2s-1)} = O(\Lambda_s^{2s/(2s-1)}).
\]

Finally, let $Q_{\mathrm{search}}$ count the queries made during the parameter search.
In each reached iteration $j$, drawing $N_j \sim \operatorname{Pois}(4s A_j)$ fair endpoint samples and querying their degrees incurs an expected query cost at most $8s A_j$.
Because the search terminates at index $J$, exchanging summation and expectation yields
\[
  \E[Q_{\mathrm{search}}] \le 8s \sum_{j=0}^\infty A_j \Prob[J \ge j] = 8s \E\left[\sum_{j=0}^J A_j\right].
\]
Since $A_j = 2^{j/s-1}$ increases geometrically with ratio $2^{1/s} > 1$, the sum over all iterations up to $J$ is bounded by a constant multiple of $A_J$.
Taking expectations and using $\E[A_J]=O(\Lambda_s)$ gives $\E[Q_{\mathrm{search}}]=O(\Lambda_s)$, completing the proof.
\end{proof}

\subsection{Accuracy and Expected Query Complexity}
After the parameter search stops, \Cref{alg:moment-estimation} sets
\[
  q := D_s \left( 1 + \min\left\{ \frac{A_J}{\eps^{2/s}}, \, \frac{A_J^{2s/(2s-1)}}{\eps^{2/(2s-1)}} \right\} \right), \qquad \eta(q) := \kappa\max\left\{\eps, q^{-1/2}\right\},
\]
where $\kappa>0$ is a sufficiently small constant and $D_s\ge1$ is sufficiently large depending only on $s$.
The algorithm then runs \Cref{alg:endpoint-capture} with accuracy $\eta(q)$, lets $\widehat W$ be twice its output, and evaluates $Z_q(\widehat W)$.
These choices give a $(1\pm\eps)$-estimate of $M_s$ with probability at least $15/16$ and the expected query complexity claimed in \Cref{thm:intro-moment}.

\subparagraph*{Accuracy.}
Let $\mathcal E_0 := \{3W/4 \le \widetilde{W} \le 5W/4\}$ and $\mathcal E_1 := \{|\widehat{W}/W - 1| \le \eta(q)\}$.
By \Cref{thm:endpoint-capture}, $\Prob[\mathcal E_0^c] \le 1/64$.
For each selected $q$, the same theorem gives $\Prob[\mathcal E_1^c \mid q] \le 1/64$.
Because $q \ge 1$, choosing $\kappa \le 1/2$ ensures $\eta(q) \le 1/2$ and hence $1/2 \le \widehat{W}/W \le 2$ on $\mathcal E_1$.

Fix the constant $K$ from \Cref{lem:moment-search}, so $\Prob[\Gamma_J>KM_s\mid\mathcal E_0]\le1/64$.
On $\mathcal E_0\cap\{\Gamma_J\le KM_s\}$, we have $A_J=\widetilde W/\Gamma_J^{1/s}=\Omega(\Lambda_s)$.
Because each term in the definition of $q$ is increasing in $A_J$,
\[
  q = \Omega\left( D_s \min\left\{ \frac{\Lambda_s}{\eps^{2/s}}, \, \frac{\Lambda_s^{2s/(2s-1)}}{\eps^{2/(2s-1)}} \right\} \right),
\]
where the implicit constant depends only on $s$.
For the smaller term in this minimum, the corresponding variance bound in \Cref{lem:moment-ht-relvar} gives
\[
  R_s(q)\le R_s(q/2)=O(D_s^{-s}\eps^2),
\]
using $D_s\ge1$ and $2s-1\ge s$.
Thus choosing $D_s$ sufficiently large ensures $R_s(q/2) \le \eps^2$.

Fix the history through the second call to \Cref{alg:endpoint-capture} and suppose $\mathcal E_0$, $\Gamma_J\le KM_s$, and $\mathcal E_1$ hold.
The values $q$ and $\widehat W$ are then fixed, and the final step draws $N\sim\operatorname{Pois}(q)$ and takes $N$ independent fair endpoint samples.
Applying \Cref{lem:moment-estimated-total-degree} with $U=\widehat W$ yields
\[
  \frac{|\E[Z_q(\widehat W)]-M_s|}{M_s}
  =O\!\left(\eta(q)\min\{1,\eps\sqrt q\}\right)
  =O(\kappa\eps),
\]
because $\max\{\eps,q^{-1/2}\}\min\{1,\eps\sqrt q\}=\eps$.
Decreasing $\kappa$ if necessary, the bias is at most $\eps M_s/2$.
The variance bound in the same lemma gives
\[
  \Var[Z_q(\widehat W)]
  =O(D_s^{-s}\eps^2M_s^2).
\]
Because the bias is at most $\eps M_s / 2$, any deviation $|Z_q(\widehat{W}) - M_s| > \eps M_s$ requires $|Z_q(\widehat{W}) - \E[Z_q(\widehat{W})]| > \eps M_s / 2$.
Applying Chebyshev's inequality and increasing $D_s$ if necessary yields
\[
  \Prob[|Z_q(\widehat{W}) - M_s| > \eps M_s]
  \le \frac{4\Var[Z_q(\widehat{W})]}{\eps^2 M_s^2}
  \le \frac{1}{64}.
\]

Unconditionally, failure requires at least one of four events to occur: $\mathcal E_0^c$, stopping with $\Gamma_J > K M_s$ on $\mathcal E_0$, $\mathcal E_1^c$, or failure of the final estimate on $\mathcal E_0 \cap \{\Gamma_J \le K M_s\} \cap \mathcal E_1$.
Averaging the conditional bounds shows that each event occurs with probability at most $1/64$. By a union bound, the algorithm succeeds with probability at least $1 - 4/64 = 15/16 > 2/3$.

\subparagraph*{Expected query complexity.}
Define
\[
  Q_s(\nu, \eps) := \min\left\{ \frac{\nu^{1-1/s}}{\eps^{2/s}}, \, \frac{\nu^{1-1/(2s-1)}}{\eps^{2/(2s-1)}} \right\}.
\]
Using $\E[\min\{X, Y\}] \le \min\{\E[X], \E[Y]\}$ and the definition of $q$ in \Cref{alg:moment-estimation},
\[
  \E[q] \le D_s \left( 1 + \min\left\{ \frac{\E[A_J]}{\eps^{2/s}}, \, \frac{\E[A_J^{2s/(2s-1)}]}{\eps^{2/(2s-1)}} \right\} \right)
  = O\left( 1 + \min\left\{ \frac{\Lambda_s}{\eps^{2/s}}, \, \frac{\Lambda_s^{2s/(2s-1)}}{\eps^{2/(2s-1)}} \right\} \right),
\]
where the equality follows from the moments of $A_J$ (\Cref{lem:moment-search}).
Since $\nu \ge 1$ and $\eps \le 1$ give $Q_s(\nu, \eps) \ge 1$, substituting $\Lambda_s \le 2(2\nu)^{1-1/s}$ and $\Lambda_s^{2s/(2s-1)} = O(\nu^{1 - 1/(2s-1)})$ establishes $\E[q] = O(Q_s(\nu, \eps))$.
The initial call to \Cref{alg:endpoint-capture} requires $O(\sqrt{\nu})$ expected queries, and the parameter search requires $O(\Lambda_s)$ expected queries, which are both $O(Q_s(\nu, \eps))$.

Conditional on $q$, the second call to \Cref{alg:endpoint-capture} with accuracy $\eta(q)$ has expected query complexity $O(\eta(q)^{-2} + \sqrt{\nu}/\eta(q)) = O\left( q + \min\left\{ \sqrt{\nu q}, \,\sqrt{\nu}/\eps \right\} \right)$, since $\eta(q)^{-2} \le q/\kappa^2$ and $\sqrt{\nu}/\eta(q) = \kappa^{-1}\min\{\sqrt{\nu q}, \sqrt{\nu}/\eps\}$.
If $\nu \eps^2 \le 1$, then $\nu \le Q_s(\nu, \eps)$ and $2\sqrt{\nu q} \le \nu + q$.
Otherwise, $\sqrt{\nu}/\eps \le Q_s(\nu, \eps)$ for all $s \ge 2$.
In either case, the expected cost is $O(q + Q_s(\nu, \eps))$.
Taking expectations and using $\E[q] = O(Q_s(\nu, \eps))$ yields $O(Q_s(\nu, \eps))$ expected queries for the second call to \Cref{alg:endpoint-capture}.
Finally, evaluating $Z_q(\widehat W)$ uses at most $2\E[q] = O(Q_s(\nu, \eps))$ queries in expectation.
Summing all four bounds establishes the overall expected query complexity $O(Q_s(\nu, \eps))$.
Because both calls to \Cref{alg:endpoint-capture} and the parameter search terminate almost surely, \Cref{alg:moment-estimation} also terminates almost surely, completing the proof of \Cref{thm:intro-moment}.

\section{Analysis of Stars and Isolated Edges}
\label{sec:moment-stars-detail}
In this section, we provide the complete proofs and detailed analyses for the stars and isolated edges lower bound construction presented in \Cref{sec:moment-stars}.

\subparagraph*{Construction of stars and isolated edges.}
Fix integers $s, k \ge 2$, and assume $k \eps^2 \ge \kappa_s$ and $0 < \eps \le \eps_0(s)$, where $\kappa_s > 0$ is a sufficiently large constant and $\eps_0(s) > 0$ is a sufficiently small constant, both depending only on $s$.
Recall the parameter definitions from \Cref{sec:moment-stars}:
\[
  a := \lceil \eps^{-2} \rceil, \qquad
  r := \lceil 32 \eps a \rceil, \qquad
  D := \left\lfloor (k / (4a))^{1/s} \right\rfloor, \qquad
  L := a D^s.
\]
The base graphs $G_0$ and $G_1$ consist of $a$ and $a+r$ copies of $K_{1,D}$, together with $L$ and $L-rD$ isolated edges, respectively, padded with isolated vertices to share a common vertex set $V$.
The distributions $\mathsf H_0$ and $\mathsf H_1$ are induced by uniform random vertex relabelings of $G_0$ and $G_1$.

\begin{lemma}
\label{lem:stars-and-edges-parameters}
The parameters and base graphs $G_0$ and $G_1$ as defined above satisfy:
\begin{enumerate}
  \item $D \ge 2$, $r \le a/2$, $rD \le L/2$, and $L = aD^s = \Theta(k)$;
  \item $\nu(G_0) = a + L \le k$ and $\nu(G_1) \le a + L \le k$;
  \item $m(G_0) = m(G_1) = a(D^s + D)$;
  \item $(1+\eps) M_s(G_0) < (1-\eps) M_s(G_1)$.
\end{enumerate}
\end{lemma}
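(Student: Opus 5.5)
The four items are elementary consequences of the parameter choices, and I will verify them in order. Throughout I use $a=\lceil\eps^{-2}\rceil\in[\eps^{-2},2\eps^{-2}]$ for $\eps\le1$, and that $k\eps^2\ge\kappa_s$ is large.

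For item 1, note that $k/(4a)\ge k\eps^2/8\ge\kappa_s/8$. Choosing $\kappa_s\ge 8\cdot 3^s$ gives $(k/(4a))^{1/s}\ge 3$. Since $\lfloor x\rfloor\ge x/2$ for $x\ge1$, this yields $D\ge2$ and also $D=\Theta((k/a)^{1/s})$. It follows that $L=aD^s\le a\cdot k/(4a)=k/4$ and $L\ge a\,(k/(4a))/2^s=k/2^{s+2}$, so $L=\Theta(k)$.

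For $r\le a/2$, use $r\le 32\eps a+1\le 64\eps a$, where the last step holds because $\eps a\ge\eps^{-1}\ge1$. Then $\eps_0(s)\le1/128$ gives $r\le a/2$.

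For $rD\le L/2$, it suffices to have $64\eps aD\le aD^s/2$, that is, $D^{s-1}\ge128\eps$. This holds since $D\ge2$ and $\eps\le1/128$.

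For item 2, the edge counts are direct: $m(G_0)=aD+L$, and $m(G_1)=(a+r)D+L-rD=aD+L$. Both equal $a(D^s+D)$, which is item 3.

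A matching in a disjoint union of stars and isolated edges takes at most one edge per component. So $\nu(G_0)=a+L$ and $\nu(G_1)=a+r+L-rD\le a+L$, where the last inequality uses $D\ge2>1$. Finally, $a+L\le a+k/4$, and $a\le 2\eps^{-2}\le 2k/\kappa_s\le k/2$ once $\kappa_s\ge4$. Hence $\nu(G_i)\le k$.

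For item 4, I compute $M_s(G_0)=a(D^s+D)+2L=a(3D^s+D)\le 4aD^s$. Also, $M_s(G_1)-M_s(G_0)=r(D^s+D)-2rD=r(D^s-D)\ge rD^s/2$, using $D\ge2$ and $s\ge2$. Since $r\ge32\eps a$, this difference is at least $16\eps aD^s\ge4\eps M_s(G_0)$.

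The separation then follows from the rearrangement $(1-\eps)M_s(G_1)-(1+\eps)M_s(G_0)=(1-\eps)(M_s(G_1)-M_s(G_0))-2\eps M_s(G_0)$. This is at least $(4\eps(1-\eps)-2\eps)M_s(G_0)>0$ whenever $\eps<1/2$.

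There is no real obstacle in this argument. The only care needed is ordering the constant choices: first fix $\eps_0(s)\le1/128$, then take $\kappa_s$ large enough (at least $8\cdot3^s$) to force $D\ge2$ and $a\le k/2$.
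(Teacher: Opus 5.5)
Your proposal is correct and follows the same route as the paper's proof. You use $a\le 2\eps^{-2}$ and a large $\kappa_s$ to get $D\ge2$ and $L=\Theta(k)$, use $\eps a\ge1$ to get $r=O(\eps a)$ and then a small $\eps_0(s)$ for $r\le a/2$ and $rD\le L/2$, count one matching edge per component, and derive the separation from $M_s(G_1)-M_s(G_0)=r(D^s-D)\ge 4\eps M_s(G_0)$. The differences are only cosmetic: your explicit constants ($64$, $1/128$, $8\cdot3^s$) replace the paper's $33$ and unspecified constants, and your ``item 2'' label for the edge counts should read item 3.
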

\begin{proof}
Since $a \le 2\eps^{-2}$ for $\eps \le 1$, choosing $\kappa_s$ sufficiently large guarantees $D \ge 2$ and $a < k/4$.
Because $a \ge \eps^{-2}$ and $\eps \le 1$, we have $\eps a \ge 1/\eps \ge 1$, which implies $r \le 32\eps a + 1 \le 33\eps a$.
Choosing $\eps_0(s)$ sufficiently small also ensures $r \le a/2$ and $r D \le L/2$.

Every nontrivial component in either graph has matching number $1$.
Because $D^s \le k/(4a)$, we have $L \le k/4$, yielding the matching number bounds:
\[
  \nu(G_0) = a + L < k, \qquad \nu(G_1) = a + r + L - rD \le a + L < k.
\]
Since $D \ge \frac{1}{2} (k/(4a))^{1/s}$, we have $L = a D^s \ge k / 2^{s+2} = \Theta(k)$.

Both base graphs have the same total edge count $m = a D + L = a(D^s + D)$.
Their degree moments satisfy $M_s(G_0) = a(D^s + D) + 2L \le 4 a D^s$ and $M_s(G_1) - M_s(G_0) = r(D^s - D) \ge \frac{r D^s}{2} \ge 4 \eps M_s(G_0)$.
For $\eps < 1/2$, this gap implies
\[
  (1 - \eps) M_s(G_1) - (1 + \eps) M_s(G_0) \ge (2\eps - 4\eps^2) M_s(G_0) > 0.
\]
Therefore $M_s(G_0)$ and $M_s(G_1)$ are $(1 \pm \eps)$-separated.
\end{proof}

\subparagraph*{Indistinguishability of stars and isolated edges.}
Although $\mathsf{H}_0$ and $\mathsf{H}_1$ differ in their numbers of stars and isolated edges, all stars are isomorphic, as are all isolated edges.
To prove indistinguishability, we consider an augmented oracle that returns each sampled edge together with the labeled component containing it, which determines the degrees of all its vertices.
To simulate any endpoint access algorithm making at most $q$ queries, we can answer its $t$-th random edge query with the edge sampled in the $t$-th response of the augmented oracle and answer each degree query from the revealed component containing that endpoint. Thus by the data processing inequality it suffices to bound the total variation distance between the distributions of these responses.

To decouple the queries hitting distinct components, we analyze an experiment where the number of queries is drawn as $N \sim \operatorname{Pois}(\lambda)$.
We call a component observed if at least one query hits it, and write $C_\star$ and $C_1$ for the numbers of observed stars and observed isolated edges.
We first show that the response sequence can be simulated solely from $(C_\star, C_1)$.

\begin{lemma}
\label{lem:moment-stars-component-counts}
Let $N \sim \operatorname{Pois}(\lambda)$ for $\lambda > 0$, independent of the graph and query randomness.
Under either hypothesis $\mathsf H_i$, the response sequence of $N$ queries to the augmented oracle can be simulated from $(C_\star, C_1)$ by a randomized rule that does not depend on $i$. 
\end{lemma}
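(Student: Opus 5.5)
The plan is to construct the response sequence in two layers. First I sample the unlabeled pattern of which components are hit and how often. Then I attach labels and edge identities using only the symmetry of the uniform relabeling. The key tool is Poisson splitting. Both base graphs have the same edge count $m=a(D^s+D)$, so each star is hit $\operatorname{Pois}(\lambda D/m)$ times and each isolated edge is hit $\operatorname{Pois}(\lambda/m)$ times. These counts are mutually independent across components, and the order of the $N$ queries is a uniformly random interleaving of the per-component hits. Conditioned on the graph, the response sequence is therefore determined by three things: the per-component hit counts, a uniform interleaving, and for each hit on a star, a uniform choice among its $D$ edges.

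Next I condition on $(C_\star,C_1)$. Because the hit counts are independent and identically distributed within each component type, the hit counts of the observed stars, given that $C_\star$ of them are observed, are i.i.d.\ copies of $\operatorname{Pois}(\lambda D/m)$ conditioned to be positive. The analogous statement holds for observed isolated edges with $\operatorname{Pois}(\lambda/m)$. This conditional law does not depend on the total number of stars ($a$ or $a+r$) or of isolated edges. Which particular components are observed is irrelevant once labels are randomized. The simulation rule is then as follows. Draw $C_\star$ i.i.d.\ zero-truncated $\operatorname{Pois}(\lambda D/m)$ counts and $C_1$ i.i.d.\ zero-truncated $\operatorname{Pois}(\lambda/m)$ counts. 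Set $N$ to their sum and choose a uniform interleaving. For each star hit, choose a uniform leaf index. Finally, assign vertex labels.

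The main obstacle is the labeling step. I must argue that, given the unlabeled structure of the observed components, their labels form a uniformly random injection of the $C_\star(D+1)+2C_1$ revealed vertices into the common vertex set $V$, with centers distinguished from leaves. This follows because the relabeling is a uniform permutation of $V$ independent of the query randomness. Conditioned on everything unlabeled, the images of the revealed vertices are a uniform injective tuple into $V$, since $|V|$ is common to both hypotheses. The simulator draws such an injection and outputs each sampled edge with its labeled component. I would check carefully that the conditioning on the hit pattern does not bias the labels; the relabeling is independent of the hit process, so it does not. Every ingredient depends only on $\lambda$, $D$, $m$, $|V|$, and $(C_\star,C_1)$, and not on $i$, which proves the claim.
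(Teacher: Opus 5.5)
Your proposal is correct and follows essentially the same route as the paper. Both arguments use the same steps. Poisson splitting gives independent per-component hit counts whose rates depend only on the component type, since $D$ and $m$ are shared between the hypotheses. The observed components get i.i.d.\ zero-truncated $\operatorname{Pois}(\lambda D/m)$ or $\operatorname{Pois}(\lambda/m)$ counts, star hits choose edges uniformly, and the queries are interleaved uniformly. Labels come from a uniform injection into $V$, which is justified because the uniform relabeling is independent of the query process.
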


\begin{proof}
By Poisson splitting, the numbers of queries hitting individual components are mutually independent Poisson variables: $\operatorname{Pois}(\lambda D/m)$ for each star and $\operatorname{Pois}(\lambda/m)$ for each isolated edge.
Since $D$ and $m$ are identical under both hypotheses, these rates depend only on the component type.

Given $(C_\star, C_1)$, the response sequence is generated as follows:
\begin{itemize}
  \item Take $C_\star$ disjoint copies of $K_{1,D}$ and $C_1$ disjoint isolated edges, and assign their vertices distinct labels chosen uniformly at random from $V$. 
  \item For each observed star and isolated edge, draw the number of queries independently from $\operatorname{Pois}(\lambda D/m)$ and $\operatorname{Pois}(\lambda/m)$, respectively, conditioned on being positive.
  \item For each query hitting a star, choose one of its $D$ edges uniformly at random; for an isolated edge, take the edge.
  \item Uniformly interleave all queries across the observed components, returning each chosen edge and its labeled component.
\end{itemize}
This generates the joint distribution of the response sequence: by Poisson splitting, the query counts and their uniform order are those of the actual queries. 
Since the sampled graph $G$ is a uniformly random relabeling of $G_i$, the labels in the first step form a uniformly random injection into $V$ that is independent of the remaining steps.
Because the rule depends only on the counts $(C_\star, C_1)$ and the shared parameters $(\lambda, D, m, V)$, it is identical under both $\mathsf H_0$ and $\mathsf H_1$.
\end{proof}

By the data processing inequality and \Cref{lem:moment-stars-component-counts}, it suffices to bound the total variation distance between the joint distributions of $(C_\star, C_1)$.
We use the following bound for binomials with different numbers of trials.
\begin{lemma}
\label{lem:binomial-trial-count}
For any $0 < p \le 1/2$, integer $n \ge 1$, and integer $u$ with $|u| \le n/2$,
\[
  \TV\left(\operatorname{Bin}(n, p), \, \operatorname{Bin}(n+u, p)\right) \le  \abs{u} \sqrt{\frac{2p}{n}}.
\]
\end{lemma}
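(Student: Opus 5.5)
The plan is to reduce to unit increments in the number of trials and telescope. Assume $u>0$; the case $u<0$ follows by swapping the roles of $n$ and $n+u$. The triangle inequality gives
\[
  \TV\bigl(\operatorname{Bin}(n,p),\operatorname{Bin}(n+u,p)\bigr)\le\sum_{n'}\TV\bigl(\operatorname{Bin}(n',p),\operatorname{Bin}(n'+1,p)\bigr),
\]
where the sum has $|u|$ terms. Every $n'$ in the sum satisfies $n'\ge n-|u|\ge n/2$, since $|u|\le n/2$. It therefore suffices to show that each single-step distance is at most $\sqrt{p/n'}\le\sqrt{2p/n}$.

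For a single step, write $Y\sim\operatorname{Bin}(n',p)$ and $B\sim\operatorname{Ber}(p)$ independent, so that $Y+B\sim\operatorname{Bin}(n'+1,p)$. The law of $Y+B$ is the mixture $(1-p)\mathcal L(Y)+p\,\mathcal L(Y+1)$. Convexity of total variation then gives
\[
  \TV(Y,Y+B)\le p\,\TV(Y,Y+1).
\]
Since the binomial law is unimodal, the sum $\sum_k|P(Y=k)-P(Y=k-1)|$ telescopes on each side of the mode. This yields $\TV(Y,Y+1)=\max_k P(Y=k)$.

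This leaves bounding the maximal binomial point probability, which I expect to be the main obstacle: it needs a clean, explicit constant. The target is $p\max_k P(Y=k)\le\sqrt{p/n'}$, and I would split into two cases.
\begin{itemize}
\item If $n'p\le1$, the trivial bound $\max_k P(Y=k)\le1$ suffices, because $p\le\sqrt{p/n'}$ holds exactly when $n'p\le 1$.
\item If $n'p>1$, I would use a standard local bound of the form $\max_k P(Y=k)\le 1/\sqrt{2n'p(1-p)}$. This follows from Stirling's formula with explicit error terms at the mode, or alternatively from a Chebyshev-type argument: the $\Theta(\sqrt{n'p(1-p)})$ integers near the mean carry constant mass, and the pmf ratios $P(Y=k+1)/P(Y=k)$ are close to $1$ there, so the pmf is nearly flat near the mode. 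Combined with $1-p\ge1/2$, this gives $p\max_k P(Y=k)\le\sqrt{p/n'}$.
\end{itemize}

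If the constant in the local bound is not quite tight enough for the second case, I would fall back on bounding the per-step KL divergence between the two binomial laws and applying Pinsker's inequality, tracking constants to land at $\sqrt{2p/n}$. Either way, summing the $|u|$ single-step bounds gives the claim $\TV\le|u|\sqrt{2p/n}$.
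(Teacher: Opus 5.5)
Your reduction is sound and takes a genuinely different route from the paper. The telescoping over unit steps with $n'\ge n/2$ matches the paper. For a single step, however, the paper uses that the likelihood ratio of $\operatorname{Bin}(n',p)$ to $\operatorname{Bin}(n'+1,p)$ at $x$ is $(n'+1-x)/((n'+1)(1-p))$, which is linear in $x$. So the distance equals $\E|X-(n'+1)p|/(2(n'+1)(1-p))$ for $X\sim\operatorname{Bin}(n'+1,p)$, and bounding the mean absolute deviation by the standard deviation finishes in one line with no case split. Your mixture step (which is in fact an equality) and the unimodality telescoping give the exact value $p\max_k\Prob[Y=k]$ for the single-step distance. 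This is a nice observation: together with the paper's identity it is de Moivre's formula for the binomial mean absolute deviation, and it shows the single-step bound is tight up to constants. The price is that all the work moves into an anti-concentration bound for the binomial mode.

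That bound is the one step you have not actually proved. The inequality $\max_k\Prob[Y=k]\le1/\sqrt{2n'p(1-p)}$ for $n'p>1$ and $p\le1/2$ is true, but the constant matters. You need $\max_k\Prob[Y=k]\le1/\sqrt{n'p}=\sqrt{1-p}/\sigma$ with $\sigma^2=n'p(1-p)$, so at $p=1/2$ any $O(1/\sigma)$ bound must have constant at most $1/\sqrt2$, and your Chebyshev-type sketch does not deliver that. Via Stirling the bound does go through. Robbins' bounds give $\Prob[Y=k]\le\sqrt{n'/(2\pi k(n'-k))}$ for $1\le k\le n'-1$, and one then checks that the mode $k^\ast=\lfloor(n'+1)p\rfloor$ satisfies $1\le k^\ast\le n'-1$ and $k^\ast(n'-k^\ast)/n'\ge n'p(1-p)/\pi$, using a short case analysis on whether $n'p-(1-p)\ge1$. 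Your KL fallback is far simpler and is essentially the paper's argument. With the same linear likelihood ratio and $\log y\le y-1$, you get $\KL(\operatorname{Bin}(n',p)\,\|\,\operatorname{Bin}(n'+1,p))\le p/((n'+1)(1-p))\le 2p/(n'+1)$. Pinsker then bounds the single-step distance by $\sqrt{p/(n'+1)}\le\sqrt{p/n'}$ for all values of $n'p$, so you can drop the case split entirely.
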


\begin{proof}
Write $P_n := \operatorname{Bin}(n,p)$ and first compare $P_n$ with $P_{n+1}$.
For every $x \in \{0, \ldots, n+1\}$, the likelihood ratio is
\[
  \frac{P_n(x)}{P_{n+1}(x)} = \frac{\binom{n}{x} p^x (1-p)^{n-x}}{\binom{n+1}{x} p^x (1-p)^{n+1-x}} = \frac{n + 1 - x}{(n + 1)(1 - p)}.
\]
Taking $X \sim P_{n+1}$, we have $\E[X] = (n+1)p$ and $\Var[X] = (n+1)p(1-p)$.
Applying Cauchy--Schwarz to the mean deviation gives
\begin{align*}
  \TV(P_n, P_{n+1})
  &= \frac{1}{2} \E\left[\left|\frac{P_n(X)}{P_{n+1}(X)} - 1\right|\right]
  = \frac{\E[|X - (n+1)p|]}{2(n+1)(1-p)} \\
  &\le \frac{\sqrt{\Var[X]}}{2(n+1)(1-p)}
  = \frac{1}{2} \sqrt{\frac{p}{(n+1)(1-p)}}
  \le \sqrt{\frac{p}{n}},
\end{align*}
where the final inequality uses $1 - p \ge 1/2$ and $n + 1 > n$.
For $u \ge 0$, the triangle inequality across adjacent trial counts yields
\[
  \TV(P_n, P_{n+u}) \le \sum_{i=0}^{u-1} \TV(P_{n+i}, P_{n+i+1}) \le \sum_{i=0}^{u-1} \sqrt{\frac{p}{n+i}} \le u \sqrt{\frac{p}{n}}.
\]
For $u < 0$, summing instead from trial count $n - |u|$ to $n$ similarly yields
\[
  \TV(P_{n-|u|}, P_n) \le |u| \sqrt{\frac{p}{n - |u|}} \le |u| \sqrt{\frac{2p}{n}},
\]
since $n - |u| \ge n/2$.
\end{proof}

We now establish that a total variation bound for $N\sim\operatorname{Pois}(4q)$ queries transfers to $q$ queries at the cost of a factor of $2$.
\begin{lemma}
\label{lem:poisson-to-fixed}
Fix an integer $q\ge1$.
Under hypothesis $i\in\{0,1\}$, let $Y_1, Y_2, \dots$ be a random sequence of responses, $\mathsf P_i$ be the distribution of $(Y_1,\dots,Y_q)$, and $\mathsf Q_i$ be the distribution of $(Y_1,\dots,Y_N)$ where $N\sim\operatorname{Pois}(4q)$ is independent of the sequence.
Then $\TV(\mathsf P_0,\mathsf P_1)\le2\TV(\mathsf Q_0,\mathsf Q_1)$.
\end{lemma}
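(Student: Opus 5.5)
The plan is to realize $(Y_1,\dots,Y_q)$ as a postprocessing of the Poissonized observation, and to pay for the event that too few responses were revealed. Draw $N\sim\operatorname{Pois}(4q)$ independently of the sequence, and define a map $\Phi$ on finite sequences as follows. On input $(y_1,\dots,y_n)$ with $n\ge q$, it outputs the prefix $(y_1,\dots,y_q)$. If $n<q$, it outputs a fixed dummy value $\star$. This rule does not depend on $i$.

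Apply $\Phi$ to a sample from $\mathsf Q_i$ and call the output distribution $\mathsf P_i'$. Because $N$ is independent of the sequence, on the event $\{N\ge q\}$ the output has exactly the law of $(Y_1,\dots,Y_q)$ under hypothesis $i$. Hence
\[
  \mathsf P_i'=\Prob[N\ge q]\,\mathsf P_i+\Prob[N<q]\,\delta_\star ,
\]
with the same mixing weights for $i=0,1$. Total variation then gives
\[
  \TV(\mathsf P_0',\mathsf P_1')=\Prob[N\ge q]\,\TV(\mathsf P_0,\mathsf P_1).
\]
By the data processing inequality, $\TV(\mathsf P_0',\mathsf P_1')\le\TV(\mathsf Q_0,\mathsf Q_1)$. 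It follows that $\TV(\mathsf P_0,\mathsf P_1)\le\TV(\mathsf Q_0,\mathsf Q_1)/\Prob[N\ge q]$.

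The remaining step is to show $\Prob[\operatorname{Pois}(4q)<q]\le1/2$ for every integer $q\ge1$, which yields the factor $2$. Chebyshev's inequality gives
\[
  \Prob[N<q]\le\Prob\bigl[|N-4q|>3q\bigr]\le\frac{4q}{9q^2}=\frac{4}{9q}.
\]
This is at most $1/2$ for every $q\ge1$, so no separate treatment of small $q$ is needed. The argument therefore has no real obstacle. The only care needed is to state cleanly that the mixing weights coincide under both hypotheses, which holds because $N$ is independent of the sequence and its law does not depend on $i$.
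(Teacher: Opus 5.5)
Your proposal is correct and follows the paper's proof essentially verbatim: the same truncation map (prefix when $N\ge q$, a common failure symbol otherwise), the data processing inequality giving $\Prob[N\ge q]\,\TV(\mathsf P_0,\mathsf P_1)\le\TV(\mathsf Q_0,\mathsf Q_1)$, and Chebyshev's bound $\Prob[N<q]\le 4/(9q)<1/2$. Your explicit mixture identity just spells out the step the paper states in one line.
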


\begin{proof}
Retain the first $q$ responses when $N\ge q$, and output a common failure symbol $\bot$ when $N < q$.
Because the distribution of $N$ is identical under both hypotheses, $\bot$ occurs with the same probability $\Prob[N < q]$ under $\mathsf Q_0$ and $\mathsf Q_1$.
The retained responses conditional on $N \ge q$ follow $\mathsf P_0$ and $\mathsf P_1$, respectively.
Applying the data processing inequality under this truncation map gives $\Prob[N\ge q]\,\TV(\mathsf P_0,\mathsf P_1) \le\TV(\mathsf Q_0,\mathsf Q_1)$.
Chebyshev's inequality gives $\Prob[N<q]\le4/(9q)<1/2$, so $\Prob[N\ge q] > 1/2$, proving the claim.
\end{proof}

\begin{lemma}
\label{lem:moment-stars-fixed-sample-tv}
Let $\mathsf P_i$ denote the joint distribution of the responses to $q$ queries to the augmented oracle under $\mathsf H_i$.
For a sufficiently small constant $c_s>0$, every integer $1\le q\le c_s aD^{s-1}$ satisfies $\TV(\mathsf P_0,\mathsf P_1)\le 1/4$.
\end{lemma}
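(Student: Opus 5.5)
By \Cref{lem:poisson-to-fixed}, it suffices to show $\TV(\mathsf Q_0,\mathsf Q_1)\le1/8$, where $\mathsf Q_i$ is the response distribution under $N\sim\operatorname{Pois}(\lambda)$ queries with $\lambda:=4q$. By \Cref{lem:moment-stars-component-counts} and the data processing inequality, this reduces to bounding the total variation distance between the laws of $(C_\star,C_1)$ under the two hypotheses.

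Poisson splitting makes the per-component hit counts mutually independent. Each star is observed independently with probability $p_\star:=1-e^{-\lambda D/m}$, and each isolated edge with probability $p_1:=1-e^{-\lambda/m}$. Moreover, $C_\star$ and $C_1$ are independent of each other. Hence
\[
C_\star\sim\operatorname{Bin}(a,p_\star),\ C_1\sim\operatorname{Bin}(L,p_1)\text{ under }\mathsf H_0,
\qquad
C_\star\sim\operatorname{Bin}(a+r,p_\star),\ C_1\sim\operatorname{Bin}(L-rD,p_1)\text{ under }\mathsf H_1.
\]
Subadditivity of total variation over independent coordinates splits the problem into two binomial comparisons. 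Each will be handled by \Cref{lem:binomial-trial-count}, whose hypotheses $|u|\le n/2$ are exactly $r\le a/2$ and $rD\le L/2$ from \Cref{lem:stars-and-edges-parameters}.

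For the stars, we have $p_\star\le\lambda D/m\le 4c_saD^s/(a(D^s+D))\le4c_s\le1/2$. \Cref{lem:binomial-trial-count} then gives a bound of
\[
r\sqrt{2p_\star/a}\le 33\eps a\sqrt{8c_s/a}=33\eps\sqrt{8c_sa}.
\]
Since $a\le2\eps^{-2}$, this is $O(\sqrt{c_s})$.

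For the isolated edges, $p_1\le\lambda/m\le 4c_saD^{s-1}/(aD^s)=4c_s/D$. \Cref{lem:binomial-trial-count} gives a bound of
\[
rD\sqrt{2p_1/L}\le rD\sqrt{8c_s/(DaD^s)}=r\sqrt{8c_s D^{1-s}/a}\le r\sqrt{8c_s/a},
\]
where the last step uses $D\ge1$. This is again $O(\eps\sqrt{c_sa})=O(\sqrt{c_s})$.

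Choosing $c_s$ small enough makes the sum of the two bounds at most $1/8$, and \Cref{lem:poisson-to-fixed} then gives $\TV(\mathsf P_0,\mathsf P_1)\le1/4$.

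The main obstacle is verifying the structural claim that under Poissonization $C_\star$ and $C_1$ are independent binomials with rates identical across hypotheses. This holds because $m$ and $D$ coincide under both hypotheses, and \Cref{lem:moment-stars-component-counts} guarantees that nothing beyond these counts is informative. The remaining work is the bookkeeping of constants showing that both ratios $r/\sqrt{a}$ are $O(\eps\sqrt a)=O(1)$.
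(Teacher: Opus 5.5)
Your proposal is correct and follows the paper's proof essentially step for step. You Poissonize with $\lambda=4q$ and reduce to the independent binomial counts $(C_\star,C_1)$ via \Cref{lem:moment-stars-component-counts}. You then bound both binomial comparisons with \Cref{lem:binomial-trial-count} using $p_\star\le4c_s$ and $p_1\le4c_s/D$, and transfer back with \Cref{lem:poisson-to-fixed}. Your bound $D^{1-s}\le1$ in the isolated-edge term is just a slightly looser version of the paper's factor $1+D^{-(s-1)/2}$.
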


\begin{proof}
Assume $c_s \le 1/8$, draw $N\sim\operatorname{Pois}(\lambda)$ with $\lambda := 4q \le 4c_s a D^{s-1}$, and write $\mathsf Q_i$ for the joint distribution of the resulting responses under $\mathsf H_i$.

Conditional on $G$, Poisson splitting makes the numbers of queries hitting individual components independent.
A star is observed with probability $p_\star := 1 - e^{-\lambda D/m}$, while an isolated edge is observed with probability $p_1 := 1 - e^{-\lambda/m}$.
Consequently, $C_\star$ and $C_1$ are independent binomial random variables: $\operatorname{Bin}(a, p_\star)$ and $\operatorname{Bin}(L, p_1)$ under $\mathsf{H}_0$, and $\operatorname{Bin}(a+r, p_\star)$ and $\operatorname{Bin}(L-rD, p_1)$ under $\mathsf{H}_1$.

By \Cref{lem:moment-stars-component-counts}, the response sequence is generated from $(C_\star, C_1)$ by a randomized rule common to both hypotheses.
Applying the data processing inequality and subadditivity of total variation distance on independent coordinates yields
\[
  \TV(\mathsf Q_0, \mathsf Q_1)\le \TV\left(\operatorname{Bin}(a, p_\star), \, \operatorname{Bin}(a+r, p_\star)\right) + \TV\left(\operatorname{Bin}(L, p_1), \, \operatorname{Bin}(L-rD, p_1)\right).
\]
Since $m = a(D^s + D) \ge a D^s$, these probabilities satisfy $p_\star \le \lambda D / m \le 4c_s$ and $p_1 \le \lambda / m \le 4c_s / D$.
Applying \Cref{lem:binomial-trial-count} with $p_\star, p_1 \le 4c_s \le 1/2$, $r \le a/2$, and $rD \le L/2$ gives
\[
  \TV(\mathsf Q_0, \mathsf Q_1)
  \le \sqrt{2} \left(r \sqrt{\frac{p_\star}{a}} + rD \sqrt{\frac{p_1}{L}}\right)
  \le 2\sqrt{2c_s} \, \frac{r}{\sqrt{a}} \left(1 + D^{-(s-1)/2}\right)
  = O(\sqrt{c_s}),
\]
where the second step uses $L = a D^s$ and the last step uses $r \le 33\eps a$, $\eps^2 a \le 2$, and $D \ge 2$.
Together with \Cref{lem:poisson-to-fixed}, this yields $\TV(\mathsf P_0, \mathsf P_1) \le 2\TV(\mathsf Q_0, \mathsf Q_1) = O(\sqrt{c_s}) \le 1/4$ for a sufficiently small $c_s$.
\end{proof}

Set $q := \lfloor c_s aD^{s-1}\rfloor$.
Since $a \ge \eps^{-2}$ and $D \ge 2$, choosing $\eps_0(s)$ sufficiently small after fixing $c_s$ ensures $q \ge 1$ and $q = \Theta(aD^{s-1})$.
Substituting $a = \Theta(\eps^{-2})$ and $aD^s = \Theta(k)$ from \Cref{lem:stars-and-edges-parameters} yields
\[
  a D^{s-1} = a \left(\frac{a D^s}{a}\right)^{(s-1)/s} = a^{1/s} (a D^s)^{1-1/s} = \Theta\left(\frac{k^{1-1/s}}{\eps^{2/s}}\right).
\]
By \Cref{lem:moment-stars-fixed-sample-tv}, this choice of $q$ ensures $\TV(\mathsf P_0, \mathsf P_1) \le 1/4$.
Applying \Cref{lem:estimation-to-testing} alongside the separation in \Cref{lem:stars-and-edges-parameters} establishes the worst case expected query lower bound $\Omega(q) = \Omega\left(k^{1-1/s}/\eps^{2/s}\right)$ for the regime $k\eps^2 \ge \kappa_s$.

\section{Analysis of Paired Centers}
\label{sec:moment-paired-detail}
In this section, we provide the complete proofs and detailed analyses for the paired centers lower bound construction presented in \Cref{sec:moment-paired}.

\subparagraph*{Construction of paired centers.}
Fix integers $s, k \ge 2$, and assume $k \eps^2 \le \kappa_s$ and $0 < \eps \le \eps_0(s)$, where $\kappa_s > 0$ is the constant from \Cref{sec:moment-stars-detail} and $\eps_0(s) > 0$ is a sufficiently small constant depending only on $s$.
Recall the parameter definitions from \Cref{sec:moment-paired}:
\[
  k' := 2 \left\lfloor k/2 \right\rfloor, \qquad
  D := \max\left\{2, \left\lceil (k'\eps^2)^{-1/(2s-1)} \right\rceil\right\}, \qquad
  N := \frac{k' D}{2}, \qquad
  \beta := \frac{\beta_s}{\sqrt{N}},
\]
where $\beta_s > 0$ is a sufficiently large constant.

We consider $k'$ center vertices partitioned into $k'/2$ disjoint pairs.
To each center pair $(u,v)$, we attach $D$ local gadgets, giving $N = k'D/2$ gadgets in total.
Each gadget introduces two auxiliary vertices $x,y$ and selects between two configurations:
\begin{description}
  \item[Shared] Insert edges $\{u,x\}$ and $\{v,x\}$, so $x$ has degree $2$ and $y$ is isolated.
  \item[Separate] Insert edges $\{u,x\}$ and $\{v,y\}$, so both auxiliary vertices have degree $1$.
\end{description}
In both configurations, one edge is incident to $u$ and one to $v$.
For $\beta \le 1/4$ (as verified in \Cref{lem:moment-paired-separation}), we define distributions $\mathsf H_0$ and $\mathsf H_1$ over graphs on the same fixed set of $k' + 2N$ labeled vertices: independently for each of the $N$ gadgets, $\mathsf H_0$ chooses \emph{shared} with probability $1/2-\beta$ and \emph{separate} otherwise, while $\mathsf H_1$ chooses \emph{shared} with probability $1/2+\beta$ and \emph{separate} otherwise.

\begin{lemma}
\label{lem:paired-centers-moments}
Every graph in either support satisfies:
\begin{enumerate}
  \item $\nu(G) = k' \le k$;
  \item $m(G) = k'D = 2N$, and every center vertex has degree $D$;
  \item $M_s(G) = k'D^s + 2N + (2^s-2)Z$, where $Z$ denotes the number of shared gadgets.
\end{enumerate}
Furthermore, $Z \sim \operatorname{Bin}(N, 1/2-\beta)$ under $\mathsf H_0$ and $Z \sim \operatorname{Bin}(N, 1/2+\beta)$ under $\mathsf H_1$, so $\E_{\mathsf H_1}[M_s] - \E_{\mathsf H_0}[M_s] = 2(2^s-2)\beta N$ and $\Var_{\mathsf H_i}[M_s] \le (2^s-2)^2 N/4$.
\end{lemma}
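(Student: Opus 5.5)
The statement is a direct structural count, so I will verify each item by examining one center pair $(u,v)$ with its $D$ gadgets, and then sum over the $k'/2$ pairs.

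\textbf{Matching number.} For the upper bound, I will check that the $k'$ centers form a vertex cover. Every gadget edge, in either configuration, is incident to $u$ or $v$, so every edge has an endpoint among the centers and $\nu(G)\le k'$. For the lower bound, I will build a matching of size $k'$ using two distinct gadgets $g_1,g_2$ at each pair; this is where $D\ge2$ is needed. I match $u$ to the auxiliary vertex $x$ of $g_1$, using the edge $\{u,x\}$, which is present in both configurations. If $g_2$ is separate, I match $v$ to $y$ of $g_2$; if $g_2$ is shared, I match $v$ to $x$ of $g_2$. Both choices use fresh auxiliary vertices, and distinct pairs use disjoint vertex sets. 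Hence $\nu(G)=k'$, and $k'\le k$ by definition.

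\textbf{Edge count and degrees.} Each gadget contributes exactly one edge at $u$ and one at $v$, and exactly two edges in total. Every center therefore has degree $D$, and $m=2N=k'D$.

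\textbf{Moment formula.} The auxiliary vertices of a shared gadget have degrees $2$ and $0$, contributing $2^s$. The auxiliary vertices of a separate gadget have degrees $1$ and $1$, contributing $2$. Adding the center contribution $k'D^s$ gives
\[
M_s=k'D^s+2(N-Z)+2^sZ=k'D^s+2N+(2^s-2)Z.
\]

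\textbf{Distribution of $Z$.} The gadgets are configured independently, so $Z$ is binomial with the stated parameters. Since $M_s$ is an affine function of $Z$ with slope $2^s-2$, the difference of means is
\[
(2^s-2)\bigl((1/2+\beta)-(1/2-\beta)\bigr)N=2(2^s-2)\beta N.
\]
The variance is $(2^s-2)^2Np(1-p)\le(2^s-2)^2N/4$, using $p(1-p)\le1/4$.

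\textbf{Main obstacle.} There is no real obstacle. The only point needing care is the matching lower bound, where shared gadgets identify two auxiliary endpoints; using two distinct gadgets per pair resolves this.
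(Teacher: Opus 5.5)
Your proposal is correct and follows essentially the same argument as the paper: the centers form a vertex cover giving $\nu(G)\le k'$, and $D\ge2$ lets you match $u$ and $v$ through two distinct gadgets of each pair. The degree, edge-count, moment and binomial computations are likewise identical. Your explicit case split on the configuration of the second gadget simply spells out what the paper leaves implicit.
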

\begin{proof}
Because every edge is incident to at least one of the $k'$ center vertices, the centers form a vertex cover of size $k'$, so $\nu(G) \le k'$.
Conversely, since $D \ge 2$, for each center pair $(u,v)$ we can match $u$ using an edge from its first gadget and $v$ using an edge from its second gadget.
Because auxiliary vertices across distinct gadgets are disjoint, this produces a matching of size $k'$, so $\nu(G) = k' \le k$.
Under either configuration, each gadget contributes one incident edge to each of $u$ and $v$.
Summing over the $D$ gadgets per center pair, every center vertex has degree $D$, and the total edge count is $m = k'D = 2N$.

For the degree moment, the $k'$ centers contribute $k'D^s$ to $M_s$.
In the separate configuration, auxiliary vertices $x$ and $y$ each have degree $1$, contributing $1^s + 1^s = 2$.
In the shared configuration, $x$ has degree $2$ and $y$ is isolated (degree $0$), contributing $2^s + 0^s = 2^s$.
With $Z$ denoting the number of shared gadgets, the total degree moment is $M_s = k'D^s + 2(N-Z) + 2^s Z = k'D^s + 2N + (2^s-2)Z$.
Because each gadget independently chooses shared with probability $1/2-\beta$ under $\mathsf H_0$ and $1/2+\beta$ under $\mathsf H_1$, $Z$ follows the stated binomial distributions.
The difference in expected moments is $(2^s-2)(\E_{\mathsf H_1}[Z] - \E_{\mathsf H_0}[Z]) = 2(2^s-2)\beta N$, and $\Var_{\mathsf H_i}[M_s] = (2^s-2)^2 \Var_{\mathsf H_i}[Z] \le (2^s-2)^2 N/4$.
\end{proof}
\begin{lemma}
\label{lem:moment-paired-separation}
Choosing $\beta_s$ sufficiently large and then $\eps_0(s)$ sufficiently small ensures that $\beta\le1/4$, that each $\mathsf H_i$ places probability at least $99/100$ on a set $I_i$ of moment values, and that $I_0$ and $I_1$ are $(1\pm\eps)$-separated.
\end{lemma}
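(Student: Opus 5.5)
The plan is to mirror the proof of \Cref{lem:switch-separation}, combining \Cref{lem:paired-centers-moments} with Chebyshev's inequality. First, $\beta\le1/4$ must hold. Since $N=k'D/2\ge k'\ge2$, this is not automatic for large $\beta_s$. However, $D\ge (k'\eps^2)^{-1/(2s-1)}$ gives
\[
N\ge \tfrac12 k'^{1-1/(2s-1)}\eps^{-2/(2s-1)}\ge \tfrac12\eps^{-2/(2s-1)}
\]
because $k'\ge2$. So once $\beta_s$ is fixed, choosing $\eps_0(s)$ small makes $N\ge16\beta_s^2$, which gives $\beta\le1/4$.

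Next, write $\mu_i:=\E_{\mathsf H_i}[M_s]$, so $\mu_1-\mu_0=2(2^s-2)\beta_s\sqrt N$. Let $I_i$ be the set of attainable moment values within $(\mu_1-\mu_0)/4$ of $\mu_i$. By Chebyshev and the variance bound $(2^s-2)^2N/4$,
\[
\Prob_{\mathsf H_i}[M_s\notin I_i]\le \frac{(2^s-2)^2N/4}{(\mu_1-\mu_0)^2/16}=\frac{1}{\beta_s^2}.
\]
This is at most $1/100$ for $\beta_s\ge10$.

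For separation, take $a\in I_0$ and $b\in I_1$. Then $b-a\ge(\mu_1-\mu_0)/2=(2^s-2)\beta_s\sqrt N$. It suffices that $\eps(a+b)<b-a$, which gives $(1+\eps)a<(1-\eps)b$. Every value satisfies $M_s\le k'D^s+2^sN$, so $a+b\le 2k'D^s+2^{s+1}N$.

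The main step is to show $\eps k'D^s=O_s(\sqrt N)$ in this regime. There are two cases.
\begin{itemize}
\item If $D=\lceil(k'\eps^2)^{-1/(2s-1)}\rceil\ge2$, then $D\le 2(k'\eps^2)^{-1/(2s-1)}$, so $D^{2s-1}\le 2^{2s-1}/(k'\eps^2)$. Hence $\eps^2k'^2D^{2s}\le 2^{2s-1}k'D=2^{2s}N$, that is, $\eps k'D^s\le 2^s\sqrt N$.
\item If $D=2$, then $(k'\eps^2)^{-1/(2s-1)}\le2$. In this case we use $k\eps^2\le\kappa_s$ instead: $\eps k'D^s=2^s\eps k'\le 2^s\sqrt{\kappa_s k'}=2^s\sqrt{\kappa_s N}$.
\end{itemize}
The remaining term satisfies $\eps2^{s+1}N\le 2^{s+1}\eps\sqrt N\cdot\sqrt N$. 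Here $\eps\sqrt N$ is small: $N=k'D/2$ with $D=O((k'\eps^2)^{-1/(2s-1)}+1)$, so $\eps^2N=O_s\bigl((k'\eps^2)^{1-1/(2s-1)}+k'\eps^2\bigr)=O_s(\kappa_s+\kappa_s^{1-1/(2s-1)})$.

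In either case $\eps(a+b)\le C_s\sqrt N$, where $C_s$ depends only on $s$ and $\kappa_s$. Choosing $\beta_s>C_s/(2^s-2)$, together with $\beta_s\ge10$, finishes the argument, followed by choosing $\eps_0(s)$ as described in the first paragraph. I expect the only delicate point to be this case split on whether the ceiling or the $\max$ with $2$ determines $D$, which is exactly where the regime assumption $k\eps^2\le\kappa_s$ enters.
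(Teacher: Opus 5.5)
Your proposal is correct and follows essentially the same route as the paper: Chebyshev on $Z$ for the concentration sets, the same case split on whether the ceiling or the $\max$ with $2$ determines $D$ to get $\eps k'D^s = O_s(\sqrt N)$ (using $k\eps^2\le\kappa_s$ when $D=2$), and $N\ge\frac12\eps^{-2/(2s-1)}$ to force $\beta\le1/4$ once $\beta_s$ is fixed. The only difference is cosmetic: the paper bounds the auxiliary contribution by $2^sN=2^{s-1}k'D\le k'D^s$ (since $D\ge2$), giving $M_s\le2k'D^s$ directly, so your separate bound $\eps\sqrt N=O_s(1)$ is valid but not needed.
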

\begin{proof}
Write $w_s := 2^s - 2$.
Since $D\ge2$, the auxiliary vertices contribute at most $2^sN=2^{s-1}k'D\le k'D^s$ to $M_s$.
Moreover, $\eps k'D^s/\sqrt N=\eps\sqrt{2k'}\,D^{s-1/2}$.
If $\lceil(k'\eps^2)^{-1/(2s-1)}\rceil\ge2$, then $D\le2(k'\eps^2)^{-1/(2s-1)}$, so this quantity is at most $2^s$.
Otherwise $D=2$ and $k'\eps^2\ge1$, so this quantity equals $2^s(k'\eps^2)^{1/2}\le2^s\kappa_s^{1/2}$, because $k'\eps^2\le k\eps^2\le\kappa_s$.
Hence
\begin{equation}
  k'D^s\le M_s\le2k'D^s,
  \qquad
  \frac{\eps k'D^s}{\sqrt N}\le C_s:=2^s\max\{1,\kappa_s^{1/2}\}.
  \label{eq:moment-paired-scale}
\end{equation}

Under either hypothesis, $\Var[Z] \le N/4$. Thus Chebyshev's inequality guarantees
\[
  \Prob[|Z-\E[Z]| > 5\sqrt N] \le \frac{\Var[Z]}{25N} \le \frac{1}{100}.
\]
Let $I_i$ be the set of attainable values of $M_s$ within $5 w_s \sqrt N$ of $\E_{\mathsf H_i}[M_s]$, so $\Prob_{\mathsf H_i}[M_s \in I_i] \ge 99/100$ and $I_i\subseteq[k'D^s,2k'D^s]$ by \eqref{eq:moment-paired-scale}.
The two means differ by $2w_s\beta N = 2w_s\beta_s\sqrt N$.
For $a \in I_0$ and $b \in I_1$, it follows that $b - a \ge 2w_s(\beta_s - 5)\sqrt N$.
To guarantee $(1+\eps)a < (1-\eps)b$, it suffices that $b - a > \eps(a+b)$.
Since \eqref{eq:moment-paired-scale} gives $\eps(a+b) \le 4\eps k'D^s \le 4C_s\sqrt N$, we choose the constant $\beta_s$ to satisfy $\beta_s > 5 + \frac{2C_s}{w_s}$.
With this choice of $\beta_s$, $b - a \ge 2w_s(\beta_s - 5)\sqrt N > 4C_s\sqrt N \ge 4\eps k'D^s \ge \eps(a+b)$, so $(1+\eps)a < (1-\eps)b$.
Finally, as $D\ge(k'\eps^2)^{-1/(2s-1)}$ and $k'\ge2$, we have $N=k'D/2\ge\frac12\eps^{-2/(2s-1)}$, so reducing $\eps_0(s)$ after fixing $\beta_s$ ensures $\beta=\beta_s/\sqrt N\le1/4$.
\end{proof}

\subparagraph*{Indistinguishability of paired centers.}
Recall that every gadget contributes two edges under either configuration.
A uniform random edge query is therefore equivalent to selecting one of the $N$ gadgets uniformly at random, and then choosing one of the two edges contributed by that gadget uniformly at random.
To prove indistinguishability, we consider an augmented oracle that selects a gadget uniformly at random independently of previous queries and of the assigned configurations, and reveals its identity and its configuration (shared or separate).
To simulate any endpoint access algorithm making at most $q$ queries, we can answer its $t$-th random edge query with one of the two edges of the gadget in the $t$-th response of the augmented oracle, chosen uniformly at random. 
Each degree query can be answered from the revealed configurations, as centers have degree $D$ and returned auxiliary vertices have degree $2$ or $1$ according to the configuration of their gadget.
Thus by the data processing inequality it suffices to bound the total variation distance between the distributions of these responses.

\begin{lemma}
\label{lem:moment-paired-query-tv}
Let $\mathsf P_i$ denote the joint distribution of the responses to $q$ queries to the augmented oracle under $\mathsf H_i$.
For $0<\beta\le1/4$, $\TV(\mathsf P_0,\mathsf P_1) = O(\beta\sqrt q)$.
In particular, for $\beta = \beta_s/\sqrt{N}$ (which satisfies $\beta\le1/4$ by \Cref{lem:moment-paired-separation}) and a sufficiently small constant $c_s > 0$, every integer $1 \le q \le c_s N$ satisfies $\TV(\mathsf P_0, \mathsf P_1) \le 1/4$.
\end{lemma}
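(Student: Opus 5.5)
We bound $\KL(\mathsf P_0\|\mathsf P_1)$ with the chain rule, revealing the $q$ augmented-oracle responses one at a time, and then apply Pinsker's inequality.

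\emph{Step 1: per-response divergence.} Fix $t<q$ and condition on the first $t$ responses. The identity of the gadget selected at step $t+1$ is uniform over the $N$ gadgets and independent of both the history and the configurations. So under both hypotheses it has the same conditional law and contributes zero divergence. Given the selected gadget, there are two cases.
\begin{itemize}
  \item If the gadget appeared earlier in the history, its configuration is already determined. The revealed configuration is then a deterministic function of the history under both hypotheses and contributes zero divergence.
  \item If the gadget is new, the configurations are independent across gadgets. Hence, conditional on the history (which reveals only other gadgets' configurations), its configuration is still $\operatorname{Ber}(1/2-\beta)$ under $\mathsf H_0$ and $\operatorname{Ber}(1/2+\beta)$ under $\mathsf H_1$.
\end{itemize}

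\emph{Step 2: the Bernoulli divergence.} In the new-gadget case the contribution is $\KL(\operatorname{Ber}(1/2-\beta)\|\operatorname{Ber}(1/2+\beta))$. Both terms of the sum contribute, giving
\[
  \KL\bigl(\operatorname{Ber}(1/2-\beta)\,\big\|\,\operatorname{Ber}(1/2+\beta)\bigr)
  = 2\beta\log\frac{1+2\beta}{1-2\beta}.
\]
Since $\beta\le1/4$, this is at most $2\beta\cdot\frac{4\beta}{1-2\beta}\le16\beta^2$, using $\log x\le x-1$.

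\emph{Step 3: summing and Pinsker.} By the chain rule, the total divergence is at most the expected number of distinct gadgets selected times $16\beta^2$. Since there are at most $q$ distinct gadgets, this gives $\KL(\mathsf P_0\|\mathsf P_1)\le16q\beta^2$. Pinsker's inequality then yields
\[
  \TV(\mathsf P_0,\mathsf P_1)\le\sqrt{8q\beta^2}=2\sqrt2\,\beta\sqrt q,
\]
which is the claimed $O(\beta\sqrt q)$ bound.

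\emph{Step 4: the specific choice.} Take $\beta=\beta_s/\sqrt N$, which satisfies $\beta\le1/4$ by \Cref{lem:moment-paired-separation}. For $q\le c_sN$, Step 3 gives $\TV\le2\sqrt2\,\beta_s\sqrt{c_s}$. Choosing $c_s\le1/(128\beta_s^2)$ makes this at most $1/4$.

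The only delicate point is Step 1: justifying that, conditional on the history, a new gadget's configuration keeps its prior Bernoulli law. This follows from independence of the gadget configurations and from the selection being independent of them. No other obstacle is expected.
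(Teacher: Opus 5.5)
Your proposal is correct and follows essentially the same argument as the paper: apply the chain rule, note that gadget selections and repeat visits contribute zero divergence, charge $O(\beta^2)$ to each of the at most $q$ distinct selected gadgets, and finish with Pinsker. The only difference is presentation. You reveal responses one at a time, whereas the paper conditions on the whole selection sequence first. You also make the constants explicit: the exact value $2\beta\log\frac{1+2\beta}{1-2\beta}\le16\beta^2$ and the choice $c_s\le1/(128\beta_s^2)$.
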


\begin{proof}
The responses are determined by the sequence of selected gadgets (which has the same distribution under both hypotheses) together with the configurations of the distinct selected gadgets.
Conditional on this sequence, the indicators of shared configurations at the distinct selected gadgets are independent Bernoulli variables with parameter $1/2-\beta$ under $\mathsf H_0$ and $1/2+\beta$ under $\mathsf H_1$.
For $\beta\le1/4$, the inequality $\log x\le x-1$ gives
\[
  \KL\left(\operatorname{Ber}(1/2-\beta)\middle\|\operatorname{Ber}(1/2+\beta)\right)
  \le\frac{4\beta^2}{(1/2+\beta)(1/2-\beta)}
  =O(\beta^2).
\]
By the chain rule, the sequence of selected gadgets contributes zero divergence and each of the at most $q$ distinct selected gadgets contributes the divergence above, so $\KL(\mathsf P_0\|\mathsf P_1)=O(q\beta^2)$.
Pinsker's inequality then yields $\TV(\mathsf P_0,\mathsf P_1)=O(\beta\sqrt q)$.
For $\beta = \beta_s/\sqrt{N}$ and $q \le c_s N$, this bound becomes $O(\beta_s\sqrt{c_s})$, which is at most $1/4$ for a sufficiently small constant $c_s > 0$.
\end{proof}
Set $q := \lfloor c_s N \rfloor$.
Since $N\ge\frac12\eps^{-2/(2s-1)}$ by the proof of \Cref{lem:moment-paired-separation}, choosing $\eps_0(s)$ sufficiently small after fixing $c_s$ ensures $q \ge 1$ and $q = \Theta(N)$.
Moreover, $(k'\eps^2)^{-1/(2s-1)}\le D\le2\max\{1,\kappa_s^{1/(2s-1)}\}(k'\eps^2)^{-1/(2s-1)}$, where the upper bound uses $k'\eps^2\le\kappa_s$ in the case $k'\eps^2\ge1$, in which $D=2$.
Substituting these bounds and $k/2\le k'\le k$ into $N = k'D/2$ yields
\[
  N = \Theta\left(k(k\eps^2)^{-1/(2s-1)}\right) = \Theta\left(\frac{k^{1 - 1/(2s-1)}}{\eps^{2/(2s-1)}}\right).
\]
By \Cref{lem:moment-paired-query-tv}, this choice of $q$ ensures $\TV(\mathsf P_0, \mathsf P_1) \le 1/4$.
Applying \Cref{lem:estimation-to-testing} alongside the separation in \Cref{lem:moment-paired-separation} establishes the worst case expected query lower bound $\Omega(q) = \Omega\left(k^{1 - 1/(2s-1)}/\eps^{2/(2s-1)}\right)$ for the regime $k\eps^2 \le \kappa_s$, completing the proof of \Cref{thm:intro-moment-lb}.

\section{Weighted Sampling, a Prior Lower Bound, and Instance Sensitivity}
\label{sec:proportional-sampling}
\Cref{alg:endpoint-capture,alg:moment-estimation} naturally generalize from graph degrees to arbitrary nonnegative weights under the weighted sampling model studied in~\cite{MotwaniPanigrahyXu2007, BerettaTetek2024,BhattacharyaPradhan2025}.
In this section, we establish this transfer and contrast our bounds with prior work.
We then revisit the lower bound stated by Bhattacharya and Pradhan~\cite{BhattacharyaPradhan2025}, show that it does not hold uniformly over the stated parameter range, and examine the construction used in its proof.
Finally, we present an instance sensitive analysis showing that our moment estimator automatically adapts to favorable instance structure, using substantially fewer samples than its worst case bound without any advice.

\subsection{Total Weight and Moment Estimation}
\label{sec:proportional-transfer}
\subparagraph*{The weighted sampling model.}
Let $A$ be a set of $n$ elements, where each element $a \in A$ has an unknown nonnegative weight $w(a)$ with $W := \sum_{a \in A} w(a) > 0$ denoting the total weight.
In the weighted sampling model of Beretta and T\v{e}tek~\cite{BerettaTetek2024} and Bhattacharya and Pradhan~\cite{BhattacharyaPradhan2025}, an algorithm accesses $A$ via an oracle that independently returns an element $a \in A$ with probability proportional to its weight,
$p(a) := w(a)/W$.
The oracle returns both a stable identifier of $a$ (allowing repeat observations to be recognized) and its weight $w(a)$.
As discussed in \Cref{sec:model}, endpoint access on graphs directly corresponds to this model: setting $A = V$, element weights as vertex degrees $w(v) = d_v$, and total weight $W = 2m$, one $\RE()$ query followed by a fair coin flip draws an endpoint with probability $p_v = d_v/(2m)$, and a subsequent $\Deg(v)$ query reveals its weight.
Because endpoint access on graphs uses at most two queries to draw a fair endpoint sample and reveal its degree, any algorithm designed for the weighted sampling model transfers to endpoint access with at most a factor of two query overhead.
Two estimation problems have been studied under this weighted sampling model:
\begin{enumerate}
  \item \textbf{Total weight estimation:} Beretta and T\v{e}tek~\cite{BerettaTetek2024} studied $(1\pm\eps)$-estimation of the total weight $W = \sum_{a \in A} w(a)$.
  When $n$ is known, their algorithm uses $O(\sqrt{n}/\eps)$ samples; without advice, it uses $O(\log n / \eps^2 + \sqrt{n}/\eps)$ samples in expectation.
  \item \textbf{Moment estimation:} Bhattacharya and Pradhan~\cite{BhattacharyaPradhan2025} investigated $(1\pm\eps)$-estimation of the $s$-th moment $M_s := \sum_{a \in A} w(a)^s$ for every real $s > 0$; we compare with their results for integers $s \ge 2$.
  When $n$ is known, their algorithm first estimates $W$ via~\cite{BerettaTetek2024} and then draws $O(n^{1-1/s}/\eps^2)$ additional weighted samples, using $O(n^{1-1/s}/\eps^2)$ samples in total.
\end{enumerate}
These two targets correspond to estimating the edge count $m$ and the degree moment $M_s$ in our graph setting.

\subparagraph*{Improvements in the weighted sampling model.}
While \Cref{alg:endpoint-capture,alg:moment-estimation} are designed for graph parameter estimation, they transfer to the weighted sampling model: each fair endpoint sample is replaced by a weighted sample, and returned weights replace degree queries.
Since \Cref{lem:capture-return-law} does not use the graph structure, the accuracy and ratio moment bounds of \Cref{thm:endpoint-capture} carry over unchanged.
Applying the Cauchy--Schwarz argument of \Cref{lem:captured-endpoint-mass} with the set of positive weight elements in place of $C^\star$ yields $\E[p_S] \ge q/(n+q)$, so the expected number of samples is $O(1/\eps^2 + \sqrt{n}/\eps)$.
For moment estimation, the analysis in \Cref{sec:moment-detail} transfers to arbitrary nonnegative weights: since H\"older's inequality ensures $\Lambda_s \le n^{1-1/s}$, \Cref{alg:moment-estimation} achieves the full expected sample complexity of \Cref{thm:intro-moment} with $n$ in place of $\nu$.
In the comparisons that follow, we focus on the simpler consequence $O(n^{1-1/s}/\eps^{2/s})$, which already provides a polynomial improvement in the dependence on $\eps$. 
To summarize, \Cref{alg:endpoint-capture,alg:moment-estimation} give the following results:
\begin{itemize}
  \item For total weight estimation, \Cref{alg:endpoint-capture} is advice-free, and doubling its output gives a $(1\pm\eps)$-estimate of $W$ with probability at least $63/64$ using $O(1/\eps^2 + \sqrt{n}/\eps)$ expected samples.
  This eliminates the $\log n$ factor from the bound $O(\log n / \eps^2 + \sqrt{n}/\eps)$ of Beretta and T\v{e}tek~\cite{BerettaTetek2024} without advice.
  \item For moment estimation, \Cref{alg:moment-estimation} is advice-free and returns a $(1\pm\eps)$-estimate of $M_s$ with probability at least $15/16$ using $O(n^{1-1/s}/\eps^{2/s})$ expected samples for every fixed integer $s \ge 2$.
  Compared with the $O(n^{1-1/s}/\eps^2)$ bound of Bhattacharya and Pradhan~\cite{BhattacharyaPradhan2025}, our algorithm removes the need for $n$ as advice and polynomially improves the dependence on $\eps$, replacing $\eps^{-2}$ with $\eps^{-2/s}$.
  Furthermore, we discuss the instance sensitive behavior of \Cref{alg:moment-estimation} in \Cref{sec:moment-instance-sensitive}.
\end{itemize}

\subsection{Comparison with a Prior Lower Bound}
\label{sec:bp25-lower-bound}
Bhattacharya and Pradhan~\cite{BhattacharyaPradhan2025} state that for every $s > 1$ and $\eps \in (0,1)$, obtaining a $(1\pm\eps)$-estimate of the moment $M_s = \sum_{a \in A} w(a)^s$ of an $n$-element weighted set at constant failure probability requires $\Omega(n^{1-1/s}/\eps^{2})$ weighted samples.
Since \Cref{alg:moment-estimation} achieves a polynomially better dependence on $\eps$ (\Cref{sec:proportional-transfer}) for every fixed integer $s \ge 2$, we revisit this lower bound statement.
This comparison does not depend on whether samples are counted in expectation or in the worst case: when $n$ is known, Markov's inequality establishes that stopping \Cref{alg:moment-estimation} after $16$ times its expected sample bound lowers its success probability by at most $1/16$.
We first give an example with $s = 2$ and accuracy parameter $\eps = 1/n$ showing that the lower bound does not hold uniformly over the stated parameter range, and then discuss the corresponding construction in~\cite{BhattacharyaPradhan2025}.
This discussion does not affect their algorithm, which uses $O(n^{1-1/s}/\eps^2)$ weighted samples when $n$ is known.

\subparagraph*{Example with small accuracy parameter.}
To illustrate a setting within the stated parameter range where this lower bound does not hold, consider second moment estimation ($s = 2$) with known $n \ge 2$ and accuracy parameter $\eps = 1/n$.
For this setting, the stated lower bound evaluates to $\Omega(n^{5/2})$ weighted samples at constant failure probability.
We exhibit an estimator that uses only $O(n^{3/2}\log n) = o(n^{5/2})$ samples and succeeds with probability at least $2/3$ on every weighted set $A$ of $n$ elements.

The estimator draws $q := \lceil n^{3/2}\log(3n)\rceil$ independent weighted samples and returns $\widehat M_2 := \sum_{a \in S} w(a)^2$, where $S \subseteq A$ is the set of distinct observed elements. 
Call an element $a$ heavy if $w(a)^2 \ge M_2/n^2$.
To obtain a $(1\pm\eps)$-estimate, it suffices to observe every heavy element, as the remaining elements contribute less than $n(M_2/n^2) = M_2/n = \eps M_2$ in total. 
By Cauchy--Schwarz, $W \le \sqrt{n M_2}$, so every heavy element has sampling probability
\[
  p(a) = \frac{w(a)}{W} \ge \frac{\sqrt{M_2}/n}{\sqrt{n M_2}} = n^{-3/2}.
\] 
Therefore, the probability that a given heavy element remains unobserved after $q$ samples is at most $(1 - p(a))^q \le e^{-q n^{-3/2}} \le 1/(3n)$.
By a union bound over at most $n$ heavy elements, every heavy element is observed with probability at least $2/3$.
On this event, the unobserved contribution satisfies
\[
  0 \le M_2 - \widehat M_2 = \sum_{a \in A \setminus S} w(a)^2 < \eps M_2,
\]
so $\widehat M_2$ is a $(1\pm\eps)$-estimate of $M_2$.
More generally, applying H\"older's inequality in place of Cauchy--Schwarz gives an estimator using $O(n^{1+1/s}\log n) = o(n^{3-1/s})$ samples for any fixed integer $s \ge 2$ with $\eps = 1/n$, whereas the stated lower bound evaluates to $\Omega(n^{3-1/s})$.

\subparagraph*{Instance construction and integer cardinalities.}
The lower bound proof in~\cite{BhattacharyaPradhan2025} uses a pair of weighted sets designed to be difficult to distinguish with weighted samples.
The construction defines two sets $A_1$ and $A_2$ of $n_1 + n_2$ elements, in each of which $n_1$ elements have weight $d_1 := n^{1-1/s}\eps^{1/(s-1)}$.
In $A_1$, the remaining $n_2$ elements have weight zero; in $A_2$, they have weight $d_2 := n$, where
\[
  n_2 = \frac{n\eps^{(2s-1)/(s-1)}}{n + \eps^{(2s-1)/(s-1)}},
  \qquad
  n_1 = n - n_2.
\]
For this construction to represent valid finite sets, $n_1$ and $n_2$ must be nonnegative integers.
However, for every positive integer $n$, $s > 1$, and $\eps \in (0, 1)$, the exponent satisfies $(2s-1)/(s-1) > 0$, so $0 < n_2 < \eps^{(2s-1)/(s-1)} < 1$.
Because $n_2$ lies strictly between $0$ and $1$, this parameter specification does not yield integer cardinalities for any $\eps \in (0,1)$, so the two instances cannot be realized as finite weighted sets, including in the setting of the example above.

\subparagraph*{Insufficient moment separation.}
A reduction from estimation to testing requires the moments of the two instances to be $(1\pm\eps)$-separated, that is, no value may be a $(1\pm\eps)$-estimate of both.
In~\cite{BhattacharyaPradhan2025}, the two instances are chosen so that $M_s(A_2) = (1+\eps) M_s(A_1)$.
However, a ratio of $1+\eps$ does not give $(1\pm\eps)$-separation, since
\[
  (1-\eps) M_s(A_2)
  = (1-\eps^2) M_s(A_1)
  \le M_s(A_1)
  \le (1+\eps) M_s(A_2).
\]
Thus $M_s(A_1)$ is a $(1\pm\eps)$-estimate of both $M_s(A_1)$ and $M_s(A_2)$, so a correct estimator need not distinguish $A_1$ from $A_2$.
The two moments are $(1\pm\eps)$-separated when $M_s(A_2)/M_s(A_1) > (1+\eps)/(1-\eps)$.
Running the construction with $3\eps$ in place of $\eps$ gives the ratio $1+3\eps$, which exceeds $(1+\eps)/(1-\eps)$ when $\eps < 1/3$, but then $n_2$ again lies strictly between $0$ and $1$.

\subsection{Instance Sensitive Analysis}
\label{sec:moment-instance-sensitive}
Fix an integer $s \ge 2$ and recall $W = \sum_{v \in V} d_v > 0$ and $M_s = \sum_{v \in V} (d_v)^s$.
In addition to their worst case bound in terms of $n$, Bhattacharya and Pradhan~\cite{BhattacharyaPradhan2025} state an instance sensitive sample bound of $O(\sqrt{n}/\eps + \rho / \eps^2)$ parameterized by their moment-density parameter
\begin{equation}
  \rho
  := \max_{L \subseteq V:\, \sum_{v \in L} d_v > 0}
  \frac{\sum_{v \in L} (d_v)^s / M_s}{\sum_{v \in L} d_v / W}
  = \frac{W(\max_v d_v)^{s-1}}{M_s}.
  \label{eq:bp25-degree-rho}
\end{equation}
The identity in~\eqref{eq:bp25-degree-rho} holds because the subset ratio is $W/M_s$ times a degree weighted average of $(d_v)^{s-1}$, which is maximized by a singleton vertex of maximum degree.
Thus $\rho$ measures the maximum ratio between a vertex's share of the moment and its sampling probability.

The published proof bounds the normalized second moment $W\sum_v (d_v)^{2s-1}/M_s^2$ by $\rho$, and the worst case analysis also bounds this ratio by $W/M_s^{1/s}$ before relaxing it to $n^{1-1/s}$.
However, the algorithm proposed in~\cite{BhattacharyaPradhan2025} fixes $\Theta(n^{1-1/s}/\eps^2)$ samples on every input, and the paper does not specify how to choose a count of order $\rho/\eps^2$ without knowing $\rho$ or an upper bound on it.
Because this stated bound is not achieved by the proposed algorithm, we do not compare against it as an algorithmic baseline.

By contrast, while the formal guarantee of \Cref{thm:intro-moment} is stated in terms of the matching number $\nu$ to provide a rigorous worst case bound, \Cref{alg:moment-estimation} adapts its expected sample size without receiving $\Lambda_s$, $\nu$, or $n$ as input.
Specifically, \Cref{lem:moment-search} bounds the expected search cost by $O(\Lambda_s)$ and establishes moment bounds on the selected $A_J$ used to determine the final expected sample size $q$.
Substituting these bounds into the definition of $q$ in \Cref{alg:moment-estimation} yields
\begin{equation}
  \E[q] = O\!\left(
    \min\left\{
      \frac{\Lambda_s}{\eps^{2/s}},
      \frac{\Lambda_s^{2s/(2s-1)}}{\eps^{2/(2s-1)}}
    \right\}
  \right).
  \label{eq:moment-instance-sample-size}
\end{equation}
The following family demonstrates that this bound can be substantially smaller than the worst case bound in terms of $\nu$.

\subparagraph*{A star and a matching.}
For an integer $k \ge 2$, let $G_k$ be the disjoint union of a star with $k^2$ leaves and a matching of $k$ edges.
Then,
\[
  n = k^2 + 2k + 1, \qquad
  \nu(G_k) = k + 1, \qquad
  W = 2(k^2 + k), \qquad
  M_s = k^{2s} + k^2 + 2k.
\]
Consequently, for every fixed integer $s \ge 2$,
\[
  1 \le \Lambda_s
  = \frac{2(k^2 + k)}{(k^{2s} + k^2 + 2k)^{1/s}}
  \le 4.
\]
The star center contributes almost all of the moment and carries endpoint mass $p_c = k^2 / [2(k^2 + k)] \ge 1/4$.
This mass concentration also enables \Cref{alg:endpoint-capture} to estimate $W$ quickly.

To see this, consider a call to \Cref{alg:endpoint-capture} with accuracy $0 < \eta \le 1$ and threshold $R = \lceil 64\eta^{-2} \rceil$.
Fix the history at the start of any reached iteration with batch size $b \ge 1$.
Recall that $H$ counts the hits in the first batch's observed set $S$ among the second batch.
Because the first batch draws $b$ fair endpoint samples, the center vertex is included in $S$ with probability $1 - (1 - p_c)^b \ge p_c$.
Retaining only the center's contribution to $p_S$ yields $\E[p_S] \ge p_c^2 \ge 1/16$, so $\E[H] = b \E[p_S] \ge b p_c^2 \ge b/16$.
The variance bound $\Var[H] \le 2\E[H]$ from \Cref{lem:captured-endpoint-mass} remains valid.
Hence every reached iteration with $b \ge 128R$ has $\E[H] \ge 8R$ and, by Chebyshev's inequality as in \Cref{lem:capture-hit-threshold}, continues with probability at most $1/4$.
Because \Cref{lem:capture-fixed-scale-cost} bounds the expected cost of each reached iteration by $4b$, summing over the iterations gives an expected query complexity of $O(R) = O(\eta^{-2})$ for the entire call, uniformly in $k$.

The algorithm requires no knowledge of the star center or the graph structure for this bound to hold.
The full query complexity of \Cref{alg:moment-estimation} on $G_k$ is therefore accounted for as follows:
\begin{itemize}
  \item The initial estimation of $W$ with accuracy $1/4$ costs $O(1)$ expected queries as above.
  \item The parameter search costs $O(\Lambda_s) = O(1)$ expected queries because $\Lambda_s \le 4$.
  \item The final expected sample size satisfies $\E[q] = O(\eps^{-2/(2s-1)})$ by~\eqref{eq:moment-instance-sample-size}.
  \item The final estimation of $W$ with accuracy $\eta(q) = \kappa \max\{\eps, q^{-1/2}\}$ uses $O(\E[\eta(q)^{-2}]) = O(\E[\min\{\eps^{-2}, q\}]) = O(\E[q])$ queries in expectation.
  \item The final Horvitz--Thompson moment estimation requires at most $2\E[q]$ queries in expectation.
\end{itemize}
Summing these contributions, the total expected query complexity of \Cref{alg:moment-estimation} on $G_k$ is $O(\eps^{-2/(2s-1)})$, independent of $k$.

By contrast, for fixed $s \ge 2$ and $\eps \in (0, 1)$, the bound $O(n^{1-1/s}/\eps^{2/s})$ evaluates to $\Theta(k^{2-2/s})$ on this family.
The bound of \Cref{thm:intro-moment} in terms of $\nu(G_k)$ evaluates to $\Theta(k^{1-1/s})$, whereas the expected query complexity of \Cref{alg:moment-estimation} on $G_k$ does not grow with $k$.
Thus, although the matching number gives a tighter worst case guarantee than the vertex count, \Cref{alg:moment-estimation} exploits further degree concentration and uses substantially fewer queries than this guarantee on such instances without any advice.
Identifying further structural parameters that tightly characterize this instance sensitive complexity remains an interesting direction for future work.

\end{document}